\DeclareFontFamily{U}{mathx}{\hyphenchar\font45}
\DeclareFontShape{U}{mathx}{m}{n}{<-> mathx10}{}
\DeclareSymbolFont{mathx}{U}{mathx}{m}{n}
\declaretheorem[name=Theorem]{theorem}
\declaretheorem[name=Proposition]{proposition}
\declaretheorem[name=Lemma]{lemma}
\declaretheorem[name=Corollary]{corollary}
\theoremstyle{plain}
\theoremstyle{remark}
\newtheorem{remark}{Remark}
\newcommand{\PP}{\mathbb P}
\newcommand{\RR}{\mathbb R}
\newcommand{\EE}{\mathbb E}
\newcommand{\NN}{\mathbb N}
\newcommand{\ZZ}{\mathbb Z}
\newcommand{\Acal}{\mathcal A}
\newcommand{\Fcal}{\mathcal F}
\newcommand{\Hcal}{\mathcal H}
\newcommand{\Jcal}{\mathcal J}
\newcommand{\Ocal}{\mathcal O}
\newcommand{\Pcal}{\mathcal P}
\newcommand{\Xcal}{\mathcal X}
\newcommand{\1}{\mathds{1}}
\newcommand{\seq}[4]{(#1)_{{#2}={#3}}^{#4}}
\newcommand{\ci}{CI}
\newcommand{\cs}{CS}
\newcommand{\dd}{\ \mathrm{d}}
\DeclareMathOperator{\logit}{logit}
\newcommand{\indep}{\perp \!\!\! \perp}
\newcommand{\history}{\mathcal{H}}
\newcommand{\algcomment}[1]{\textcolor{darkgray}{\textit{// #1}}}
\newcommand{\boundary}{\mathfrak B}
\newcommand{\DR}{\mathrm{DR}}
\newcommand{\IW}{\mathrm{IW}}
\newcommand{\DRL}{{(\mathrm{DR}\text{-}\ell)}}
\newcommand{\DRU}{{(\mathrm{DR}\text{-}u)}}
\newcommand{\IWL}{{(\mathrm{IW}\text{-}\ell)}}
\newcommand{\IWU}{{(\mathrm{IW}\text{-}u)}}
\newcommand{\LIL}{\mathrm{LIL}}
\newcommand{\EB}{{\mathrm{EB}}}
\newcommand{\wmax}{w_\mathrm{max}}
\newcommand{\infseqt}[1]{\seq{#1}{t}{1}{\infty}}
\newcommand{\PrPl}{{\mathrm{PrPl}}}
\newcommand{\onefone}{{_1F_1}}
\newcommand{\polval}{\nu}
\newcommand{\candpolval}{{\polval}'}
\newcommand{\polvalt}{\widetilde{ \polval}_t} 
\newcommand{\candpolvalt}{{\polvalt}'}
\newcommand{\poldiff}{\delta}
\newcommand{\poldifft}{\Delta_t}
\newcommand{\candpoldifft}{\Delta_t'}
\newcommand{\IA}{IA}
\DeclareMathOperator*{\essinf}{\mathrm{ess\ inf}}
\DeclareMathOperator*{\esssup}{\mathrm{ess\ sup}}
\title{Anytime-valid off-policy inference for contextual bandits}
\author{
Ian Waudby-Smith$^1$, Lili Wu$^2$, Aaditya Ramdas$^1$, Nikos Karampatziakis$^2$, and Paul Mineiro$^2$\vspace{0.05in}\\
  $^1$Carnegie Mellon University\\
  $^2$Microsoft \vspace{0.05in}\\
  \texttt{ianws@cmu.edu}, \texttt{liliwu@microsoft.com}, \texttt{aramdas@cmu.edu},\\ \texttt{nikosk@microsoft.com}, \texttt{pmineiro@microsoft.com}
}
\date{}
\begin{document}
\maketitle
\setcounter{tocdepth}{2}
\makeatletter
\renewcommand\tableofcontents{%
  \@starttoc{toc}%
}

\makeatother

\begin{abstract}
Contextual bandit algorithms are ubiquitous tools for active sequential experimentation in healthcare and the tech industry. They involve online learning algorithms that adaptively learn policies over time to map observed contexts $X_t$ to actions $A_t$ in an attempt to maximize stochastic rewards $R_t$. This adaptivity raises interesting but hard statistical inference questions, especially counterfactual ones: for example, it is often of interest to estimate the properties of a hypothetical policy that is different from the logging policy that was used to collect the data --- a problem known as ``off-policy evaluation'' (OPE). 
Using modern martingale techniques, we present a comprehensive framework for OPE inference that relax unnecessary conditions made in some past works (such as performing inference at prespecified sample sizes, uniformly bounded importance weights, constant logging policies, constant policy values, among others), significantly improving on them both theoretically and empirically. Importantly, our methods can be employed while the original experiment is still running (that is, not necessarily post-hoc), when the logging policy may be itself changing (due to learning), and even if the context distributions are a highly dependent time-series (such as if they are drifting over time).
More concretely, we derive confidence sequences for various functionals of interest in OPE. These include doubly robust ones for time-varying off-policy mean reward values, but also confidence bands for the entire cumulative distribution function of the off-policy reward distribution. All of our methods (a) are valid at arbitrary stopping times (b) only make nonparametric assumptions, (c) do not require importance weights to be uniformly bounded and if they are, we do not need to know these bounds, and (d) adapt to the empirical variance of our estimators. In summary, our methods enable anytime-valid off-policy inference using adaptively collected contextual bandit data.
\end{abstract}

\tableofcontents

\section{Introduction}\label{section:introduction}

The so-called ``contextual bandit'' problem is an abstraction that can be used to describe several problem setups in statistics and machine learning \citep{langford2007epoch,li2010contextual}. For example, it generalizes the multi-armed bandit problem by allowing for ``contextual'' side information, and it can be used to describe many adaptive sequential experiments. The general contextual bandit problem can be described informally as follows: an agent (such as a medical scientist in a clinical trial) views contextual information $X_t \in \Xcal$ for subject $t$ (such as the clinical patient's demographics, medical history, etc.), takes an action $A_t \in \Acal$ (such as whether to administer a placebo, a low dose, or a high dose), and observes some reward $R_t$ (such as whether their adverse symptoms have subsided). This description is made formal in the protocol for the generation of contextual bandit data in \cref{algorithm:cb}. In the present paper, no restrictions are placed on the dimensionality or structure of the context and action spaces $\Xcal$ and $\Acal$ beyond them being measurable, but it is often helpful to think about $\Xcal$ as a $d$-dimensional Euclidean space, and $\Acal$ as $\{0, 1\}$ for binary treatments, or $\RR$ for different dosages, and so on. Indeed, while high-dimensional settings often pose certain challenges in contextual bandits (such as computational ones, or inflated variances), none of these issues will affect the \emph{validity} of our statistical inference methods. Throughout, we will require that the rewards are real-valued and bounded in $[0, 1]$ --- a common assumption in contextual bandits \citep{thomas2015high,karampatziakis2021off} --- except for \cref{section:cdf} where we relax the boundedness constraint.

There are two main objectives that one can study in the contextual bandit setup: (1) policy optimization, and (2) off-policy evaluation (OPE) \citep{li2010contextual,langford2007epoch,dudik2011doubly,dudik2014doubly}. Here, a ``policy'' $\pi(a \mid x)$ is simply a conditional distribution over actions, such as the probability that patient $t$ should receive various treatments given their context $X_t$. Policy \emph{optimization} is concerned with finding policies that achieve high cumulative rewards (typically measured through regret), while off-policy \emph{evaluation} is concerned with asking the counterfactual question: ``how would we have done if we used some policy $\pi$ instead of the policy that is currently collecting data?''. In this paper, we study the latter with a particular focus on statistical inference in adaptive, sequential environments under nonparametric assumptions.


\subsection{Off-policy inference, confidence intervals, and confidence sequences}
By far the most common parameter of interest in the OPE problem is the expected reward $\polval := \EE_{\pi}(R)$ that would result from taking an action from the policy $\pi$. This expectation $\polval$ is called the ``value'' of the policy $\pi$. While several estimators for $\polval$ have been derived and refined over the years, many practical problems call for more than just a point estimate: we may also wish to quantify the uncertainty surrounding our estimates via statistical inference tools such as confidence intervals (CI). However, a major drawback of \ci{}s is the fact that they are only valid at \emph{fixed and prespecified} sample sizes, while contextual bandit data are collected in a sequential and adaptive fashion over time.

We lay out the assumed protocol for the generation of contextual bandit data in~\cref{algorithm:cb}, and in particular, all of our results will assume access to the output of this algorithm, namely the (potentially infinite) sequence of tuples $(X_t, A_t, R_t)_{t=1}^T$ for $T \in \NN \cup \{ \infty\}$.
As is standard in OPE, we will always assume that the policy $\pi$ is (almost surely) absolutely continuous with respect to $h_t$ so that $\pi / h_t$ is almost surely finite (without which, estimation and inference are not possible in general). Indeed, this permits many bandit techniques and in principle allows for Thompson sampling since it always assigns positive probability to an action (note that it may not always easy to compute the probability of taking that action via Thompson sampling, but if those probabilities can be computed, they can be used directly within our framework). However, $\infseqt{h_t}$ cannot be the result of Upper Confidence Bound (UCB)-style algorithms since they take conditionally deterministic actions given the past, violating the absolute continuity of $\pi$ with respect to $h_t$.

In \cref{algorithm:cb}, the term ``exogenously time-varying'' simply means that the context and reward distributions at time $t$ can only depend on the past through $X_1^{t-1}  \equiv (X_1, \dots, X_{t-1})$, and not on the actions taken (or rewards received). Formally, we allow for any joint distribution over $\infseqt{X_t, A_t, R_t}$ as long as 
\begin{equation}\label{eq:conditional-independence}
    p_{R_t}(r \mid x, a, \history_{t-1}) = p_{R_t}(r \mid x, a, X_1^{t-1})~~~\text{and}~~~p_{X_t}(x \mid \history_{t-1}) = p_{X_t}(x \mid X_1^{t-1}),
\end{equation}
where $\history_t$ is all of the history $\sigma\left ((X_i, A_i, R_i)_{i=1}^t \right )$ up until time $t$. This conditional independence requirement \eqref{eq:conditional-independence} includes as a special case more classical setups where $X_t$ is independent of all else given $A_t$, such as those considered in \citet{bibaut2021post} or iid scenarios \citep{karampatziakis2021off}, but is strictly more general, since, for example, $\infseqt{X_t}$ can be a highly dependent time-series. However, we do not go as far as to consider the adversarial setting that is sometimes studied in the context of regret minimization. We impose this conditional independence requirement since otherwise, the interpretation of $\EE_{\pi}(R_t \mid \history_{t-1})$ changes depending on which sequence of actions were played by the logging policy. Making matters more concrete, the conditional off-policy value $\EE_\pi(R_t \mid \history_{t-1})$ at time $t$ is given by
\begin{align}
    \polval_t := \EE_{\pi} (R_t \mid \history_{t-1}) &\equiv \int_{\Xcal \times \Acal \times \RR} r\cdot p_{R_t}(r \mid a, x, \history_{t-1}) \pi(a \mid x) p_{X_t}(x \mid \history_{t-1}) \dd x \dd a \dd r \label{eq:before-pseudo-identification}\\
    &=\int_{\Xcal \times \Acal \times \RR} r\cdot p_{R_t}(r \mid a, x, X_1^{t-1}) \pi(a \mid x) p_{X_t}(x \mid X_1^{t-1}) \dd x \dd a \dd r,\label{eq:pseudo-identification}
\end{align}
and the equality \eqref{eq:pseudo-identification} follows from \eqref{eq:conditional-independence}. Notice that \eqref{eq:before-pseudo-identification} could in principle depend on the logging policies and actions played, but \eqref{eq:pseudo-identification} does not. Despite imposing the conditional independence assumption \eqref{eq:conditional-independence}, the integral \eqref{eq:before-pseudo-identification} is still a perfectly well-defined functional, and if \eqref{eq:conditional-independence} is not satisfied, then our \cs{}s will still cover a quantity in terms of this functional. However, its interpretation would no longer be counterfactual with respect to the entire sequence of actions (only conditional on the past).

While most prior work on OPE in contextual bandits is not written \emph{causally} in terms of potential outcomes (e.g.~\citep{thomas2015high,karampatziakis2021off,bibaut2021post,chandak2021universal,huang2021off}), it is nevertheless possible to write down a causal target $\nu^\star_t$ (i.e.~a functional of the potential outcome distribution) and show that it is equal to $\polval_t$ under certain causal identification assumptions. These assumptions resemble the familiar \emph{consistency}, \emph{exchangeability}, and \emph{positivity} conditions that are ubiquitous in the treatment effect estimation literature. Moreover, there is a close relationship between OPE and the estimation of so-called \emph{stochastic interventions} in causal inference; indeed, they can essentially be seen as equivalent but with slightly different emphases and setups. However, given that neither the potential outcomes view nor the stochastic intervention interpretation of OPE are typically emphasized in the contextual bandit literature (with the exception of \citet{zhan2021off}, who use potential outcomes throughout), we leave this discussion to~\cref{section:causal}. 

\begin{algorithm}[!htbp]
  \caption{Protocol for the generation of contextual bandit data}
  \label{algorithm:cb}
  \begin{algorithmic}
    \STATE{\algcomment{Here, $T \in \NN \cup \{\infty\}$.} }
    \FOR{$t = 1,2, \dots, T$}
    \STATE{\algcomment{The agent selects a policy $h_t$ based on the history $\history_{t-1} \equiv \sigma \left ((X_i, A_i, R_i)_{i=1}^{t-1} \right)$.}}
    \STATE{$h_t \in \history_{t-1}$.}\;
    \STATE{}\;
    \STATE{\algcomment{The environment draws a context from an (exogenously time-varying) distribution.}}\;
    \STATE{$X_t \sim p_{X_t}(\cdot)$}\;
    \STATE{}\;
    \STATE{\algcomment{The agent plays a random action drawn from the selected policy.}}\;
    \STATE{$A_t \sim h_t(\cdot \mid X_t)$.}\;
    \STATE{}\;
    \STATE{\algcomment{The environment draws a reward from an (exogenously time-varying) distribution based on the action and context.}}\;
    \STATE{$R_t \sim p_{R_t}(\cdot \mid A_t, X_t)$.}\;
    \ENDFOR{}

    \STATE{} \;
    \STATE{\algcomment{Return a (potentially infinite) sequence of contextual bandit data.}}
    \RETURN{$(X_t, A_t, R_t)_{t=1}^T$}
  \end{algorithmic}
\end{algorithm}


To illustrate the shortcomings of \ci{}s for OPE, suppose we run a contextual bandit algorithm and want to see whether $\pi$ is better than the current state-of-the-art policy $h$ --- e.g.~whether $\EE_{\pi}(R) > \EE_{h}(R)$. (Here, we are implicitly assuming that $\EE_{\pi'}(R) = \EE_{\pi'}(R_t \mid \history_{t-1})$ for any policy $\pi'$ for the sake of illustration.) Suppose we compute a \ci{} for the value of $\pi$ based on $n$ samples (for some prespecified $n$), and while $\pi$ seems promising, the \ci{} turns out to be inconclusive (the \ci{} for $\EE_{\pi}(R)$ includes $\EE_{h}(R)$ if the latter is known, or the two \ci{}s overlap if the latter is unknown). It is tempting to  collect more data, for a total of $n'$ points, to see if the result is now conclusive; however the resulting sample size $n'$ is now a \emph{data-dependent} quantity, rendering the \ci{} invalid. (This could happen more than once.)

Fortunately, there exist statistical tools that permit adaptive stopping in these types of sequential data collection problems: \emph{confidence sequences} (CSs~\cite{darling1967confidence,lai1976confidence}). A \cs{} is a sequence of confidence intervals, valid at all sample sizes uniformly (and hence at arbitrary stopping times). Importantly for the aforementioned OPE setup, \cs{}s allow practitioners to collect additional data and continuously monitor it, so that the resulting \ci{} is indeed valid at the data-dependent stopped sample size $n'$. More formally, we say that a sequence of intervals $[L_t, U_t]_{t=1}^\infty$ is a \cs{} for the parameter $\theta \in \RR$ if
\begin{equation}\label{eq:cs}
  \PP \left ( \forall t \in \NN,\ \theta \in [L_t, U_t] \right ) \geq 1-\alpha,
  ~~\text{or equivalently,}~~~ \PP \left ( \exists t \in \NN : \theta \notin [L_t, U_t] \right ) \leq \alpha.
\end{equation}
Contrast~\eqref{eq:cs} above with the definition of a \ci{} which states that $\forall n \in \NN,\ \PP(\theta \in [L_n, U_n]) \geq 1-\alpha$, so that the ``$\forall n$'' is outside the probability $\PP(\cdot)$ rather than inside. A powerful consequence of~\eqref{eq:cs} is that $[L_\tau, U_\tau]$ is a valid \ci{} for \emph{any stopping time} $\tau$. In fact, $[L_\tau, U_\tau]$ being a valid \ci{} is not just an implication of \eqref{eq:cs} but the two statements are actually equivalent; see \citet[Lemma 3]{howard2018uniform}. 

The consequence for the OPE practitioner is that they can continuously update and monitor a \cs{} for the value of $\pi$ \emph{while} the contextual bandit algorithm is running, and deploy $\pi$ as soon as they are confident that it is better than the current state-of-the-art $h$. \citet{karampatziakis2021off} refer to this adaptive policy switching as ``gated deployment'', and we will return to this motivating example through the paper. Let us now lay out five desiderata that we want all of our \cs{}s to satisfy.

\subsection{Desiderata for anytime-valid off-policy inference}\label{section:desiderata}
Throughout this paper, we will derive methods for off-policy evaluation and inference in a variety of settings --- including fixed policies (\cref{section:warmup}), time-varying policies (\cref{section:time-varying}), and for entire cumulative distribution functions (\cref{section:cdf}). However, what all of these approaches will have in common is that they will satisfy five desirable properties which we enumerate here.

\begin{enumerate}
\item \textbf{Nonasymptotic:} Our confidence sets will satisfy \emph{exact} coverage guarantees for \emph{any} sample size, unlike procedures based on the central limit theorem which only satisfy \emph{approximate} guarantees for large samples.\footnote{Note that nonasymptotic procedures may be more conservative than asymptotic ones as they satisfy more rigorous coverage (similarly, type-I error) guarantees. Whether one should sacrifice stronger guarantees for tightness comes down to philosophical preference. For the purposes of this paper, we focus solely on nonasymptotics.}
\item \textbf{Nonparametric:} We will not make any parametric assumptions on the distribution of the contexts, policies, or rewards.
\item \textbf{Time-uniform / anytime-valid:} Our confidence sets will be \emph{uniformly} valid for all sample sizes, and permit off-policy inference at arbitrary data-dependent stopping times.
\item \textbf{Adaptive data collection (via online learning):} All of our off-policy inference methods will allow for the sequence of logging policies $(h_t)_{t=1}^\infty$ to be predictable (i.e.~$h_t$ can depend on $\Hcal_{t-1}$). In particular $\infseqt {h_t}$ can be the result of an online learning algorithm.
\item \textbf{Unknown and unbounded $\wmax$:} In all of our algorithms, the maximal importance weight
  \[\wmax := \esssup_{t\in \NN, a \in \Acal, x \in \Xcal} \frac{\pi(a \mid x)}{h_t(a \mid x)} \]
  can be unknown, and need not be uniformly bounded (i.e.~$\wmax$ can be infinite). Note that we do require that importance weights $\pi(a\mid x) / h_t(a \mid x)$ themselves are finite for each $(t, a, x)$, but their essential \emph{supremum} need not be. Perhaps surprisingly, even if $\wmax$ is infinite, it is still possible for our \cs{}s to shrink to zero-width since they depend only on \emph{empirical variances}. As an illustrative example, see \cref{proposition:lil-eb} for a closed-form \cs{} whose width can shrink at a rate of $\sqrt{\log\log t / t}$ as long as the importance-weighted rewards are well-behaved (e.g.~in the iid setting, if they have finite second moments).
\end{enumerate}
In addition to the above, we will design procedures so that they have strong empirical performance and are straightforward to compute. While some of these desiderata are quite intuitive and common in statistical inference (such as nonasymptotic, nonparametric, and time-uniform validity, so as to avoid relying on large sample theory, unrealistic parametric assumptions, or prespecified sample sizes), desiderata 4 and 5 are more specific to OPE and have not been satisfied in several prior works as we outline in Sections~\ref{section:warmup},~\ref{section:time-varying}, and~\ref{section:cdf}. Given their central importance to our work, let us elaborate on them here.

\paragraph{Why allow for logging policies to be predictable?} For the purpose of policy optimization, contextual bandit algorithms are tasked with balancing exploration and exploitation: simultaneously figuring out which policies will yield high rewards (at the expense of trying out suboptimal policies) and playing actions from the policies that have proven effective so far. On the other hand, in adaptive sequential trials, an experimenter might aim to balance context distributions between treatment arms (such as via Efron's biased coin design \citep{efron1971forcing}) or to adaptively find the treatment policy that yields the most efficient treatment effect estimators~\citep{kato2020adaptive}. In both cases, the logging policies $\infseqt{h_t}$ are not only changing over time, but adaptively so based on previous observations. We strive to design procedures that permit inference in precisely these types of adaptive data collection regimes, despite most prior works on off-policy inference for contextual bandits having assumed that there is a fixed, prespecified logging policy \citep{thomas2015high,karampatziakis2021off,chandak2021universal,huang2021off}. Of course, if a \cs{} or \ci{} is valid under adaptive data collection, they are also valid when fixed logging policies are used instead.

\paragraph{Why not rely on knowledge of $\wmax$?} Related to the previous discussion, it may not be known \emph{a priori} how the range of the predictable logging policies will evolve over time. Moreover, one can imagine a situation where $\sup_{a, x} \pi(a \mid x ) / h_t(a \mid x) \to \infty$, even if every individual importance weight $\pi / h_t$ is finite. In such cases, having \cs{}s be agnostic to the value of $\wmax$ is essential. However, even if $\wmax$ \emph{is} known, it may be preferable to design \cs{}s that do not depend on this worst-case value. Suppose for the sake of illustration that a logging policy $h$ assigns a novel treatment (denoted by $a = 1$) with probability 1/5 and a placebo (denoted by $a = 0$) with probability 4/5 for most subjects, except for a small but high-risk subpopulation, who receive the novel treatment with probability 1/1000. To estimate the expected reward of the novel treatment, note that the importance weight for subject $t$ will take on the value $w_t := 1 / h(A_t \mid X_t)$ for treatment $A_t \in \{0, 1\}$ and context $X_t \in \Xcal$. Despite the fact that most of the importance weights are only 5, and hence most importance-weighted pseudo-outcomes $w_t R_t$ will take values in $[0, 5]$, the worst-case $\wmax$ is much larger at 1000. Consequently, we should expect a \cs{} that scales with $\wmax = 1000$ to be much wider than one that only scales with a quantity like an empirical variance. For these reasons we prefer procedures that depend on an empirical variance term (defined later) rather than the worst-case importance weight $\wmax$.

\subsection{Outline and contributions}
Our fundamental contribution is in the derivation of \cs{}s for various off-policy parameters, including fixed policy values, time-varying policy values, and quantiles of the off-policy reward distribution. We begin in~\cref{section:warmup} with the most common formulation of the OPE problem: estimating the value $\polval$ of a target policy $\pi$. \cref{theorem:dr-fixed-policy-value} presents time-uniform \cs{}s for $\polval$, a result that generalizes and improves upon the current state-of-the-art \cs{}s for $\polval$  by \citet{karampatziakis2021off}. In~\cref{section:time-varying}, we consider the more challenging problem of estimating a time-varying average policy value $\polval_t$, where the distribution of off-policy rewards can change over time in an arbitrary and unknown fashion. In \cref{section:cdf}, we derive \cs{}s for quantiles of the off-policy reward distribution, and in particular, \cref{theorem:cdf} presents a confidence band for the entire cumulative distribution function (CDF) that is both uniformly valid in time \emph{and} in the quantiles. For the results of Sections~\ref{section:time-varying} and~\ref{section:cdf}, no other solutions to this problem exist in the literature, to the best of our knowledge. Finally, in \cref{section:summary}, we summarize our results and describe some natural extensions and implications of them, namely false discovery rate control under arbitrary dependence when evaluating several policies, and differentially private off-policy inference.

\subsection{Related work}

Throughout the paper, we will draw detailed comparisons to work that is most closely related to ours, i.e.~papers that are broadly concerned with estimating policy values and/or the off-policy CDF from contextual bandit data in a model-free setting --- here we are using the term ``model-free'' to mean that no restrictions are placed on the functional form between the rewards $R$ and the actions $A$ nor on the covariates $X$. Specifically, \cref{table:time-varying} and the preceding text provides a (selective) property-by-property comparison to the directly related works of \citet{karampatziakis2021off,bibaut2021post,zhan2021off}, and \citet{howard2018uniform}, and \cref{table:cdf} provides a similar comparison to the works of \citet{howard2022sequential}, \citet{chandak2021universal}, and \citet{huang2021off}. However, there are several other works that focus on \emph{fixed-$n$} (i.e.~not time-uniform) and \emph{asymptotic} statistical inference from adaptively collected data (e.g.~in the form of multi-armed bandits, contextual bandits, or more general reinforcement learning). For example, \citet{ramprasad2022online} develop a bootstrap procedure for estimating policy values under Markov noise with temporal difference learning algorithms, \citet{dimakopoulou2021online} perform adaptive inference in the multi-armed bandit setting, \citet{hadad2021confidence} provide asymptotic confidence intervals for treatment effects in adaptive experiments, and \citet{zhang2021statistical} provide distribution-uniform asymptotic procedures for M-estimation from contextual bandit data. Other works that consider more model-based approaches include  \citet{zhang2020inference}, \citet{khamaru2021near}, \citet{shen2021doubly}, and \citet{chen2021statistical}.

\subsection{Notation: supermartingales, filtrations, and stopping times}

Since all of our results will rely on the analysis of nonnegative (super)martingales, predictable processes, stopping times, and so on, it is worth defining some of these terms before proceeding. Consider a universe of distributions $\Pi$ on a filtered probability space $(\Omega,\Fcal)$. A single draw from any distribution $P \in \Pi$ results in a sequence $Z_1,Z_2,\dots$ of potentially dependent observations. (In the context of this paper, $Z_t$ may represent $(X_t,A_t,R_t)$, for example, and the distribution $P$ may be induced by the policy, and not specified in advance.) If $Z_1, Z_2, \dots$ are independent and identically distributed (iid), we will explicitly say so, but in general we eschew iid assumptions in this paper.

As is common in the statistics literature, we will use upper-case letters like $Z$ to refer to random variables and lower-case letters $z$ to refer to non-stochastic values in the same space that $Z$ takes values.
Let $Z_1^t$ denote the tuple $(Z_1,\dots,Z_t)$ and let $\history \equiv \infseqt{\history_t}$ by default represent the data (or ``canonical'') filtration, meaning that $\Hcal_t=\sigma(Z_1^t)$.

A sequence of random variables $Y \equiv \infseqt{Y_t}$ is called a \textit{process} if it is adapted to $\Hcal$, that is if $Y_t$ is measurable with respect to $\Hcal_t$ for every $t$. A process $Y$ is \emph{predictable} if $Y_t$ is measurable with respect to $\history_{t-1}$ --- informally ``$Y_t$ only depends on the past''. A process $M$ is a \emph{martingale} for $P$ with respect to $\history$ if \begin{equation}\label{eq:martingale}
    \mathbb{E}_P[M_t \mid \Hcal_{t-1}] = M_{t-1} 
\end{equation}
for all $t\geq 1$. $M$ is a \emph{supermartingale} for $P$ if it satisfies \eqref{eq:martingale} with ``$=$'' relaxed to ``$\leq$''.  A (super)martingale is called a \emph{test (super)martingale} if it is nonnegative and $M_0=1$. A process $M$ is called a test (super)martingale for $\Pcal \subset \Pi$ if it is a test (super)martingale for every $P \in \Pcal$. 

Throughout, if an expectation $\EE$ operator is used without a subscript $P$, or if a \textbf{boldface} $\PP$ is used to denote a probability, these are always referring to the  distribution of $(X_t,A_t,R_t)_{t \geq 1}$ induced by \cref{algorithm:cb} and the logging policies $\infseqt{h_t}$.

An $\Hcal$-stopping time $\tau$ is a $\NN \cup \{\infty\}$-valued random variable such that $\{\tau \leq t\} \in \Hcal_t$ for each $t \geq 0$. Informally and in the context of this paper, a stopping time can be thought of as a sample size that was chosen based on all of the information $\history_t$ up until time $t$.

\section{Warmup: Off-policy inference for constant policy values}\label{section:warmup}

This section deals with the case where $\nu_t$ from~\eqref{eq:before-pseudo-identification} does not depend on $t$, meaning that it is constant as a function of time. We handle the time-varying case in the next section.

We begin by extending a result of \citet[Section 5.2]{karampatziakis2021off} which applied in the iid setting, meaning that the logging policy $h$ is fixed and the contexts and rewards are assumed to be iid. Their paper derives several \cs{}s for the value $\polval := \EE_{A \sim \pi}(R)$ of the policy $\pi$ for $[0, 1]$-bounded
rewards $R$, but some of their \cs{}s require knowledge of $\wmax$, which we would like to avoid as per our desiderata in \cref{section:desiderata}. However, their so-called ``scalar betting'' approach in~\citep[Section 5.2]{karampatziakis2021off} makes use of importance-weighted random variables and does not depend on knowing $\wmax$. To elaborate, let $w_t$ be the importance weight for the target policy $\pi$ versus the logging policy $h$ given by
  \begin{equation}
    w_t := \frac{\pi(A_t \mid X_t)}{h(A_t \mid X_t)},
  \end{equation}
  and let $\phi_t^\IWL := w_t R_t$ and $\phi_t^\IWU := w_t(1-R_t)$ be importance-weighted rewards that will be used to construct lower and upper bounds respectively. Note that $\phi_t^\IWL$ is ubiquitous in the bandit and causal inference literatures, and the authors were not concerned with deriving \emph{new estimators}, but rather \emph{new confidence sequences} using existing estimators. While $w_t \equiv w_t(X_t, A_t)$ does depend on both $A_t$ and $X_t$, we leave the dependence on them implicit going forward to reduce notational clutter.

\begin{proposition}[Scalar betting off-policy \cs{} \cite{karampatziakis2021off}]\label{proposition:kmr}
  Suppose $(X_t, A_t, R_t)_{t=1}^\infty$ are iid with $[0, 1]$-valued rewards $\infseqt{R_t}$, and the logging policy $h$ is fixed. For each $\candpolval \in [0, 1]$, let   $(\lambda_t^L(\candpolval))_{t=1}^\infty$ be any $[0, 1/\candpolval)$-valued predictable sequence.
  Then,
  \begin{equation}\label{eq:kmr-cs}
    L_t^\IW := \inf\left \{ \candpolval \in [0, 1] : \prod_{i=1}^t \left ( 1 + \lambda_i^L(\candpolval) \cdot (\phi_t^\IWL - \candpolval) \right ) < \frac{1}{\alpha} \right \}
  \end{equation}
  forms a lower $(1-\alpha)$-\cs{} for $\polval$, meaning $\PP(\forall t \in \NN,\ \polval \geq L_t^\IW) \geq 1-\alpha$. Similarly, for any $[0, 1/(1-\candpolval))$-valued predictable sequence $\infseqt{\lambda_t^U(\candpolval)}$.
  \begin{equation}
    U_t^\IW := 1 - \inf \left \{ 1-\candpolval \in [0, 1] : \prod_{i=1}^t \left [ 1 + \lambda_i^U(\candpolval) \cdot (\phi_t^\IWU - (1-\candpolval)) \right ] < \frac{1}{\alpha} \right \}
  \end{equation}
  forms an upper $(1-\alpha)$-\cs{} for $\polval$, meaning $\PP \left ( \forall t \in \NN,\ \polval \leq U_t^\IW \right ) \geq 1-\alpha$. A two-sided \cs{} can be formed using $[L_t^\IW, U_t^\IW]_{t=1}^\infty$ combined with a union bound.
\end{proposition}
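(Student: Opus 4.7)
The natural approach is the standard ``test-martingale inversion'' scheme: for each candidate value $\candpolval \in [0,1]$, build a nonnegative process $M_t^L(\candpolval)$ that is a martingale when $\candpolval$ equals the true value $\polval$, apply Ville's maximal inequality, and then show the infimum construction in \eqref{eq:kmr-cs} translates the event ``$M_t^L(\polval) \geq 1/\alpha$'' into ``$\polval < L_t^{\IW}$''. Concretely, I would set
\[
  M_t^L(\candpolval) := \prod_{i=1}^{t} \bigl(1 + \lambda_i^L(\candpolval)(\phi_i^{\IWL} - \candpolval)\bigr), \qquad M_0^L(\candpolval) := 1.
\]
Nonnegativity is immediate from the hypothesis $\lambda_i^L(\candpolval) \in [0, 1/\candpolval)$ together with $\phi_i^{\IWL} = w_i R_i \geq 0$: each factor is bounded below by $1 - \lambda_i^L(\candpolval)\candpolval > 0$.

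The martingale property at $\candpolval = \polval$ is where the iid and fixed-logging-policy hypotheses are used. Because $h$ does not depend on the past and $(X_t, A_t, R_t)$ is iid, the standard importance-weighting identity gives
\[
  \EE[\phi_t^{\IWL} \mid \history_{t-1}] = \EE[w_t R_t] = \EE_{A \sim \pi}[R] = \polval.
\]
Since $\lambda_t^L(\polval)$ is predictable (it is a deterministic function of $\polval$ and data up to time $t-1$), this yields $\EE[M_t^L(\polval) \mid \history_{t-1}] = M_{t-1}^L(\polval)$, so $(M_t^L(\polval))_{t\geq 0}$ is a test martingale. Ville's inequality then gives $\PP(\exists t : M_t^L(\polval) \geq 1/\alpha) \leq \alpha$.

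The last step is the inversion bookkeeping. By definition of $L_t^{\IW}$ as the infimum of $\{\candpolval : M_t^L(\candpolval) < 1/\alpha\}$, if $\polval < L_t^{\IW}$ at some $t$, then $\polval$ lies strictly below the infimum and therefore cannot belong to that set, so $M_t^L(\polval) \geq 1/\alpha$. Thus
\[
  \{\exists t : \polval < L_t^{\IW}\} \subseteq \{\exists t : M_t^L(\polval) \geq 1/\alpha\},
\]
and the lower-\cs{} guarantee follows. The upper bound is symmetric: apply the same argument to the transformed rewards $1 - R_t$, whose importance-weighted version $\phi_t^{\IWU}$ has mean $1-\polval$, obtain a lower \cs{} for $1-\polval$, and negate. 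The two-sided statement is a union bound over the two one-sided \cs{}s at level $\alpha/2$ each.

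\paragraph{Where the difficulty lies.} The martingale calculation itself is routine; the only subtle points I would be careful about are (i) verifying that $\lambda_t^L(\candpolval)$ viewed as a function of $t$ with $\candpolval$ fixed really is predictable with respect to $\history$ (so that the conditional-expectation step goes through for each fixed $\candpolval$), and (ii) the inversion step, which implicitly requires that the event $\{\polval < L_t^{\IW}\}$ be expressible in terms of a single fixed-$\candpolval$ martingale rather than a supremum over $\candpolval$. The latter is what makes the ``infimum of a sublevel set'' construction work cleanly: we only need the martingale property to hold at the true $\polval$, not uniformly in $\candpolval$, so no additional continuity or uniform-martingale argument is needed. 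Everything else is a direct application of Ville's inequality, and the iid plus fixed-$h$ hypotheses are used precisely once, to replace the conditional mean $\EE[\phi_t^{\IWL}\mid \history_{t-1}]$ by the constant $\polval$ --- foreshadowing why the later sections will require time-varying generalizations.
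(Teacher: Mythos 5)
Your proposal is correct and follows essentially the same route as the paper: it is exactly the three-step argument (importance-weighting identity $\EE[w_tR_t \mid \history_{t-1}] = \polval$, nonnegative test martingale plus Ville's inequality, then inversion of the sublevel-set infimum) that the paper sketches after \cref{proposition:kmr} and carries out in full for its generalization in \cref{proof:dr-fixed-policy-value}, with the same mirroring and union-bound steps for the upper and two-sided bounds. No gaps.
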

The above \cs{} $[L_t^\IW, U_t^\IW]_{t=1}^\infty$ due to \citet{karampatziakis2021off} has a number of desirable properties. Namely, it satisfies the first four of five desiderata in \cref{section:desiderata}, meaning it is a nonasymptotic, nonparametric, time-uniform confidence sequence that does not require knowledge of $\wmax$. Note that while infima appear in the definitions of $L_t^\IW$ and $U_t^\IW$ they are straightforward to compute (e.g.~via line or grid search) when the product is quasiconvex in $\candpolval \in [0, 1]$ which is often the case as we discuss in \cref{section:tuning-truncating-mirroring}.
The idea behind \cref{proposition:kmr} is to show that the product inside the above infima are nonnegative martingales when $\candpolval = \polval$ and then apply Ville's inequality to it \citep{ville1939etude}. Our main results in the coming sections use similar techniques albeit with very different (super)martingales tailored to different problem settings.

We also wish to highlight that indeed, $[L_t^\IW, U_t^\IW]_{t=1}^\infty$ forms a valid $(1-\alpha)$-\cs{} \emph{regardless} of how the sequences $(\lambda_t^L(\candpolval))_t$ and $(\lambda_t^U(\candpolval))_t$ are chosen. Such phenomena are common in martingale-based statistical procedures such as in \citet{waudby2020estimating} (see also the review paper of \citet{ramdas2022game}) and will be seen in several of the results to follow. We will discuss some guiding principles for how to choose these sequences in \cref{remark:choosing-lambda}.

\subsection{Tighter confidence sequences via doubly robust pseudo-outcomes}\label{section:dr}

Here, we generalize and improve upon~\cref{proposition:kmr} in three ways. First, we show that the logging policy $h$ can be replaced by a \emph{sequence} of predictable logging policies $(h_t)_{t=1}^\infty$ so that $h_t$ can be built from the entire history $\Hcal_{t-1}$ up until time $t-1$ (and in particular, $\infseqt{h_t}$ can be the result of an online learning algorithm), so that the importance weight $w_t$ at time $t$ is given by
\begin{equation}\label{eq:predictable-importance-weights}
  w_t := \frac{\pi(A_t \mid X_t)}{h_t(A_t \mid X_t)}.
\end{equation}

Second, we show how the importance-weighted pseudo-outcomes $(\phi_t)_{t=1}^\infty$ can be made doubly robust in the sense of \citet{dudik2011doubly,dudik2014doubly}. Indeed, define the lower and upper doubly robust pseudo-outcomes $(\phi_t^\DRL)_{t=1}^\infty$ and $(\phi_t^\DRU)_{t=1}^\infty$ given by
\begin{align}
  \phi_t^\DRL &:= w_t \cdot \left (R_t - \left [ \widehat r_t(X_t; A_t) \land \frac{k_t}{w_t} \right ] \right ) + \EE_{a \sim \pi(\cdot \mid X_t)} \left ( \widehat r_t(X_t; a) \land \frac{k_t}{w_t} \right ) , \label{eq:dr-outcomes-lower}\\
  \phi_t^\DRU &:= w_t \cdot \left (1-R_t - \left [ (1-\widehat r_t(X_t; A_t)) \land \frac{k_t}{w_t} \right ] \right ) + \EE_{a \sim \pi(\cdot \mid X_t)} \left ( [1-\widehat r_t(X_t; a) ]\land \frac{k_t}{w_t} \right ) , \label{eq:dr-outcomes-upper}
\end{align}
where $\widehat r_t(X_t ; A_t)$ is any $[0, 1]$-valued predictor of $R_t$ built from $\history_{t-1}$ and $k_t$ is a $\RR_{\geq 0}\cup \{\infty\}$-valued tuning parameter built from $\history_{t-1}$ that determines how ``doubly robust'' $\phi_t^\DR$ should be. Note that $\phi_t^\DRL$ and $\phi_t^\DRU$ are both at least $-k_t$ by construction, and have conditional means of $\nu$ and $1-\nu$ respectively. Note that the phrase ``doubly robust'' is sometimes used to refer to \textit{properties} of estimators (e.g.~that their bias is second order and depends only on products of nuisance errors in observational studies where importance weights are unknown \citep{kennedy2022semiparametric}) and sometimes to refer to \emph{types of} estimators that enjoy variance-reduction without compromising validity in experiments where importance weights are known. We are using this phrase in the second sense following the conventions of \citet{dudik2011doubly,dudik2014doubly}. 

Similar to the discussion surrounding \cref{proposition:kmr}, the doubly robust pseudo-outcomes in \eqref{eq:dr-outcomes-lower} are ubiquitous in the causal inference and bandit literatures \citep{robins1994estimation,dudik2011doubly,dudik2014doubly,uehara2022review} with the minor tweak that we are truncating the reward predictor. Note that we are \emph{not} doing this for the purposes of deriving better estimators --- instead, we are doing so for the purposes of sharp concentration of measure in the pursuit of tighter CSs.

Setting $k_1 = k_2 = \cdots = 0$ recovers the IW outcomes exactly, while setting $k_1 = k_2 = \cdots = \infty$ recovers the classic doubly robust outcomes \citep{dudik2011doubly,dudik2014doubly} (this could also be achieved by setting $k_1 = k_2= \cdots = \wmax$, provided $\wmax$ is finite and known). We discuss the need to truncate $\widehat r_t$ in \cref{remark:DR-why-truncate}, but the motivation to include a reward predictor at all is to reduce the variance of $\phi_t^\DRL$ and $\phi_t^\DRU$ if $R_t$ can be well-predicted by $\widehat r_t$, a well-known phenomenon in doubly robust estimation \citep{robins1994estimation,van2011targeted,chernozhukov2017double}. Indeed, we find that the resulting \cs{}s are able to adapt to this reduced variance accordingly for large $t$ (see \cref{fig:iw-vs-dr} for an illustration).
\begin{figure}[!htbp]
  \centering
  Confidence sequences and their widths when the policy value is $\nu = 0.6$\\
  \includegraphics[width=\textwidth]{./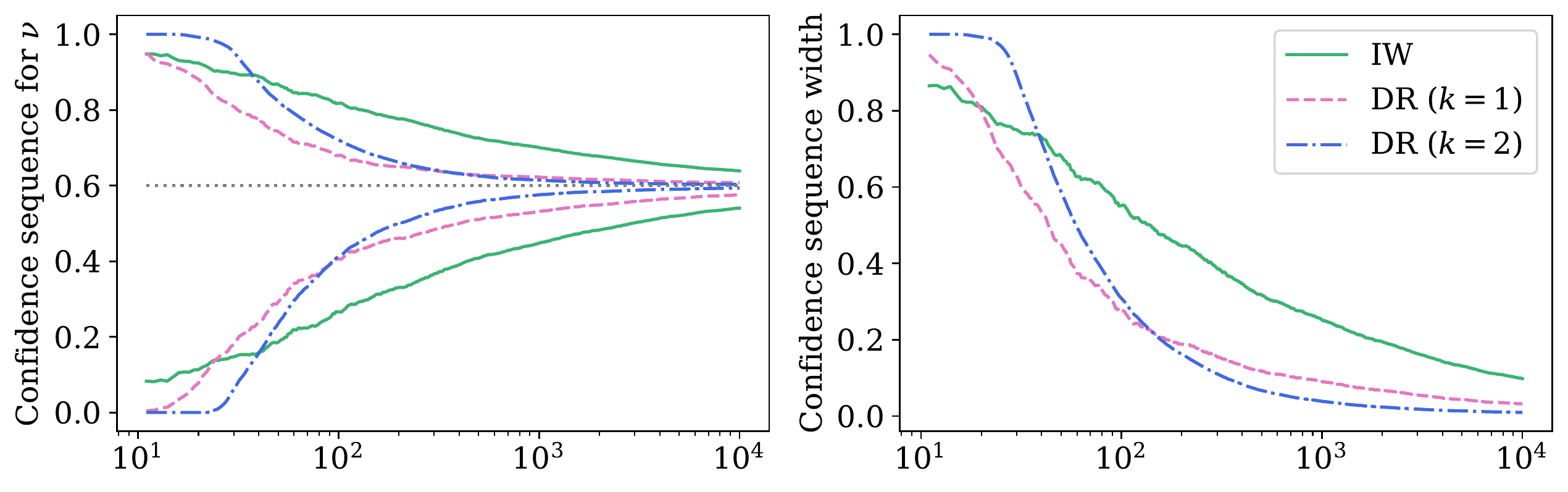}
  Confidence sequences and their widths when the policy value is $\nu = 0.1$\\
  \includegraphics[width=\textwidth]{./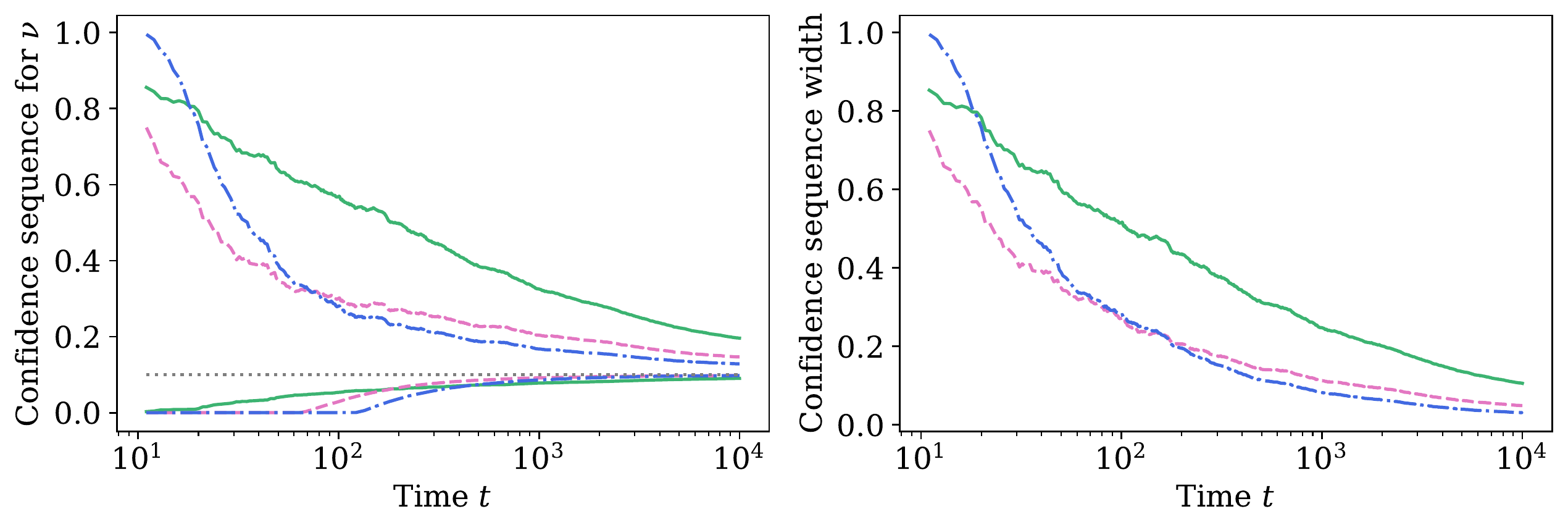}
  \caption{Three confidence sequences for a policies with values $\polval = 0.6$ and $\polval = 0.1$. The first CS is built from importance-weighted pseudo-outcomes (``IW''), and the other two are built from doubly robust pseudo-outcomes (``DR'') with $k$ taking values 1 and 2, respectively. In these examples, the reward $R_t$ can be predicted easily, a property that only the doubly robust \cs{}s can exploit. Notice that a larger value of $k$ allows the doubly robust \cs{} to become narrower for large $t$, but it pays for this adaptivity with wider bounds at small $t$. Nevertheless, all three \cs{}s are time-uniform, and nonasymptotically valid in both simulation scenarios.}
  \label{fig:iw-vs-dr}
\end{figure}

Third and finally, we relax the iid assumption, and only require that $\EE_{\pi}(R_t \mid \Hcal_{t-1}) = \polval \equiv \EE_\pi (R_t) $ and $R_t \in [0, 1]$ almost surely. This relaxation of assumptions can be obtained for free, without any change to the resulting \cs{}s whatsoever, and with only a slight modification to the proof. We summarize our extensions in the following theorem.

\begin{theorem}[Doubly robust betting off-policy \cs{}]\label{theorem:dr-fixed-policy-value}
 Suppose $(X_t, A_t, R_t)_{t=1}^\infty$ is an infinite sequence of contextual bandit data generated by the predictable policies $(h_t)_{t=1}^\infty$ whose $[0, 1]$-valued reward $R_t$ at time $t$ has conditional mean $\EE_\pi(R_t \mid \history_{t-1}) = \polval \equiv \EE_\pi (R_t)$. 
 For any predictable sequence $(\lambda_t^L(\candpolval))_{t=1}^\infty$ such that $\lambda_t^L(\candpolval) \in [0, (\candpolval + k_t)^{-1})$, we have
  \begin{equation}\label{eq:dr-lower}
    L_t^\DR := \inf \left \{ \candpolval \in [0, 1] : \prod_{i=1}^t \left [ 1 + \lambda_i^L(\candpolval) \cdot (\phi_i^\DRL - \candpolval) \right ] < \frac{1}{\alpha} \right \}
  \end{equation}
  forms a lower $(1-\alpha)$-\cs{} for $\polval$. Similarly, if $\lambda_t^U(\candpolval) \in \left [0, (1-\candpolval + k_t)^{-1}\right )$ is predictable, then
  \begin{equation}\label{eq:dr-upper}
    U_t^\DR := 1-\inf \left \{ 1-\candpolval \in [0, 1] : \prod_{i=1}^t \left [ 1 + \lambda_i^U(\candpolval) \cdot (\phi_i^\DRU - (1-\candpolval)) \right ] < \frac{1}{\alpha} \right \}
  \end{equation}
  forms an upper $(1-\alpha)$-\cs{} for $\polval$. 
\end{theorem}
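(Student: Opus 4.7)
The plan is to exhibit, for each candidate value $\candpolval \in [0,1]$, a nonnegative test martingale whose first exit from $[0, 1/\alpha)$ characterizes the complement of the coverage event, and then invoke Ville's inequality. Concretely, for the lower CS I would fix $\candpolval = \polval$ and show that $M_t := \prod_{i=1}^t [1 + \lambda_i^L(\polval)(\phi_i^\DRL - \polval)]$ is a nonnegative martingale with $M_0=1$; Ville then gives $\PP(\exists t: M_t \geq 1/\alpha) \leq \alpha$, and on the complementary event, $\polval$ is a member of the set whose infimum defines $L_t^\DR$, forcing $\polval \geq L_t^\DR$ for every $t$ simultaneously.

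The heart of the argument is the conditional expectation identity $\EE[\phi_t^\DRL \mid \history_{t-1}] = \polval$. Writing $w_t(a) = \pi(a\mid X_t)/h_t(a\mid X_t)$ and using that $A_t \sim h_t(\cdot\mid X_t)$, I would apply the elementary importance-sampling identity $\EE_{A_t \sim h_t}[w_t(A_t) f(X_t,A_t) \mid X_t, \history_{t-1}] = \int \pi(a\mid X_t) f(X_t, a)\dd a$ to two choices of $f$. With $f = R_t$, the conditional independence in~\eqref{eq:conditional-independence} yields that the first piece of $\phi_t^\DRL$ has conditional mean $\EE_\pi(R_t \mid \history_{t-1}) = \polval$. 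With $f(X_t, a) = \widehat r_t(X_t;a) \wedge k_t/w_t(a)$, the importance-weighted truncated regression term matches the additive correction $\EE_{a\sim\pi(\cdot\mid X_t)}(\widehat r_t(X_t;a) \wedge k_t/w_t)$ exactly, so the two cancel in conditional expectation. Predictability of $\widehat r_t$, $h_t$, and $k_t$ ensures these manipulations are all taking place inside $\history_{t-1}$-measurable objects once we condition on $X_t$.

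With that identity established, the remaining steps are mechanical. Nonnegativity of every factor follows because $\phi_t^\DRL \geq -k_t$ (the worst case of the IW piece is $-w_t \cdot k_t/w_t = -k_t$, and the additive term is nonnegative), which combined with $\lambda_t^L(\polval) \in [0, (\polval+k_t)^{-1})$ forces $1 + \lambda_t^L(\polval)(\phi_t^\DRL - \polval) > 0$. Together with $\EE[\phi_t^\DRL \mid \history_{t-1}] = \polval$ and predictability of $\lambda_t^L(\polval)$, this makes $M_t$ a nonnegative martingale at $\candpolval = \polval$; Ville then yields the lower CS. The same template applies on $\phi_t^\DRU$, which satisfies $\EE[\phi_t^\DRU \mid \history_{t-1}] = 1-\polval$ and $\phi_t^\DRU \geq -k_t$ by a symmetric calculation, and delivers the upper CS. I expect the main obstacle to be the bookkeeping in the doubly robust cancellation: one must verify that truncating $\widehat r_t$ at the data-dependent level $k_t/w_t$ preserves the exact cancellation, which it does precisely because the same truncation level appears inside both the IW term and the $\EE_{a\sim\pi}$ correction. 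Once that algebraic identity is in place, everything else is standard testing-by-betting machinery.
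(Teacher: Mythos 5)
Your proposal is correct and follows essentially the same route as the paper's proof: establish $\EE[\phi_t^\DRL \mid \history_{t-1}] = \polval$ via the importance-sampling identity with exact cancellation of the truncated regression terms, verify strict positivity of each factor from $\phi_t^\DRL \geq -k_t$ and the range constraint on $\lambda_t^L$, conclude that the product is a test martingale, and invert Ville's inequality (your direct membership argument is just the contrapositive of the paper's inversion step). Your explicit check that the truncation level $k_t/w_t(a)$ appears identically in both the importance-weighted term and the $\EE_{a\sim\pi}$ correction is exactly the bookkeeping the paper's Step 1 relies on, so no gap remains.
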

The proof of \cref{theorem:dr-fixed-policy-value} can be found in \cref{proof:dr-fixed-policy-value} and relies on applying Ville's inequality~\citep{ville1939etude} to the products in \eqref{eq:dr-lower} and \eqref{eq:dr-upper}. Note that the dimensionality of $\Xcal$ does \emph{not} in any way affect the validity of \cref{theorem:dr-fixed-policy-value}. Moreover, $\phi_t^\DRL$ and $\phi_t^\DRU$ are always unbiased and yield valid \cs{}s regardless of how $\widehat r_t$ is chosen. The reason to introduce these doubly robust pseudo-outcomes is to obtain lower-variance CSs (as illustrated in \cref{fig:iw-vs-dr}) since doubly robust estimators can be semiparametric efficient thereby attaining the optimal asymptotic mean squared error in a local minimax sense. These details are outside the scope of the present paper, but we direct the interested reader to \citet{kennedy2022semiparametric} and \citet{uehara2022review} for modern reviews discussing this subject.

Notice that \cref{theorem:dr-fixed-policy-value} is a generalization of \cref{proposition:kmr}. Indeed, if the logging policies do not change (i.e.~$h_1=h_2=\cdots=h$), and if the observations $\infseqt{X_t, A_t, R_t}$ are iid, and if $k_t = 0$ for each $t$, then \cref{theorem:dr-fixed-policy-value} recovers \cref{proposition:kmr} exactly. For this reason, we do not elaborate on empirical comparisons between~\cref{proposition:kmr} and \cref{theorem:dr-fixed-policy-value} --- any \cs{} that can be derived using the former is a special case of the latter. Moreover, in the \emph{on-policy} setting with all importance weights set to 1 and without a reward predictor, \cref{theorem:dr-fixed-policy-value} recovers the betting-style \cs{}s of \citet[Theorem 3]{waudby2020estimating}. As alluded to in the discussion following \cref{proposition:kmr}, the infima above are straightforward to compute for many choices of predictable sequences $\infseqt{\lambda_t^L(\candpolval)}$ and $\infseqt{\lambda_t^U(\candpolval)}$ including all of those discussed in the following section.

\subsection{Tuning, truncating, and mirroring}\label{section:tuning-truncating-mirroring}

We make three remarks below, that are important on both theoretical and practical fronts.

\begin{remark}[Tuning $\infseqt{\lambda_t^L}$ and $\infseqt{k_t}$]\label{remark:choosing-lambda}
  As stated, \cref{theorem:dr-fixed-policy-value} yields a valid lower \cs{} for $\polval$ using \emph{any} predictable sequence of $[0, (\candpolval + k_t)^{-1})$-valued tuning parameters $(\lambda_t^L(\candpolval))$ --- referred to as ``bets'' by \citet{karampatziakis2021off} and \citet{waudby2020estimating}, but how should these bets be chosen? \citet[Appendix B]{waudby2020estimating} discuss several possible options, but in practice none of them uniformly dominate the others. (This should not be surprising, since there is a certain formal sense in which different nontrivial nonnegative martingales cannot uniformly dominate each other for a given composite sequential testing problem; see~\citet{ramdas2020admissible} for a precise statement.) For a simple-to-implement and empirically compelling option, we suggest scaling $\phi_t^\DR$ as $\xi_t := \phi_t^\DR / (k_t + 1)$ and setting $\lambda_t^L(\candpolval)$ as
  \begin{align}
    \lambda_t^L(\candpolval) &:= \sqrt{\frac{2 \log(1/\alpha)}{\widehat \sigma_{t-1}^2 t\log (1 + t)}} \land \frac{c}{k_t + \candpolval},~~~\text{where}\label{eq:betting-strategy-prpl}\\
    \widehat \sigma_t^2 &:= \frac{\sigma_0^2 + \sum_{i=1}^t (\xi_i - \widebar \xi_{i})^2}{t + 1},~~~\text{and}~~~\widebar \xi_t := \left ( \frac{1}{t}\sum_{i=1}^t \xi_i\right ) \land \frac{1}{k_t+1} \label{eq:betting-strategy-prpl-sigmahat2},
  \end{align}
  with a similar definition for $\lambda_t^U(\candpolval)$ but with $c/(k_t + \candpolval)$ replaced by $c/(k_t + (1-\candpolval))$. Here, $c \in (0, 1)$ is some truncation scale, reasonable values of which may lie between $1/4$ and $3/4$, but it is of relatively minor practical importance, and for sufficiently large $t$, the choice of $c$ will be inconsequential. A justification for why $\lambda_t^L(\candpolval)$ is a sensible choice can be found in \citet[Section B.1]{waudby2020estimating}, but the practitioner is nevertheless free to use any other sequence of bets, as long as they are predictable and satisfy the aforementioned boundedness constraints. Furthermore, as in \citet{waudby2020estimating}, when $\lambda_t^L(\candpolval)$ is chosen as above, the product in \eqref{eq:dr-lower} is quasiconvex in $\candpolval \in [0, 1]$ and hence the infima in \cref{theorem:dr-fixed-policy-value} (and \cref{proposition:kmr}) can be computed straightforwardly via line or grid search (see \citet[Section A.5]{waudby2020estimating}).
  
  The sequence of nonnegative $\infseqt{k_t}$ that truncate the reward predictors can also be chosen in any way as long as they are predictable. There are several heuristics that one might use, with increasing levels of complexity. One option is to have a prior guess for $\wmax$ (or some value $\wmax'$ that the practitioner believes will upper-bound most importance weights) and set $k_t = \wmax' / C$ for some $C \geq 1$, e.g.~$C = 2$. For a more adaptive option, one could set $k_t := \mathrm{median}(w_1, \dots, w_{t-1})$, or even try out a grid of values $\{k^{(1)}, \dots, k^{(J)}\}$ and choose the $k^{(j)}$ that would have yielded the tightest \cs{}s in hindsight. Nevertheless, all three of these options yield nonasymptotically valid $(1-\alpha)$-\cs{}s for $\polval$.
\end{remark}

\begin{remark}[Why truncate the reward predictor $\widehat r_t$?]\label{remark:DR-why-truncate}
  Readers familiar with doubly robust estimation in causal inference or contextual bandits will notice that if $k_1 = k_2 = \cdots = \infty$, then $\phi_t^\DRL$ takes the form of a classical doubly robust estimator of the policy value, and that such estimators often vastly outperform those based on importance weighting alone (and in many cases, are provably more efficient, at least in an asymptotic sense), so why would we want to truncate $\widehat r_t$ at all?

  The reason has to do with the fact that for nonasymptotic inference, we exploit the lower-boundedness of $\phi_t^\DRL$ in order to show that the product in \eqref{eq:dr-lower} is a nonnegative martingale. However, if we introduce a non-truncated reward predictor, we can only say that $\phi_t^\DRL$ is lower-bounded by $-\wmax$, which we do not want to assume knowledge of (or that it is finite at all). Truncation of $\widehat r_t$ allows us to occupy a middle ground, so that many of the efficiency gains from doubly robust estimation can be realized, without entirely losing the lower-boundedness structure of $\phi_t^\DRL$. 

  This same line of thought helps to illustrate why including a reward predictor tightens our \cs{}s for large $t$, potentially at the expense of tightness for smaller $t$. Notice that truncating the reward predictor at $k_t / w_t$ simply restricts the tuning parameter $\lambda_t^L(\candpolval)$ to lie in $[0, (\candpolval + k_t)^{-1})$ instead of $[0, \candpolval^{-1})$ for importance weighting. Without dwelling on the details too much, it is known that smaller values of $\lambda_t$ correspond to \cs{}s and \ci{}s being tighter for larger $t$ --- e.g. the role that $\lambda$ plays in Hoeffding's \ci{}s looks like $\sqrt{\log (2/\alpha)/ 2n}$ \citep{hoeffding_probability_1963} --- but we refer the reader to papers on \cs{}s for more in-depth discussions \citep{howard2018uniform,waudby2020estimating}. The important takeaway for our purposes here, is that larger $k_t$ corresponds to more variance adaptivity via double robustness, but does more to restrict the \cs{}s tightness at small $t$. Nevertheless, this tradeoff is clearly worth it in some cases (see \cref{fig:iw-vs-dr}).
  
  We note that the idea to truncate $\widehat r_t$ based on $k_t / w_t$ was inspired by the so-called ``reduced-variance'' estimators of \citet{zimmert2019connections,zimmert2021tsallis}. However, their reduced-variance estimators are slightly different since they multiply by an indicator $\1(w_i \leq \eta)$ for some $\eta \geq 0$, which sends the reward predictor to zero for large importance weights, whereas ours only truncates the reward predictor.
\end{remark}

\begin{remark}[Mirroring trick for upper \cs{}s]\label{remark:mirroring-trick}
Notice that in~\cref{proposition:kmr} and~\cref{theorem:dr-fixed-policy-value}, upper \cs{}s for $\polval$ were obtained by importance weighting $1-R_t$ rather than $R_t$ to obtain a \emph{lower} \cs{} for $1-\polval$, which was then translated into an \emph{upper} \cs{} for $\polval$. This ``mirroring trick'' --- first used in the OPE setting by \citet{thomas2015high} to the best of our knowledge --- applies to all of the results that follow, but for the sake of succinctness, we will only explicitly write the lower \cs{}s.
\end{remark}

\subsection{Closed-form confidence sequences}\label{section:prpl-cs}

In~\cref{theorem:dr-fixed-policy-value}, we derived \cs{}s for the policy value that generalize and improve on prior work \citep{karampatziakis2021off}. These bounds are empirically tight and can be computed efficiently, but are not closed-form, which may be desirable in practice. In this section, we derive a simple, closed-form, variance-adaptive \cs{} for the fixed policy value $\polval := \EE_{\pi}(R_t \mid \history_{t-1})$. Let
\begin{equation}
    \xi_t := \frac{\phi_t^\DRL}{k_t + 1},~~~ \text{and}~~~ \widehat \xi_{t-1} := \left ( \frac{1}{t-1} \sum_{i=1}^{t-1} \xi_i \right ) \land \frac{1}{k_t + 1}.
\end{equation}
With the above notation in mind, we are ready to state the main result of this section.

\begin{proposition}[Closed-form predictable plug-in \cs{} for $\polval$]\label{proposition:prpl-cs}
  Given contextual bandit data $\infseqt{X_t, A_t, R_t}$ with $[0, 1]$-valued rewards, choose nonnegative and predictable tuning parameters $\infseqt{k_t}$,  and define $\infseqt{\lambda_t}$ as
  \begin{equation}
    \label{eq:prplcs}
    \lambda_t := \sqrt{\frac{2 \log(1/\alpha)}{\widehat \sigma_{t-1}^2 t\log (1 + t)}} \land c,~~\widehat \sigma_t^2 := \frac{\sigma_0^2 + \sum_{i=1}^t (\xi_i - \widebar \xi_{i})^2}{t + 1},~~\widebar \xi_t := \left ( \frac{1}{t}\sum_{i=1}^t \xi_i\right ) \land \frac{1}{k_t+1},
  \end{equation}
  where $c \in (0, 1)$ is some truncation parameter (reasonable values of which may include 1/2 or 3/4); and $\xi_0 \in (0, 1)$ and $\sigma_0^2 > 0$  are some user-chosen parameters that can be thought of as prior guesses for the mean and variance of $\xi$, respectively. Then,
  \begin{equation}
    L_t^\PrPl := \left ( \frac{\sum_{i=1}^t \lambda_i \xi_i}{\sum_{i=1}^t \lambda_i / (k_i + 1)} - \frac{\log(1/\alpha) + \sum_{i=1}^t \left ( \xi_i - \widehat \xi_{i-1} \right )^2\psi_E(\lambda_i)}{\sum_{i=1}^t \lambda_i / (k_i + 1)} \right ) \lor 0
  \end{equation}
  forms a lower $(1-\alpha)$-\cs{} for $\polval$. An analogous upper \cs{} $\infseqt{U_t^\PrPl}$ follows by mirroring (\cref{remark:mirroring-trick}).
\end{proposition}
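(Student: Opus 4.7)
The plan is to rewrite the event $\{\exists t: L_t^\PrPl > \polval\}$ as a boundary-crossing event for a single nonnegative supermartingale indexed by $\polval$ and then invoke Ville's inequality. First, I would record the elementary bounds that drive the argument. Since the reward predictor is truncated at $k_t/w_t$ and $R_t \in [0,1]$, definition \eqref{eq:dr-outcomes-lower} gives $\phi_t^\DRL \geq -k_t$ with conditional mean $\EE[\phi_t^\DRL \mid \history_{t-1}] = \polval$, so $\xi_t = \phi_t^\DRL/(k_t+1)$ satisfies $\xi_t \geq -k_t/(k_t+1)$ and $\EE[\xi_t \mid \history_{t-1}] = \polval/(k_t+1)$. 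Letting $Y_t := \xi_t - \polval/(k_t+1)$ gives $\EE[Y_t \mid \history_{t-1}] = 0$ and $Y_t \geq -1$, and since $\widehat\xi_{t-1}$ is explicitly truncated at $1/(k_t+1)$, the discrepancy $\xi_t - \widehat\xi_{t-1}$ also lies in $[-1, \infty)$.

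Next, a purely algebraic rearrangement shows $\{L_t^\PrPl > \polval\} \subseteq \{M_t > 1/\alpha\}$, where
\[
M_t := \exp\!\left(\sum_{i=1}^t \lambda_i Y_i \;-\; \sum_{i=1}^t \psi_E(\lambda_i)\,(\xi_i - \widehat\xi_{i-1})^2\right), \qquad M_0 := 1.
\]
Indeed, on $\{L_t^\PrPl = 0\}$ the inclusion is automatic since $\polval \geq 0$; on the complement, multiplying through by $D_t := \sum_{i \le t}\lambda_i/(k_i+1) > 0$ and exponentiating produces exactly $M_t > 1/\alpha$. Hence it suffices to show that $M_t$ is a nonnegative supermartingale with respect to $\history$, after which Ville's inequality gives $\PP(\exists t: M_t \geq 1/\alpha) \leq \alpha$ and the coverage claim follows.

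The main technical obstacle is therefore the one-step bound $\EE[M_t/M_{t-1} \mid \history_{t-1}] \leq 1$, complicated by the predictable plug-in: the quadratic correction $(\xi_t - \widehat\xi_{t-1})^2$ is centered at an $\history_{t-1}$-measurable proxy for $\polval/(k_t+1)$ rather than at the true conditional mean. I would handle this in two elementary steps. First, the empirical-Bernstein inequality
\[
\exp\!\big(\lambda z - \psi_E(\lambda)\,z^2\big) \;\leq\; 1 + \lambda z \qquad \text{for } z \geq -1,\ \lambda \in [0,1),
\]
with $\psi_E(\lambda) := -\log(1-\lambda) - \lambda$, applied to $z = \xi_t - \widehat\xi_{t-1}$ and $\lambda = \lambda_t \in [0,c] \subset [0,1)$, yields after taking conditional expectations
\[
\EE\!\left[e^{\lambda_t(\xi_t - \widehat\xi_{t-1}) - \psi_E(\lambda_t)(\xi_t - \widehat\xi_{t-1})^2} \,\bigm|\, \history_{t-1}\right] \;\leq\; 1 + \lambda_t\!\left(\frac{\polval}{k_t+1} - \widehat\xi_{t-1}\right).
\]
Second, multiplying both sides by the $\history_{t-1}$-measurable constant $\exp\!\big(\lambda_t(\widehat\xi_{t-1} - \polval/(k_t+1))\big)$ recovers $M_t/M_{t-1}$ on the left and produces $e^{-x}(1+x)$ on the right with $x := \lambda_t(\polval/(k_t+1) - \widehat\xi_{t-1})$, which is $\leq 1$ for every $x \in \RR$. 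This is exactly the one-step supermartingale inequality, so Ville's inequality completes the proof that $\PP(\forall t: L_t^\PrPl \leq \polval) \geq 1-\alpha$. The upper \cs{} $U_t^\PrPl$ follows by applying the identical argument to the process built from $\phi_t^\DRU$ combined with the mirroring trick of \cref{remark:mirroring-trick}.
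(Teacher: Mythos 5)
Your proposal is correct and follows essentially the same route as the paper: it builds the predictable plug-in empirical Bernstein process $M_t$, proves the one-step supermartingale bound via Fan's inequality $e^{\lambda z - \psi_E(\lambda)z^2}\leq 1+\lambda z$ for $z\geq -1$ together with $(1+x)e^{-x}\leq 1$ (this is exactly the paper's Lemma on test supermartingales, which you have simply inlined), and then applies Ville's inequality and inverts algebraically to obtain $L_t^\PrPl$. The only cosmetic differences are that the paper cites the lemma rather than reproving it and performs the inversion in the equivalent opposite direction.
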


\begin{figure}[!htbp]
  \centering
  \includegraphics[width=\textwidth]{./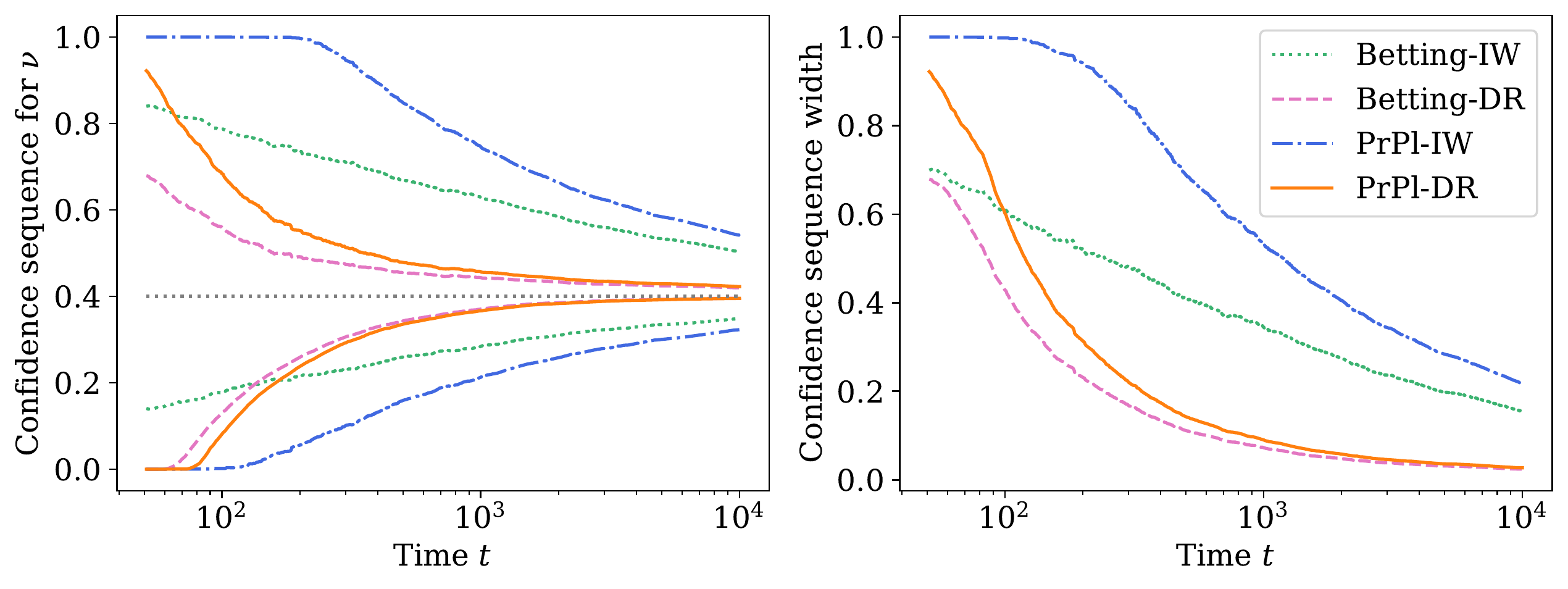}
  \caption{Betting-based (\cref{theorem:dr-fixed-policy-value}) and predictable plug-in (PrPl) (\cref{proposition:prpl-cs}) \cs{}s for $\polval$ with both importance-weighted (IW) and doubly robust (DR) variants. Notice that for both IW and DR \cs{}s, the betting-based approach of \cref{theorem:dr-fixed-policy-value} outperforms the PrPl \cs{}s. Nevertheless, the closed-form PrPl \cs{}s are simpler to implement, and can still benefit from doubly robust variance adaptation.}
  \label{fig:prpl-vs-betting}
\end{figure}
The proof can be found in \cref{proof:prpl-cs} and relies on Ville's inequality \citep{ville1939etude} applied to a predictable plug-in empirical Bernstein supermartingale similar to that of \citet[Section 3.2]{waudby2020estimating} but with a variant of Fan's inequality \citep{fan2015exponential} for lower-bounded random variables with upper-bounded means. As seen in \cref{fig:prpl-vs-betting}, the betting \cs{}s of \cref{theorem:dr-fixed-policy-value} still have better empirical performance than the closed-form predictable plug-in \cs{}s of \cref{proposition:prpl-cs}, but the latter are more computationally and analytically convenient, and are valid under the same set of assumptions. A similar phenomenon was observed by \citet{waudby2020estimating} for bounded random variables (outside the context of OPE). In the on-policy setting with all importance weights set to 1 and without a reward predictor, \cref{proposition:prpl-cs} recovers \citet[Theorem 2]{waudby2020estimating}.

It is important not to confuse $\widehat \xi_t$ with $\widebar \xi_t$. The difference between them may seem minor since the former simply has access to one less data point than the latter, but they play two very different roles in \cref{proposition:prpl-cs}: $\widehat \xi_{t-1}$ is a \emph{predictable} sample mean that shows up in the width of $L_t^\PrPl$ explicitly, and its predictability is fundamental to the proof technique, while $\widebar \xi_t$ is just used as a tool to obtain better estimates of $\widehat \sigma_t^2$ so that they can be plugged in to the tuning parameters $\lambda_t$. Consequently, $\widehat \xi_t$ can be found in \cs{}s that rely on a similar proof technique (such as \cref{theorem:conjmix-eb}), while $\widebar \xi_t$ can be found in other \cs{}s that make use of predictable tuning parameters (such as \cref{theorem:dr-fixed-policy-value}).

\subsection{Fixed-time confidence intervals}
While this paper is focused on time-uniform \cs{}s for OPE, our methods also naturally give rise to fixed-time \ci{}s that are \emph{not} anytime-valid but can still benefit from our general techniques. In this section, we will briefly discuss what minor modifications are needed to derive sharp fixed-time instantiations of our otherwise time-uniform bounds. We will also compare our fixed-time \ci{}s to the \ci{}s of \citet{thomas2015high}, but this comparison is by no means comprehensive. Indeed, our goal is not to show that our methods are ``better'' than prior work, even if some simulations may suggest this --- instead, we aim to provide the reader with some context as to how our fixed-time instantiations fit within the broader literature on \ci{}s for OPE\@.

\paragraph{Confidence intervals for policy values.} We begin by deriving fixed-time analogues of the \cs{}s for $\polval$ presented in \cref{theorem:dr-fixed-policy-value} and \cref{proposition:prpl-cs} --- the former being a ``betting-style'' \cs{} that is very tight in practice, and the latter being a closed-form predictable plug-in (PrPl) \cs{} that is slightly more analytically and computationally convenient. In both cases, our suggested modification is essentially the same: choose a predictable sequence $\seq{\lambda_t}{t}{1}{n}$ that is tuned for the desired sample size $n$ --- to be elaborated on shortly --- and take the intersection of the implicit \cs{} $(C_t)_{t=1}^n$ that is formed from times 1 through $n$. Concretely, define the predictable sequence $(\lambda_t)_{t=1}^n$ given by
\begin{equation}
  \dot \lambda_{t,n} := \sqrt{\frac{2\log(1/\alpha)}{n\widehat \sigma_{t-1}^2}},
\end{equation}
where $\widehat \sigma_t^2$ is given as in \eqref{eq:betting-strategy-prpl-sigmahat2}. Then, we have the following corollary for betting-style \ci{}s for $\polval$.
\begin{corollary}\label{corollary:betting-ci}
  Let $(X_t, A_t, R_t)_{t=1}^n$ be a finite sequence of contextual bandit data with $[0, 1]$-valued rewards and define $(\lambda_t^L(\candpolval))_{t=1}^n$ and $(\lambda_t^U(\candpolval))_{t=1}^n$ by
  \begin{equation}\label{eq:betting-ci-lambdas}
    \lambda_t^L(\candpolval) := \dot \lambda_{t,n} \land \frac{c}{k_t + \candpolval}, ~~~\text{and}~~~\lambda_t^U(\candpolval) := \dot \lambda_{t,n} \land \frac{c}{k_t + (1- \candpolval)},
  \end{equation}
  where $c \in (0, 1)$ is some truncation scale as in~\cref{theorem:dr-fixed-policy-value}.
  Let $L_t^\DR$ and $U_t^\DR$ be as in~\eqref{eq:dr-lower} and~\eqref{eq:dr-upper}. Then,
  \begin{equation}\label{eq:betting-ci}
    \dot L_n := \max_{1 \leq t \leq n} L_t^\DR ~~~\text{and}~~~ \dot U_n := \min_{1 \leq t \leq n} U_t^\DR
  \end{equation}
  form lower and upper $(1-\alpha)$-\ci{}s for $\polval$, respectively, meaning $\PP(\nu \in [\dot L_n, \dot U_n]) \geq 1-\alpha$.
\end{corollary}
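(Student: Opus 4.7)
The plan is to reduce \cref{corollary:betting-ci} directly to \cref{theorem:dr-fixed-policy-value}. The fundamental observation is that any lower confidence sequence automatically yields a lower confidence interval at each fixed $n$, and moreover this one-sided CI can be sharpened by taking a running maximum over $t \le n$ without breaking coverage; symmetrically for the upper side.

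The first step is to verify that the choices $\lambda_t^L(\candpolval)$ and $\lambda_t^U(\candpolval)$ in \eqref{eq:betting-ci-lambdas} satisfy the hypotheses of \cref{theorem:dr-fixed-policy-value}. Predictability is immediate: $\dot \lambda_{t,n}$ is a deterministic function of $\widehat \sigma_{t-1}^2$, which depends only on $\xi_1, \dots, \xi_{t-1}$ and the predictable truncation levels $k_1, \dots, k_{t-1}$, so it is $\history_{t-1}$-measurable; since $k_t$ is predictable by assumption, the minimum with $c/(k_t + \candpolval)$ (respectively $c/(k_t + (1-\candpolval))$) remains predictable in $t$ for every fixed $\candpolval$. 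The boundedness constraints are also trivial: because $c \in (0,1)$,
\[
\lambda_t^L(\candpolval) \;\le\; \frac{c}{k_t + \candpolval} \;<\; \frac{1}{k_t + \candpolval},
\]
and analogously $\lambda_t^U(\candpolval) < (k_t + (1-\candpolval))^{-1}$, which are precisely the conditions required by \cref{theorem:dr-fixed-policy-value}.

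Given this, \cref{theorem:dr-fixed-policy-value} immediately yields that $(L_t^\DR)_{t \ge 1}$ is a lower $(1-\alpha)$-CS and $(U_t^\DR)_{t \ge 1}$ is an upper $(1-\alpha)$-CS for $\polval$. Unpacking the definition of a lower CS, $\PP(\forall t \in \NN,\ \polval \ge L_t^\DR) \ge 1-\alpha$, so a fortiori $\PP(\forall t \le n,\ \polval \ge L_t^\DR) \ge 1-\alpha$, which is exactly the statement $\PP(\polval \ge \dot L_n) \ge 1-\alpha$. The same argument applied to the upper CS gives $\PP(\polval \le \dot U_n) \ge 1-\alpha$. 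Conjoining the two events (or, if one insists on the literal two-sided probability bound $1-\alpha$, applying each at level $\alpha/2$ via a union bound) yields the concluding claim that $\polval \in [\dot L_n, \dot U_n]$ with the stated coverage.

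There is really no obstacle here; every step is a direct invocation of a previously established fact. The only conceptual point worth flagging is the tuning of $\dot\lambda_{t,n}$: it replaces the time-uniform penalty factor $t\log(1+t)$ in \eqref{eq:betting-strategy-prpl} by $n$. This is legitimate precisely because the validity of \cref{theorem:dr-fixed-policy-value} is indifferent to the scale of $\lambda_t^L(\candpolval)$ as long as predictability and the upper bound $(\candpolval + k_t)^{-1}$ are respected. Taking the running maximum $\dot L_n = \max_{t \le n} L_t^\DR$ therefore inherits the $(1-\alpha)$ coverage guarantee from the CS, while the fact that we have abandoned time-uniformity is what allows the bets to be tuned as if $n$ were known, producing a tighter fixed-sample bound than evaluating the time-uniform CS at $t = n$.
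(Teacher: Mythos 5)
Your proof is correct and follows essentially the same route as the paper: \cref{corollary:betting-ci} is treated as an immediate consequence of \cref{theorem:dr-fixed-policy-value}, since the choice \eqref{eq:betting-ci-lambdas} is predictable and respects the required bound $\lambda_t^L(\candpolval) < (\candpolval + k_t)^{-1}$, and the time-uniform guarantee over $t \le n$ is exactly what licenses the running maximum and minimum in \eqref{eq:betting-ci}. Your aside about conjoining the two one-sided bounds (or running each at level $\alpha/2$ for the literal two-sided display) is a reasonable reading of the same union-bound caveat the paper itself makes for two-sided statements.
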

\begin{figure}[!htbp]
  \centering
  \includegraphics[width=\textwidth]{./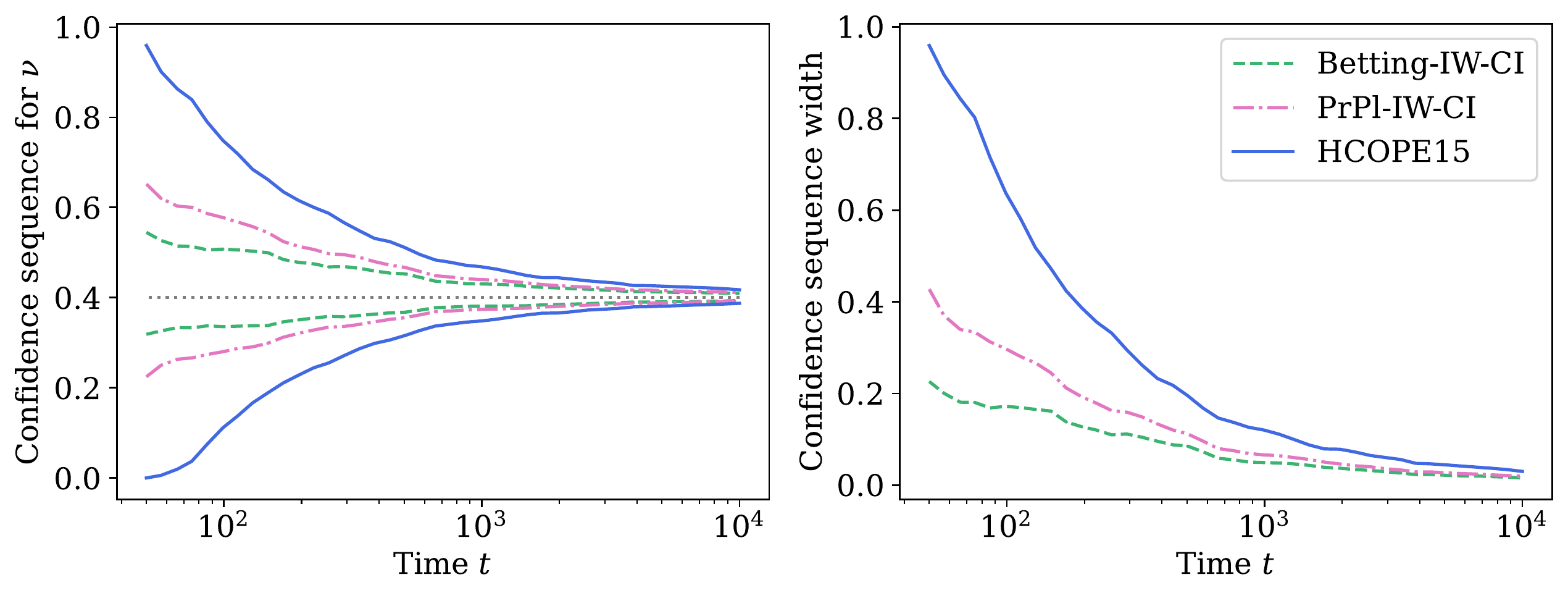}
  \caption{Fixed-time 90\% confidence intervals for $\polval$ using three different methods: a betting-based \ci{} (\cref{corollary:betting-ci}), a predictable plug-in (PrPl) \ci{} (\cref{corollary:prpl-ci}), and those presented in a paper entitled ``High-confidence off-policy evaluation'' (HCOPE15) by \citet{thomas2015high}. Notice that the betting-based \ci{}  outperforms the closed-form PrPl \ci{}, which itself significantly outperforms the bounds in \citet{thomas2015high}.}
  \label{fig:fixed-time-cis}
\end{figure}
\cref{corollary:betting-ci} is an immediate consequence of~\cref{theorem:dr-fixed-policy-value} where the sequence of tuning parameters was chosen to tighten the \ci{} for the sample size $n$. This particular choice of $\seq{\lambda_t^L(\candpolval)}{t}{1}{n}$ is inspired by the fact that the product in \eqref{eq:dr-lower} resembles an exponential supermartingale whose resulting \ci{} can be tightened using tuning parameters that are well-estimated by \eqref{eq:betting-ci-lambdas}. For more details, we refer the reader to \citet[Section 3]{waudby2020estimating}. The fact that the maximum and minimum can be taken in~\eqref{eq:betting-ci} follows from the fact that $(L_t^\DR)_{t=1}^n$ satisfies $\PP \left (\forall t \in \{1, \dots, n\},\ \polval \geq L_t^\DR \right ) \geq 1-\alpha$, and similarly for the upper \ci{} $\dot U_n$. \cref{fig:fixed-time-cis} demonstrates what these \ci{}s may look like in practice.

Similar to how \cref{corollary:betting-ci} is a fixed-time instantiation of \cref{theorem:dr-fixed-policy-value}, the following corollary is a fixed-time instantiation of the closed-form PrPl \cs{}s of \cref{proposition:prpl-cs}.
\begin{corollary}\label{corollary:prpl-ci}
  Let $(X_t, A_t, R_t)_{t=1}^n$ be a finite sequence of contextual bandit data with $[0, 1]$-valued rewards and define $(\lambda_t)_{t=1}^n$ by
  \begin{equation}
    \lambda_t := \dot \lambda_{t,n} \land c,
  \end{equation}
  with $c$ chosen as in \cref{proposition:prpl-cs}. Then, with $L_t^\PrPl$ and $U_t^\PrPl$ defined in \cref{proposition:prpl-cs}, we have that
  \begin{equation}
    \dot L_n^\PrPl := \max_{1 \leq t \leq n} L_t^\PrPl ~~~\text{and}~~~ \dot U_n^\PrPl := \min_{1 \leq t \leq n} U_t^\PrPl
  \end{equation}
  form lower and upper $(1-\alpha)$-\ci{}s for $\polval$, respectively, meaning $\PP( \nu \in [\dot L_n^\PrPl, \dot U_n^\PrPl] ) \geq 1-\alpha$.
\end{corollary}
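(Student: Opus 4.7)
The plan is to leverage the fact that \cref{proposition:prpl-cs} (as its proof in \cref{proof:prpl-cs} will show) does not genuinely require the specific $\lambda_t$ from \eqref{eq:prplcs}: the underlying predictable plug-in empirical Bernstein construction yields a nonnegative supermartingale for \emph{any} predictable $[0, c]$-valued sequence $\infseqt{\lambda_t}$, and the resulting lower $(1-\alpha)$-\cs{} remains valid by Ville's inequality. Thus the first step is to verify that substituting the fixed-time tuning $\lambda_t := \dot\lambda_{t,n} \land c$ in place of the time-uniform choice preserves all the ingredients used in the proof of \cref{proposition:prpl-cs}, namely: (i) $\lambda_t$ is predictable since $\widehat\sigma_{t-1}^2$ depends only on $\xi_1, \dots, \xi_{t-1}$ and $n$ is fixed, and (ii) $\lambda_t \in [0, c]$ by construction.

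With this observation in hand, \cref{proposition:prpl-cs} applied with the new tuning parameters immediately yields that $\infseqt{L_t^\PrPl}$ is a lower $(1-\alpha)$-\cs{} for $\polval$, and in particular restricts to $(L_t^\PrPl)_{t=1}^n$. By the definition of a \cs{} as in \eqref{eq:cs},
\begin{equation*}
    \PP\left(\forall t \in \{1, \dots, n\},\ \polval \geq L_t^\PrPl\right) \geq 1-\alpha,
\end{equation*}
which is equivalent to $\PP(\polval \geq \max_{1 \leq t \leq n} L_t^\PrPl) \geq 1-\alpha$, establishing the lower \ci{}. The upper \ci{} $\dot U_n^\PrPl$ follows by the same argument applied to the mirrored \cs{} of \cref{remark:mirroring-trick}. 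A union bound then combines the one-sided bounds into a two-sided $(1-\alpha)$-\ci{} at the usual cost of replacing $\alpha$ with $\alpha/2$ in each one-sided construction.

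The only mildly subtle step is the first one: confirming that the proof of \cref{proposition:prpl-cs} (an application of Ville's inequality to a supermartingale built from Fan-type exponential inequalities for lower-bounded random variables) does not require the particular functional form of $\lambda_t$ given in \eqref{eq:prplcs}, only its predictability and boundedness by $c$. Once this is noted, the corollary reduces to a one-line intersection argument; there is no genuine obstacle beyond bookkeeping.
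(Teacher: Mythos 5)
Your proposal is correct and matches the paper's own justification, which simply notes that \cref{corollary:prpl-ci} is an immediate consequence of \cref{proposition:prpl-cs} instantiated with the fixed-time tuning $\lambda_t = \dot\lambda_{t,n} \land c$ (valid since the supermartingale argument via \cref{lemma:bernsm} only needs $\infseqt{\lambda_t}$ predictable and bounded below $1$) together with an intersection of the implicit \cs{} over $t = 1, \dots, n$. Your closing remark about splitting $\alpha$ into $\alpha/2$ per side for the two-sided interval is the standard reading (cf.\ the union-bound comment after \cref{proposition:kmr}), so there is no gap.
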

\cref{corollary:prpl-ci} is an immediate consequence of \cref{proposition:prpl-cs} instantiated for a different choice of $\lambda_t$ and with an intersection being taken over the implicit \cs{} from times 1 through $n$.

While the methods of this section improve on past work both theoretically and empirically, each of our results thus far have assumed that $\polval \equiv \EE_{\pi}(R_t)$ is fixed and does not change over time, an assumption that we may not always wish to make in practice (e.g.~if the environment is nonstationary). Fortunately, it is still possible to design \cs{}s that capture an interpretable parameter: the \emph{time-varying average policy value thus far}. However, we will need completely different supermartingales to achieve this, which we outline in the following section.

\section{Inference for time-varying policy values}\label{section:time-varying}
Let us now consider the more challenging task of performing anytime-valid off-policy inference for a time-varying average policy value. Concretely, suppose that the value of the $[0, 1]$-bounded reward $R_t$ under policy $\pi$ is given by $\polval_t := \EE_{\pi}(R_t \mid \history_{t-1}) \in [0, 1]$, and hence $(\polval_t)_{t=1}^\infty$ is now a \emph{sequence} of conditional policy values. Our goal is to derive \cs{}s for $(\polvalt)_{t=1}^\infty$ where $\polvalt := \frac{1}{t} \sum_{i=1}^t \polval_i$ is the \emph{average conditional policy value so far}. In addition to satisfying desiderata 1--5 in \cref{section:desiderata} our \cs{}s will impose no restrictions on how $(\polval_t)_{t=1}^\infty$ changes over time. Unfortunately, the techniques of \cref{proposition:kmr} and \cref{theorem:dr-fixed-policy-value} will not work here because their underlying test statistics
cannot be written explicitly as functions of a candidate average policy value $\candpolvalt := \frac{1}{t}\sum_{i=1}^t \polval_i$, but only of the entire candidate tuple $(\polval_1 \dots, \polval_t)$. To remedy this, we will rely on test statistics that are functions of a candidate average $\candpolvalt$ to derive \cs{}s for $\polvalt$ in Theorems~\ref{theorem:conjmix-eb} and~\ref{proposition:lil-eb}. An empirical demonstration of the failure of \cref{theorem:dr-fixed-policy-value} juxtaposed with the remedy provided by \cref{theorem:conjmix-eb} can be seen in the left-hand side of \cref{fig:eb_vs_others} and a comparison between \cref{theorem:conjmix-eb} and \cref{proposition:lil-eb} can be seen in the right-hand side of the same figure.

We will present two \cs{}s for $\polvalt$: (1) the ``empirical Bernstein'' \cs{} in \cref{theorem:conjmix-eb} whose underlying supermartingale is constructed using Robbins' method of mixtures \citep{robbins1970statistical}, and (2) the ``iterated logarithm'' \cs{} in \cref{proposition:lil-eb} which uses the stitching technique. Both yield time-uniform, nonasymptotically valid bounds, and are easy to compute. However, the former tends to have better empirical performance in finite samples, while the latter achieves the (optimal) rate of convergence, matching the law of the iterated logarithm. Nevertheless, both boundaries shrink to zero-width at a rate of $\widetilde O(\sqrt{V_t} / t)$ \citep{howard2018uniform}. Here, $\widetilde O(\cdot)$ means $O(\cdot)$ up to logarithmic factors.

In order to write down the empirical Bernstein \cs{}, we first define the scaled doubly robust pseudo-outcomes $\xi_t := \phi_t / (1+k)$ where $\phi_t \equiv \phi_t^\DRL$ is given in \eqref{eq:dr-outcomes-lower} and with all $k_t$ equal to a fixed $k$. Then, define the corresponding centered sum process $\infseqt{S_t(\candpolvalt)}$ and variance process $\infseqt{V_t}$ given by
\begin{align}
  &S_t(\candpolvalt) := \sum_{i=1}^t \xi_i - \frac{t\candpolvalt}{1 + k}, ~~\text{and} \label{eq:sum-process}\\
  &V_t := \sum_{i=1}^t (\xi_i - \widehat \xi_{i-1})^2, ~~\text{where } \widehat \xi_{t} :=  \left ( \frac{1}{t} \sum_{i=1}^{t} \xi_i \right )\land \frac{1}{1 + k}, \label{eq:variance-process}
\end{align}
and $\xi_0 \in [0, (1+k)^{-1}]$ is chosen by the user. With this setup and notation in mind, we are ready to state the empirical Bernstein \cs{} for $\polvalt$.

\begin{theorem}[Empirical Bernstein confidence sequence for $\polvalt$]\label{theorem:conjmix-eb}
  Let $\infseqt{X_t, A_t, R_t}$ be an infinite sequence of contextual bandit data with $[0, 1]$-valued rewards generated by the sequence of policies $\infseqt{h_t}$, and let $\infseqt{S_t(\candpolvalt)}$ and $\infseqt{V_t}$ be the centered sum and variance processes as in~\eqref{eq:sum-process} and~\eqref{eq:variance-process}. Then for any $\rho > 0$,
  \begin{equation}\label{eq:gamma-exponential-nsm}
M_t^\EB(\polvalt) := \left(\frac{\rho^\rho e^{-\rho}}{\Gamma(\rho) - \Gamma(\rho, \rho)}\right) \left(\frac{1}{V_t + \rho}\right) \onefone(1, V_t + \rho + 1, S_t(\polvalt) + V_t + \rho)
  \end{equation}
  forms a nonnegative supermartingale starting at one, where $\onefone(\cdot, \cdot ,\cdot )$ is Kummer's confluent hypergeometric function and $\Gamma(\cdot, \cdot)$ is the upper incomplete gamma function. Consequently,
  \begin{equation}
    L_t^\EB := \inf \left \{ \candpolvalt \in [0, 1] : M_t^\EB(\candpolvalt) < 1/\alpha \right \}
  \end{equation}
  forms a lower $(1-\alpha)$-\cs{} for $\polvalt$, meaning $\PP(\forall t \in \NN,\ \polvalt \geq L_t^\EB) \geq 1-\alpha$. Similarly, an upper \cs{} can be derived by using $\phi_t^\DRU$ defined in \eqref{eq:dr-outcomes-upper} and employing the mirroring trick described in \cref{remark:mirroring-trick}.
\end{theorem}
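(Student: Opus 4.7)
The target is to show that $M_t^\EB(\polvalt)$ is a nonnegative supermartingale with $M_0^\EB = 1$; once this is in hand, Ville's inequality gives $\PP\bigl(\exists t : M_t^\EB(\polvalt) \geq 1/\alpha\bigr) \leq \alpha$, and inverting in the candidate $\candpolvalt$ yields the lower \cs{} $L_t^\EB$. My plan is to construct $M_t^\EB$ as a Robbins-style mixture of exponential supermartingales indexed by a ``bet'' $\lambda \in (0,1)$. Setting $X_i := \xi_i - \polval_i/(1+k)$, the doubly robust construction in~\eqref{eq:dr-outcomes-lower} together with the tower property gives $\EE[X_i \mid \history_{i-1}] = 0$, while truncation of $\widehat r_t$ at $k/w_i$ yields $\xi_i \geq -k/(1+k)$, and hence $X_i \geq -1$. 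Crucially, when the candidate equals the true running average, $S_t(\polvalt) = \sum_{i=1}^t X_i$, so $S_t(\polvalt)$ is itself a martingale and $V_t$ plays the role of a predictable quadratic-variation surrogate.

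The heart of the proof is to show that for each fixed $\lambda \in [0,1)$ the process
\[ \widetilde M_t(\lambda) := (1-\lambda)^{V_t}\exp\!\bigl(\lambda(S_t(\polvalt) + V_t)\bigr) = \exp\!\bigl(\lambda S_t(\polvalt) - V_t f(\lambda)\bigr), \qquad f(\lambda) := -\log(1-\lambda) - \lambda, \]
is a nonnegative supermartingale starting at one. Per-increment, this reduces to verifying the Fan-style exponential inequality
\[ \EE\!\bigl[\exp\bigl(\lambda X_i - (\xi_i - \widehat\xi_{i-1})^2 f(\lambda)\bigr) \,\big|\, \history_{i-1}\bigr] \leq 1 \]
for the one-sidedly bounded martingale difference $X_i \geq -1$. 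This is where I expect the main obstacle: the classical Fan bound produces the same inequality but with the ``ideal'' square $X_i^2 = (\xi_i - \polval_i/(1+k))^2$ in place of the predictable surrogate $(\xi_i - \widehat\xi_{i-1})^2$, so I must justify the substitution. The key is that $\widehat\xi_{i-1}$ is $\history_{i-1}$-measurable and truncated at $1/(1+k)$; expanding $(\xi_i - \widehat\xi_{i-1})^2$ around the true conditional mean produces a cross term that integrates out under $\history_{i-1}$ and a nonnegative predictable correction, so the Fan bound transfers, mirroring the predictable-variance trick used by \citet{waudby2020estimating} for $[0,1]$-bounded observations.

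Given this family of fixed-$\lambda$ supermartingales, I then integrate against the density $\pi_\rho(\lambda) \propto (1-\lambda)^{\rho-1} e^{\rho\lambda}$ on $(0,1)$ --- equivalently a truncated $\mathrm{Gamma}(\rho,\rho)$ in the variable $1-\lambda$ --- whose normalizing constant works out to $\rho^\rho e^{-\rho}/[\Gamma(\rho) - \Gamma(\rho,\rho)]$, so that $M_t^\EB := \int_0^1 \widetilde M_t(\lambda)\pi_\rho(\lambda)\,d\lambda$ automatically satisfies $M_0^\EB = 1$. Tonelli makes $M_t^\EB$ a nonnegative supermartingale, and direct evaluation of $\int_0^1 (1-\lambda)^{V_t + \rho - 1} e^{\lambda(S_t(\polvalt) + V_t + \rho)}\,d\lambda$ via the confluent-hypergeometric identity $\onefone(1, b+1, z) = b\,z^{-b} e^z \gamma(b, z)$ (a special case of Kummer's integral representation) reproduces exactly the closed form in~\eqref{eq:gamma-exponential-nsm}. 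Applying Ville's inequality to $M_t^\EB(\polvalt)$ and inverting in $\candpolvalt$ completes the proof.
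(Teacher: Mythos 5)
Your overall architecture is exactly the paper's: a Fan-style exponential supermartingale for each fixed $\lambda\in[0,1)$, a mixture over the truncated gamma density $\propto(1-\lambda)^{\rho-1}e^{\rho\lambda}$ (your normalizing constant and the evaluation of the mixture integral as $\tfrac{1}{V_t+\rho}\onefone(1,V_t+\rho+1,S_t+V_t+\rho)$ are correct), then Ville's inequality and inversion in $\candpolvalt$. The gap is precisely at the step you flag as the main obstacle, and your proposed resolution does not work as stated. Writing $\mu_i:=\polval_i/(1+k)$ and $\delta_i:=\widehat\xi_{i-1}-\mu_i$, expanding $(\xi_i-\widehat\xi_{i-1})^2=(\xi_i-\mu_i)^2-2(\xi_i-\mu_i)\delta_i+\delta_i^2$ puts all three terms \emph{inside the exponential}, so the conditional expectation does not factor and the zero-conditional-mean cross term cannot be ``integrated out''; likewise the predictable term $\delta_i^2$ only enters through $e^{-\delta_i^2\psi_E(\lambda)}$. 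Nor can you start from the classical Fan bound at $b=\xi_i-\mu_i$ and pass to the surrogate square by a pointwise comparison, since neither of $(\xi_i-\widehat\xi_{i-1})^2$ and $(\xi_i-\mu_i)^2$ dominates the other. So the per-increment inequality $\EE[\exp(\lambda X_i-(\xi_i-\widehat\xi_{i-1})^2\psi_E(\lambda))\mid\history_{i-1}]\le 1$ is asserted but not proved, and this is the heart of the theorem.

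The correct mechanism is the paper's \cref{lemma:bernsm}: apply Fan's pointwise inequality $e^{\lambda b-b^2\psi_E(\lambda)}\le 1+\lambda b$ (valid for $b\ge -1$, $\lambda\in[0,1)$) at the \emph{shifted} point $b=\xi_i-\widehat\xi_{i-1}$, not at the centered increment. This is where the structural assumptions actually enter: $\xi_i\ge -k/(1+k)$ together with the truncation $\widehat\xi_{i-1}\le 1/(1+k)$ gives $\xi_i-\widehat\xi_{i-1}\ge -1$, which is the operative one-sided bound (the lemma needs no lower bound on $\xi_i-\mu_i$ at all, so your emphasis on $X_i\ge-1$ points at the wrong condition). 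Multiplying the pointwise bound by $e^{\lambda\delta_i}$, taking $\EE[\cdot\mid\history_{i-1}]$ so that the mean-zero part of the \emph{linear} factor vanishes, and finishing with $(1-\lambda\delta_i)e^{\lambda\delta_i}\le 1$ yields exactly the claimed inequality; the rest of your proof then goes through as in the paper. Two smaller points: $V_t$ is adapted but not predictable (only $\widehat\xi_{i-1}$ is), so ``predictable quadratic-variation surrogate'' is a misnomer; and the identity $\onefone(1,b+1,z)=b\,z^{-b}e^z\gamma(b,z)$ requires care when $z=S_t+V_t+\rho<0$ --- the paper evaluates the mixture integral directly via Kummer's integral representation precisely to sidestep that subtlety.
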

The proof of~\cref{theorem:conjmix-eb} can be found in~\cref{proof:conjmix-eb} and relies on an inequality due to~\citet{fan2015exponential} along with a mixture supermartingale analogous to that of~\citet[Proposition 9]{howard2018uniform}. In the on-policy setting with all importance weights set to 1 and with no reward predictor (and hence $k = 0$), we have that \cref{theorem:conjmix-eb} recovers the gamma-exponential mixture \cs{} of \citet[Proposition 9]{howard2018uniform}.

The tuning parameter $\rho > 0$ effectively dictates the neighborhood of intrinsic time --- i.e.~the value of $V_t$ --- at which $L_t^\EB$ is tightest, and it is rather straightforward to choose $\rho > 0$ given this interpretation. Following \citet{howard2018uniform}, $\rho >0$ can be chosen to (approximately) tighten $L_t^\EB$ at $V_t = V^\star$ by setting
\begin{equation}
    \rho(V^\star) := \sqrt{\frac{-2\log \alpha + \log(-2\log\alpha + 1) }{V^\star}}.
\end{equation}
Nevertheless, $L_t^\EB$ forms a valid lower $(1-\alpha)$-\cs{} for $\polvalt$ regardless of how $\rho > 0$ is chosen, as long as this is done data-independently.

Readers familiar with gamma-exponential mixture supermartingales such as those in \citet{howard_exponential_2018,howard2018uniform} and \citet{choe2021comparing} may have expected to see a lower incomplete gamma function $\gamma(a,b)$ instead of $\onefone{(1, a,b)}$. Indeed, $\onefone(1,a,b)$ reduces to $\gamma(a,b)$ when $b \equiv S_t(\polvalt) + V_t + \rho$ is nonnegative, but unlike the lower incomplete gamma, $\onefone{}(1, a,b)$ is well-defined when $b < 0$. Writing \eqref{eq:gamma-exponential-nsm} in terms of a lower incomplete gamma would have required lower-bounding this term by a piece-wise function when $b \equiv S_t + V_t + \rho < 0$ as in \citet[Appendix C]{choe2021comparing}; see \cref{section:onefone-vs-lowerincgamma} for  details.

While $L_t^\EB$ is not a closed-form bound, it can be computed efficiently using root-finding algorithms, and it can also be shown to achieve an asymptotic width of $O(\sqrt{V_t \log V_t} / t)$ (the justifications in \citet{howard2018uniform} carry over to this scenario). While this rate is sufficient for deriving \cs{}s with strong empirical performance that shrink to zero-width, one can derive \cs{}s that achieve an improved rate of $O(\sqrt{V_t \log \log V_t} / t)$ using a different technique known as ``stitching''.

\begin{figure}[!htbp]
  \centering
  \includegraphics[width=0.49\textwidth]{./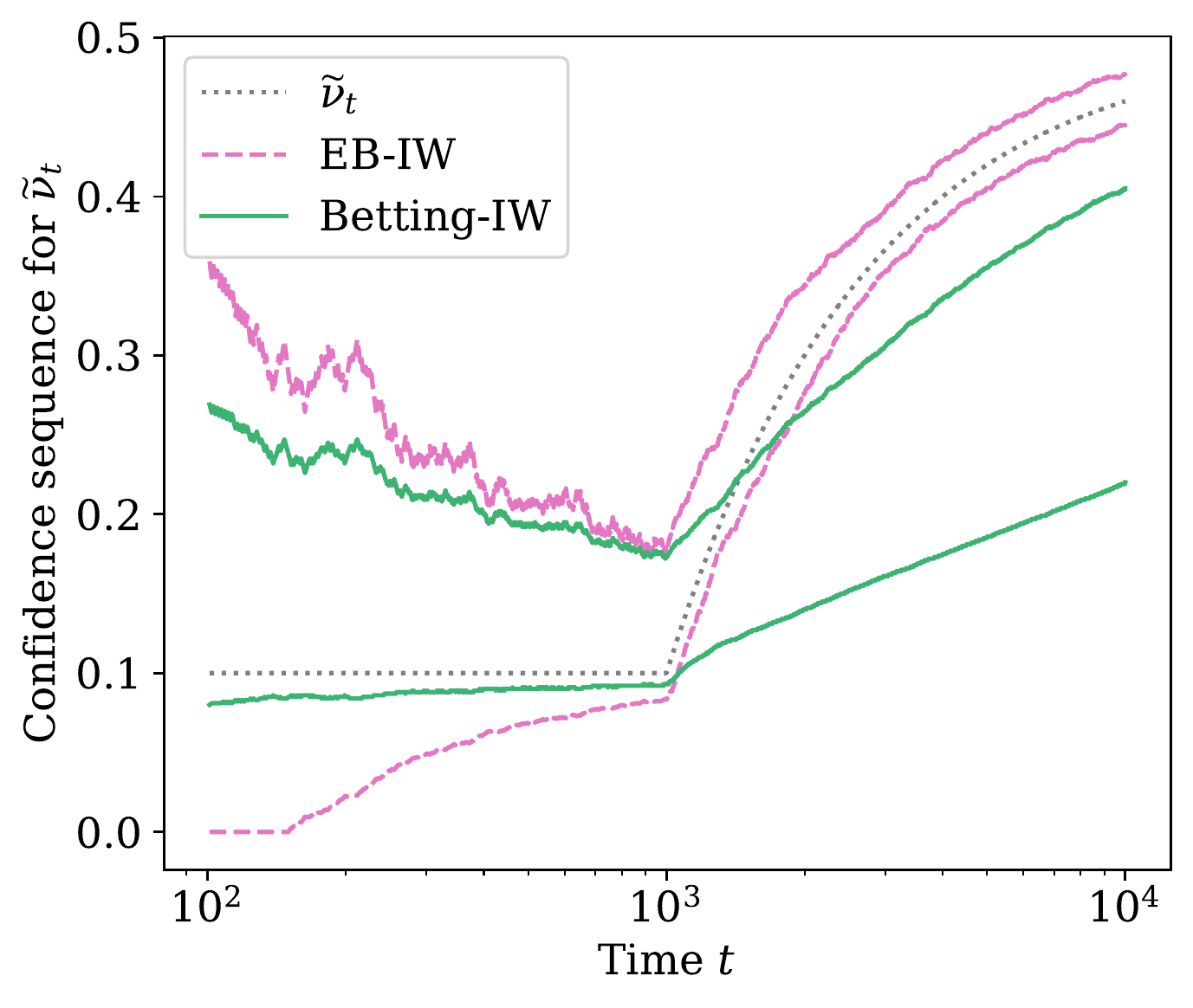}
  \includegraphics[width=0.49\textwidth]{./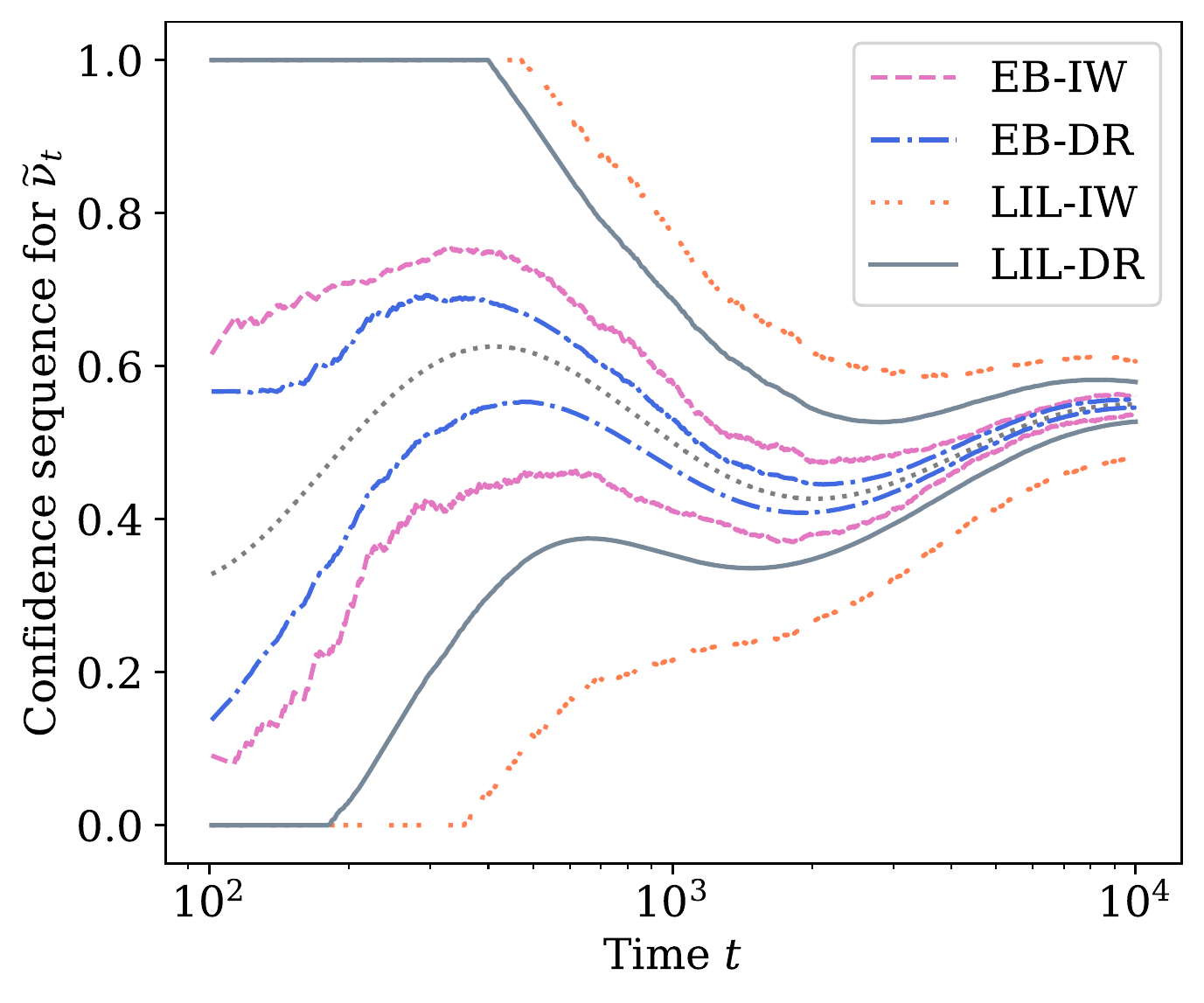}
  \caption{Various \cs{}s for the time-varying policy value $\polvalt$. The left-hand side plot illustrates that while the betting-style \cs{} of \cref{theorem:dr-fixed-policy-value} is tight when $\polvalt$ remains fixed, it fails to cover when $\polvalt$ changes (in this case, there is an abrupt change at $t = 1000$). The right-hand side plot illustrates how \cref{theorem:conjmix-eb} and \cref{proposition:lil-eb} compare, both using their importance-weighted (IW) and doubly robust (DR) variants. Notice that while LIL-IW and LIL-DR attain optimal rates of convergence, the empirical Bernstein \cs{}s (EB-IW and EB-DR) are much tighter in practice. In both cases, the DR variant outperforms the IW variant due to the reward being easy to predict in this particular example.}
  \label{fig:eb_vs_others}
\end{figure}

\begin{proposition}[Variance-adaptive iterated logarithm confidence sequence for $\polvalt$]\label{proposition:lil-eb}
  Let $\widebar V_t := V_t \lor 1$ where $V_t$ is given as in~\eqref{eq:variance-process}.
  Define the function $\ell_t(\alpha)$
  \begin{align}
    \ell_t(\alpha) &:= 2 \log \left ( \log \widebar V_t + 1 \right ) + \log \left ( \frac{1.65}{\alpha} \right ).
  \end{align}
  Then we have that under the same conditions as \cref{theorem:conjmix-eb},
  \begin{equation}\label{eq:lil-lower}
    L_t^\LIL := (k + 1) \left ( \frac{1}{t}\sum_{i=1}^t \xi_i - \frac{\sqrt{2.13 \ell_t(\alpha) \widebar V_t + 1.76 \ell_t(\alpha)^2}}{t} - \frac{1.33 \ell_t(\alpha)^2}{t} \right ) \lor 0 \\
  \end{equation}
  forms a lower $(1-\alpha)$-\cs{} for $\polvalt := \frac{1}{t} \sum_{i=1}^t \polval_i$, meaning $\PP(\forall t,\ \polvalt \geq L_t^\LIL) \geq 1-\alpha$. An analogous upper \cs{} $\infseqt{U_t^\LIL}$ can be derived using the mirroring trick of~\cref{remark:mirroring-trick}.
\end{proposition}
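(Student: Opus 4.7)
The plan is to reduce the claim to a stitched empirical Bernstein boundary applied to a centered martingale difference sequence. Define $D_i := \xi_i - \polval_i/(k+1)$, which is adapted to $\history$ with $\EE[D_i \mid \history_{i-1}] = 0$. Since $\phi_i^\DRL \geq -k$ by construction and $\polval_i \in [0,1]$, the crucial one-sided bound $D_i \geq -1$ holds. After multiplying through by $(k+1)$ and rearranging, the lower confidence inequality $\polvalt \geq L_t^\LIL$ is equivalent to
\[ \sum_{i=1}^t D_i \leq \sqrt{2.13\, \ell_t(\alpha)\, \widebar V_t + 1.76\, \ell_t(\alpha)^2} + 1.33\, \ell_t(\alpha)^2 \]
holding simultaneously for all $t \geq 1$.

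To establish this tail bound I would invoke the stitched empirical-Bernstein boundary of \citet{howard2018uniform}, whose numerical constants match those stated upon a specific choice of geometric stitching parameter. The two ingredients to assemble are: \emph{(i)} for each fixed $\lambda \in [0,1)$, Fan's inequality applied to $D_i \geq -1$ yields the canonical exponential supermartingale
\[ \exp\!\Bigl(\lambda \textstyle\sum_{i=1}^t D_i - \psi_E(\lambda) \sum_{i=1}^t D_i^2\Bigr), \qquad \psi_E(\lambda) = -\log(1-\lambda) - \lambda; \]
and \emph{(ii)} the empirical variance $V_t = \sum_{i=1}^t (\xi_i - \widehat \xi_{i-1})^2$ can be substituted for $\sum_{i=1}^t D_i^2$ using the decomposition
\[ (\xi_i - \widehat \xi_{i-1})^2 = D_i^2 + 2 D_i\bigl(\polval_i/(k+1) - \widehat \xi_{i-1}\bigr) + \bigl(\polval_i/(k+1) - \widehat \xi_{i-1}\bigr)^2, \]
whose middle term is a martingale increment because $\widehat \xi_{i-1}$ is $\history_{i-1}$-measurable, and can therefore be absorbed into a mildly enlarged supermartingale. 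Once this predictable-plug-in empirical Bernstein supermartingale is in hand, stitching a geometric grid of $\lambda$'s and applying a weighted union bound delivers the quoted $\sqrt{\widebar V_t \log\log \widebar V_t}$-rate width with the explicit constants $2.13$, $1.76$, $1.33$.

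The main technical obstacle is precisely the empirical-variance substitution in step \emph{(ii)}: while the cross term above is a centered martingale increment, threading it through the exponential calculus while preserving the tight numerical constants requires carefully tracking the psi-function inequalities used in the stitching, and choosing $\widehat \xi_{i-1}$ so that the predictable-quadratic residual $(\polval_i/(k+1) - \widehat \xi_{i-1})^2$ is absorbable. Once the stitched nonnegative supermartingale has been constructed, Ville's inequality followed by inversion of the resulting one-sided tail, multiplication by $(k+1)$, and clipping below at $0$ yields $L_t^\LIL$. The analogous upper confidence sequence $\infseqt{U_t^\LIL}$ then follows from the mirroring trick of \cref{remark:mirroring-trick} applied to $\phi_i^\DRU$.
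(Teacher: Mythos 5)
There is a genuine gap, and it sits exactly where you flag your ``main technical obstacle'': the substitution of the empirical variance $V_t=\sum_{i\le t}(\xi_i-\widehat\xi_{i-1})^2$ for the mean-centered quadratic variation $\sum_{i\le t}D_i^2$. Your plan first builds the Fan-type supermartingale penalized by $\sum_i D_i^2$ and then tries to trade that penalty for $V_t$ via the decomposition $(\xi_i-\widehat\xi_{i-1})^2=D_i^2+2D_i(\polval_i/(k+1)-\widehat\xi_{i-1})+(\polval_i/(k+1)-\widehat\xi_{i-1})^2$, absorbing the cross term ``into a mildly enlarged supermartingale.'' This step does not go through as described: there is no almost-sure ordering between $\sum_i D_i^2$ and $V_t$ (so you cannot simply enlarge the boundary), and the cross term, although a martingale increment, sits \emph{inside} the exponential multiplied by $\psi_E(\lambda_i)$, where controlling it would require additional exponential-moment arguments that change the boundary and destroy the stated constants $2.13,\,1.76,\,1.33$. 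In other words, the proposal reduces the proposition to exactly the claim it needed to prove, and leaves it unproved.

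The paper's route avoids this issue entirely via \cref{lemma:bernsm}: Fan's inequality is applied not to the mean-centered increment $D_i$ but to $b_i:=\xi_i-\widehat\xi_{i-1}\ge -1$ (this is where the truncation $\widehat\xi_{i-1}\le (1+k)^{-1}$ matters, a point your proposal leaves to an unspecified ``choice'' of $\widehat\xi_{i-1}$), so the empirical variance $V_t$ appears \emph{directly} as the penalization. The mismatch between the predictable plug-in $\widehat\xi_{i-1}$ and the true conditional mean is then handled multiplicatively, using $\EE[1+\lambda_i(\xi_i-\widehat\xi_{i-1})\mid\history_{i-1}]\,e^{\lambda_i\delta_i}=(1-\lambda_i\delta_i)e^{\lambda_i\delta_i}\le 1$ with $\delta_i:=\widehat\xi_{i-1}-\polval_i/(k+1)$, rather than by decomposing and re-absorbing squares. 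Once that supermartingale is in hand, the proof proceeds essentially as you outline (pass from $\psi_E$ to the sub-gamma $\psi_G$ with scale $1$, stitch over geometric epochs of $V_t$ with per-epoch $\lambda_k$ and weights $\alpha_k\propto\alpha/((k+1)^2\zeta(2))$, which with $\eta=e$, $s=2$ yields the constants $2.13$, $1.76$, $1.33$ and the $1.65/\alpha$ in $\ell_t$), so your stitching outline is fine; it is the construction of the empirical-variance supermartingale itself that is missing, and it is the key idea of the result.
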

The proof in \cref{proof:lil-eb} uses the ``stitching'' (sometimes called ``peeling'') technique that is common in the derivation of LIL-type bounds \citep{darling1967confidence,jamieson2014lil,kaufmann2016complexity} applied to linear sub-exponential boundaries. Importantly, $L_t^\LIL \asymp \Ocal\left (t^{-1}\sqrt{ V_t \log \log V_t} \right )$ which matches the unimprovable rate implied by the law of the iterated logarithm. We take a maximum with 0 in $L_t^\LIL$ (and hence an implicit minimum with 1 in $U_t^\LIL$) since $\polvalt \in [0, 1]$  for every $t \in \NN$ by assumption. Of course, if one is in a stationary environment so that $\polval_1 = \polval_2 = \cdots = \polval$, then $\polvalt = \polval$, and hence $(L_t^\LIL, U_t^\LIL)$ forms a $(1-\alpha)$-CS for $\polval$.

\paragraph{Comparison of Theorems~\ref{theorem:dr-fixed-policy-value} and~\ref{theorem:conjmix-eb} with prior work.}
To the best of our knowledge,~\citet{thomas2015high} were the first to derive nonasymptotic confidence intervals for policy values in contextual bandits, and they did so without knowledge of $\wmax$. However, their bounds are not time-uniform, and the authors do not consider time-varying policy values nor data-dependent logging policies $\infseqt{h_t}$. Four other prior works stand out as being related to the results of this section, namely~\citet{karampatziakis2021off},~\citet[Section 4.2]{howard2018uniform},~\citet{bibaut2021post}, and~\citet{zhan2021off} and we discuss each of them in some detail below. Note that in the last row labeled ``Doubly robust'' of \cref{table:time-varying}, we are referring to the property of confidence sets to be potentially sharpened in the presence of regression estimators without compromising validity (as discussed in the paragraphs surrounding \cref{theorem:dr-fixed-policy-value}).

\begin{itemize}
\item \textbf{KMR21:}.
  The off-policy \cs{}s of~\citet{karampatziakis2021off} --- reviewed in~\cref{proposition:kmr} --- can in several ways be seen as an improvement of~\citet{thomas2015high} since they are time-uniform, in addition to being empirically tight. As discussed in~\cref{section:warmup}, their importance-weighted off-policy \cs{}s are strictly generalized and extended in our~\cref{theorem:dr-fixed-policy-value}, but their result nevertheless yields the current state-of-the-art \cs{}s for $\polval$.

\item \textbf{BDKCvdL21 \& ZHHA21:} \citet{bibaut2021post} and~\citet{zhan2021off} both study the off-policy inference problem in contextual bandits from an asymptotic point of view. Their off-policy estimators take the form of sample averages of influence functions --- what \citet{bibaut2021post} refer to as the canonical gradient --- to which martingale central limit theorems may be applied to obtain asymptotically valid inference.

In contrast to our work, the confidence intervals of \citet{bibaut2021post} and \citet{zhan2021off} (a) are asymptotic, and hence do not have finite-sample guarantees, (b) are not time-uniform, and hence cannot be used at stopping times, and (c) do not track time-varying policy values.


  \item \textbf{HRMS21:} \citet[Section 4.2]{howard2018uniform} derive time-uniform, nonasymptotic \cs{}s for the average treatment effect (ATE) in randomized experiments. The main difference between our results and those of~\cite[Section 4.2]{howard2018uniform} is that they do not study the contextual bandit off-policy evaluation problem. However, since estimating the ATE in randomized experiments can be seen as a special case of the contextual bandit problem, it is natural to wonder how our approach differs in this special case. The main difference here is that~\citet{howard2018uniform} require knowledge of $\wmax$ --- or equivalently in their setup, the maximal and minimal propensity scores --- while ours do not. Moreover, our results allow $\wmax$ to be infinite and nevertheless enjoy variance-adaptivity. See~\cref{section:ate-randomized-expts} for a more detailed discussion of the implications of our bounds for ATE estimation in randomized experiments.
\end{itemize}


\begin{table}[!htbp]
  \caption{Comparison of various \cs{}s and \ci{}s for mean off-policy values.}
  \label{table:time-varying}
  \centering
  \begin{tabular}{|l|c|c|c|c|c|}
    \hline
                            & KMR21      & BDKCvdL21 \& ZHHA21 & HRMS21     & Thm.~\ref{theorem:dr-fixed-policy-value} & Thm.~\ref{theorem:conjmix-eb} \\
    \hline
    Contextual bandits            & \checkmark & \checkmark &            & \checkmark                               & \checkmark                    \\
    Time-varying rewards            &            &            & \checkmark &                                          & \checkmark                    \\
    Nonasymptotic           & \checkmark &            & \checkmark & \checkmark                               & \checkmark                    \\
    Time-uniform            & \checkmark &            & \checkmark & \checkmark                               & \checkmark                    \\
    Predictable $\infseqt{h_t}$ &            & \checkmark & \checkmark & \checkmark                               & \checkmark                    \\
    $\wmax$-free            & \checkmark & \checkmark &            & \checkmark                               & \checkmark                    \\
    Doubly robust         &            & \checkmark & \checkmark & \checkmark                               & \checkmark                    \\
    \hline
  \end{tabular}
\end{table}

\subsection{A remark on policy value differences}\label{section:policy-value-differences}
The results in this paper have taken the form of \cs{}s for policy values, e.g.~a sequence of sets $[L_t, U_t]_{t=1}^\infty$ such that $\PP(\forall t \in \NN,\ \polval \in [L_t, U_t]) \geq 1-\alpha$, but it may be of interest to directly estimate policy value \emph{differences} --- e.g.~$\poldiff \equiv \polval_1 - \polval_2 \equiv \polval(\pi_1) - \polval(\pi_2)$, where $\polval(\pi)$ is the value of some policy $\pi$, and $\pi_1$ and $\pi_2$ are two policies we would like to compare. 
In many cases including the gated deployment problem studied by~\citet{karampatziakis2021off}, $\pi_1$ is some target policy of interest and $\pi_2 = h_1 = h_2 = \cdots = h$ is the logging policy so that $\polval(\pi_1) - \polval(h)$ can be interpreted as the additional value (or ``lift'') in the target policy $\pi_1$ over the logging policy. However, our setup allows for $\pi_1$ and $\pi_2$ to be any two policies that are absolutely continuous with respect to the logging policies.

Of course, one can always solve this problem by union bounding: construct $(1-\alpha/2)$-\cs{}s for $\polval_1$ and $\polval_2$ separately to yield a $(1-\alpha)$-\cs{} for their difference. However, it is possible to remove this small amount of slack introduced by union bounding and instead derive a \cs{} for the difference directly.

The idea is simple: rather than only leverage lower-boundedness of importance-weighted rewards $w_tR_t$, we construct a new random variable $\theta_t := w_t^{(1)}R_t - [1 - w_t^{(2)}(1-R_t)]$ and leverage its lower-boundedness directly --- here, $w_t^{(1)} = \pi_1 / h_t$ and $w_t^{(2)} = \pi_2 / h_t$ are the importance weights for policies $\pi_1$ and $\pi_2$, respectively. In particular, notice that
\begin{align}
  \EE \left [ \theta_t \right ] = \poldiff, ~~~\text{and}~~~ \theta_t &\geq -1, ~~\text{and hence}\\
  \frac{1}{2} \left [ \theta_t - \poldiff  \right ] &\geq - 1~~\text{almost surely.}\label{eq:centered-policy-diff}
\end{align}
Consequently,~\eqref{eq:centered-policy-diff} can be used in the proofs of our theorems to derive a \cs{} for $\poldiff$ directly, since those proofs fundamentally rely on the centered (i.e.~with their mean subtracted) random variables being almost surely lower-bounded by $-1$. For instance, we have that for any $(0, 1)$-valued predictable sequence $\infseqt{\lambda_t(\poldiff)}$, 
\begin{equation}
  M_t(\poldiff) := \prod_{i=1}^t \left ( 1 + \lambda_i(\poldiff) \cdot (\theta_t - \poldiff)/2 \right )
\end{equation}
forms a test martingale and hence $L_t := \inf \left \{ \poldiff' \in [-1, 1] : M_t(\poldiff') < 1/\alpha \right \}$ forms a lower $(1-\alpha)$-\cs{} for $\poldiff$.
As usual, the mirroring trick can be used to obtain an upper \cs{} for this policy value difference. Moreover, the above discussion can be extended to time-varying policy value differences and doubly robust pseudo-outcomes (rather than just their importance-weighted counterparts), as well as sequences of policies --- i.e.~analyzing the sequences $\infseqt{\pi_1^{(t)}}$ and $\infseqt{\pi_2^{(t)}}$ --- but we omit these derivations for the sake of brevity.

\subsection{Time-varying treatment effects in adaptive experiments}\label{section:ate-randomized-expts}

While this paper is focused on anytime-valid \emph{contextual bandit} inference --- i.e.~inference for policy values or their CDFs from contextual bandit data --- one can nevertheless view off-policy evaluation as a generalization of treatment effect estimation from adaptive experiments. Consequently, every single result in this paper also has powerful implications for nonasymptotic inference for treatment effects from such experiments. In this section, we will focus on adaptive experiments with binary treatments for simplicity, but the analogy extends to more general settings.

\paragraph{From contextual bandits to adaptive experiments with binary treatments.} The contextual bandit problem can be seen as a generalization of adaptive experiments since the latter has three key notational differences. 
\begin{enumerate}
\item The ``context'' $X_t$ is typically referred to as a ``covariate'' or a ``feature'', and may be used to represent baseline demographics and medical history in a clinical trial, for example.
\item The ``action'' $A_t$ (which is binary in this case) is referred to as a ``treatment'', and the policy $h_t$ is called the ``propensity score'', and is simply the probability of a subject with covariates $X_t$ receiving treatment $A_t = 1$ at time $t$.
\item The ``reward'' $R_t$ is often referred to as the ``outcome'' for subject $t$.
\end{enumerate}
There are many reasons why one may wish to run an adaptive sequential experiment rather than a simple Bernoulli($h$) experiment with a constant pre-specified propensity score $h$. Two simple examples include: (a) balancing designs such as Efron's biased coin~\citep{efron1971forcing} which vary the propensity scores $\infseqt{h_t}$ over time to ensure that treatment groups are ``balanced'' within certain levels of the covariates, and (b) the experimental designs of \citet{kato2020adaptive} which adaptively choose propensity scores to minimize the variance of the resulting doubly robust and inverse propensity-weighted (IPW) estimators, yielding sharper confidence sets. (In the language of contextual bandits and off-policy evaluation, IPW and importance weighting are equivalent.) Both (a) and (b) --- or any other design that varies propensity scores adaptively over time --- can be paired with the \cs{}s of the current paper.

\paragraph{Implications for causal inference in adaptive experiments.} From the perspective of treatment effect estimation, the current paper provides nonasymptotic, nonparametric, time-uniform inference for treatment effects, all without knowledge of the minimal propensity score $h_\mathrm{min} := \essinf_{t,a,x}h_t(a \mid x)$ and this essential infimum can even be 0 (as long as it is not attained, meaning each $h_t(a \mid x)$ is itself positive). Contrast this with prior work on nonasymptotic, nonparametric, time-uniform inference for treatment effects such as \citet[Section 4.2]{howard2018uniform}, which require \emph{a priori} knowledge of $h_\mathrm{min} > 0$. Their bounds necessarily scale with an implied upper-bound on the variance of $h_t^{-1}R_t$ implied by $h_\mathrm{min}^{-1}$ while ours only scale with the \emph{empirical variance} of $(h_t^{-1} R_t)_{t=1}^\infty$ --- the latter always being smaller.

Concretely,~\cref{theorem:dr-fixed-policy-value} can be used to derive \cs{}s for the average treatment effect from adaptively collected data in experiments with binary treatments and bounded outcomes. \cref{theorem:conjmix-eb} goes further, enabling the construction of \cs{}s for \emph{time-varying average treatment effects} similar to \citet[Section 4.2]{howard2018uniform}.
Finally, \cref{theorem:cdf} --- to be presented in \cref{section:cdf} --- allows for the construction of time-uniform confidence bands for the CDF of the outcome distribution under a given treatment.
Moreover, all of this is possible in a nonparametric, nonasymptotic framework, without knowledge (or strict positivity) of $h_\mathrm{min}$.
To the best of our knowledge, all three of these implied results are new in the literature for treatment effect estimation.

\subsection{Sequential testing and anytime \texorpdfstring{$p$}{p}-values for off-policy inference}\label{section:testing}

While we have thus far taken an \emph{estimation} perspective (i.e.~deriving \cs{}s and \ci{}s rather than $p$-values), all of our results have hypothesis testing analogues. In particular, the \cs{}s and \ci{}s developed in this paper have all been built by first deriving implicit \emph{$e$-processes}.
Formally, given a set of distributions $\Pcal_0$ (referred to as ``the null hypothesis''), an $e$-process $E \equiv \infseqt{E_t}$ for $\Pcal_0$ is a nonnegative process such that $\EE_P[E_\tau] \leq 1$ for any $P \in \Pcal_0$ and any stopping time $\tau$. (In particular, all test supermartingales for $\Pcal_0$ are e-processes by the optional stopping theorem, but not vice versa.)

While $e$-processes can serve as tools to derive \cs{}s, they can also be used as interpretable testing tools in their own right, or as a way to derive anytime $p$-values --- $p$-values that are uniformly valid over time in the same sense as \cs{}s. Formally, an anytime $p$-value for $\Pcal_0$ is an $\Hcal$-adapted process $\infseqt{p_t}$ such that
\begin{equation}\label{eq:anytime-pval}
  \sup_{P \in \Pcal_0} P \left ( \exists t \in \NN : p_t \leq \alpha \right ) \leq \alpha.
\end{equation}
Compare~\eqref{eq:anytime-pval} with a traditional fixed-time $p$-value $p_n$ that satisfies $\forall n \in \NN,\ \sup_{P \in \Pcal_0} P(p_n \leq \alpha) \leq \alpha$. On the other hand, an $e$-process $\infseqt{E_t}$ for $\Pcal_0$ also satisfies Ville's inequality:
\begin{equation}\label{eq:e-proc}
  \sup_{P \in \Pcal_0} P \left ( \exists t \in \NN : E_t \geq 1/\alpha \right ) \leq \alpha.
\end{equation}
As a direct consequence of $\eqref{eq:e-proc}$, notice that $e$-processes yield anytime $p$-values via the transformation $p_t := (1/E_t) \land 1$.
\begin{remark}[Which should you choose: $e$ or $p$?]\label{remark:p-vs-e}
There are several philosophical and practical reasons why one may wish to use $e$-processes over anytime $p$-values, despite the fact that they can both be used for sequential hypothesis testing. Philosophically, $e$-processes (and hence test supermartingales) have game-theoretic interpretations and connections to Bayesian statistics~\citep{shafer2011test,grunwald2019safe,waudby2020confidence}, and have been argued to serve as a better foundation for statistical communication~\citep{shafer2021testing}. Practically, stopped $e$-processes form $e$-values --- nonnegative random variables with expectation at most one \citep{vovk2021values} --- which have several attractive properties over $p$-values, including the fact that they are very straightforward to combine for the sake of testing a global null \citep{vovk2021values}, to perform meta-analyses~\citep{ter2021all}, or to control the false discovery rate under arbitrary dependence~\citep{wang2020false}. In this section, we remain agnostic as to which of the two one should use: we will simply derive $e$-processes and note that their philosophical and practical properties can be enjoyed within OPE, and that if anytime $p$-values are preferred, they are always available via the transformation $p_t := (1/E_t) \land 1$.
\end{remark}
Following \cref{section:policy-value-differences}, let us now derive sequential tests for whether a policy $\pi_1$ has a higher average value than some other policy $\pi_2$. Technically, one could also replace $\pi_1$ or $\pi_2$ with a \emph{sequence} of predictable policies, but for simplicity we will only discuss fixed policies. Concretely, let $\Delta_t$ denote the difference in the values of policies $\pi_1$ and $\pi_2$ at time $t$,
\begin{equation}
  \poldiff_t := \polval_t(\pi_1) - \polval_t(\pi_2) \equiv \EE_{A_t \sim \pi_1}(R_t) - \EE_{A_t \sim \pi_2}(R_t),
\end{equation}
and let $\poldifft := \frac{1}{t} \sum_{i=1}^t \poldiff_t$ denote the running average difference. We are interested in testing the \emph{weak} null hypothesis $H_0$,
\begin{equation}
  H_0 : \forall t,\  \poldifft \leq 0, ~~~\text{vs}~~~H_1 : \exists t :  \poldifft > 0.
\end{equation}
In words, $H_0$ says that ``$\pi_1$ is no better than $\pi_2$ \emph{on average thus far}'' and was used to compare sequential forecasters in~\citet{choe2021comparing}. This ``weak null'' should be contrasted with the ``strong null'' that would posit $H_0^\star : \forall t,\  \poldiff_t \leq 0$ --- clearly, the latter implies the former, and hence any $e$-process (or anytime $p$-value) for $H_0$ can also be used for $H_0^\star$. Mathematically, $H_0$ is a composite superset of the point null $H_0^\star$. From a practical perspective, $H_0$ may be a favorable null to test since it allows for $\polval_t(\pi_1) > \polval_t(\pi_2)$ at various $t$ as long as $\frac{1}{t} \sum_{i=1}^t\polval_t(\pi_1) \leq \frac{1}{t}\sum_{i=1}^t \polval_t(\pi_2)$ for all $t$, whereas $H_0^\star$ requires $\pi_1$ to be uniformly dominated by $\pi_2$.

Let us now derive an explicit $e$-process for $H_0$.
Using the techniques of~\cref{section:policy-value-differences}, define 
\begin{equation}
  \theta_t := w_t^{(1)}R_t - (1-w_t^{(2)}(1-R_t)),
\end{equation}
where $w_t^{(1)} := \pi_1(A_t \mid X_t) / h_t(A_t \mid X_t)$ and $w_t^{(2)} := \pi_2(A_t \mid X_t)/h_t(A_t \mid X_t)$ are the importance weights for policies $\pi_1$ and $\pi_2$. As before, we note that $\theta_t \geq -1$ and $\EE (\theta_t \mid \history_{t-1}) = \poldiff_t$, and hence
\begin{equation}
  \EE \left ( \frac{1}{t}\sum_{i=1}^t \theta_i \Bigm\vert \history_{t-1} \right ) = \poldifft.
\end{equation}
Given the above setup, we are ready to derive an $e$-process (and hence an anytime $p$-value) for the weak null $H_0$ (an illustration is provided in \cref{fig:weak_null_testing}).
\begin{proposition}\label{proposition:testing}
Given contextual bandit data $\infseqt{X_t, A_t, R_t}$ and two target policies $\pi_1$ and $\pi_2$ that we would like to compare, define $S_t(\candpoldifft)$ and $V_t$ by
\begin{align}
  S_t(\candpoldifft) &:= \frac{1}{2}\left ( \sum_{i=1}^t \theta_i - t\candpoldifft \right )~~~\text{and}\\
  V_t &:= \frac{1}{2}\sum_{i=1}^t (\theta_i - \widehat \theta_{i-1})^2, ~~~\text{where}~~\widehat \theta_t := \frac{1}{2} \left [ \left ( \frac{1}{t} \sum_{i=1}^t \theta_i  \right ) \land 1 \right ] 
\end{align}
Then, given any $\rho > 0$, we have that
  \begin{equation}
M_t^\EB(0) := \left(\frac{\rho^\rho e^{-\rho}}{\Gamma(\rho) - \Gamma(\rho, \rho)}\right) \left(\frac{1}{V_t + \rho}\right) \onefone(1, V_t + \rho + 1, S_t(0) + V_t + \rho)
  \end{equation}
  forms an $e$-process for $H_0$. Consequently, $p_t := (1/M_t^\EB) \land 1$ forms an anytime $p$-value for the weak null $H_0: \forall t,\ \poldifft \leq 0$, meaning $\sup_{P \in H_0} P(\exists t : p_t \leq \alpha)\leq \alpha$.
\end{proposition}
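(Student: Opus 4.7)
The plan has three parts: (1) recognize that $M_t^\EB$ evaluated at the \emph{true} running mean $\poldifft$ (not just at $0$) is a nonnegative supermartingale, by reusing the gamma-exponential mixture construction that powers Theorem~\ref{theorem:conjmix-eb}; (2) use monotonicity of Kummer's function to bound $M_t^\EB(0) \leq M_t^\EB(\poldifft)$ pointwise whenever the null is in force; (3) conclude via optional stopping and Ville's inequality.

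For step~(1), observe that $\theta_i \geq -1$ with $\EE[\theta_i \mid \history_{i-1}] = \poldiff_i$, so the centered scaled variable $\tfrac{1}{2}(\theta_i - \poldiff_i)$ is almost-surely lower-bounded by $-1$ (equation~\eqref{eq:centered-policy-diff}). This is precisely the structural condition that fuels Theorem~\ref{theorem:conjmix-eb}: Fan's inequality \citep{fan2015exponential} provides a conditional exponential-moment bound for centered random variables lower-bounded by $-1$, and the gamma-exponential mixture of \citet[Proposition~9]{howard2018uniform} integrates this bound against a $\mathrm{Gamma}(\rho,1)$ weighting of the tuning parameter to produce a nonnegative supermartingale that depends on the data only through a centered-sum process and a predictable-variance process. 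Plugging $\tfrac{1}{2}\theta_i$ into this construction in place of $\xi_i$, and matching the leading factors of $\tfrac{1}{2}$ so that the sum and variance proxies line up with the $S_t, V_t$ defined in the proposition, the resulting supermartingale is identically $M_t^\EB(\poldifft)$; it is nonnegative and starts at $1$.

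For step~(2), the map $b \mapsto \onefone(1, a, b)$ is strictly increasing for $a > 0$: indeed, $\partial_b \onefone(1, a, b) = \onefone(2, a+1, b)/a$, and with $a = V_t + \rho + 1 > 1$ positivity on all of $\RR$ follows from Kummer's transformation $\onefone(2, a+1, b) = e^{b}\onefone(a-1, a+1, -b)$, whose right-hand series has only positive coefficients. Since $S_t(\candpoldifft) = \tfrac{1}{2}(\sum_i \theta_i - t\candpoldifft)$ is strictly decreasing in $\candpoldifft$, the map $\candpoldifft \mapsto M_t^\EB(\candpoldifft)$ is decreasing. Under any $P \in H_0$ we have $\poldifft \leq 0$ for every $t$, so pointwise $M_t^\EB(0) \leq M_t^\EB(\poldifft)$. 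Optional stopping for the nonnegative supermartingale $M_\cdot^\EB(\poldifft)$ (Fatou's lemma if $\tau$ may be infinite) then gives $\EE_P[M_\tau^\EB(0)] \leq \EE_P[M_\tau^\EB(\poldifft)] \leq 1$ for every stopping time $\tau$, establishing the $e$-process property. Ville's inequality for $e$-processes (equation~\eqref{eq:e-proc}) immediately yields the anytime $p$-value claim for $p_t := (1/M_t^\EB(0)) \land 1$.

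I expect the only nontrivial bookkeeping to lie in step~(1): the scaling conventions used in the proposition ($\theta_i \in [-1, 1]$ together with a predictable mean estimate $\widehat\theta_{i-1}$ that itself carries a factor of $\tfrac{1}{2}$, and a leading $\tfrac{1}{2}$ in the definition of $V_t$) must be reconciled with the $\xi_i = \phi_i/(1+k)$ scaling of Theorem~\ref{theorem:conjmix-eb} so that the predictable surrogate entering Fan's exponential-moment bound produces exactly the $V_t$ that sits inside \eqref{eq:gamma-exponential-nsm}. Once the scalings line up, step~(2) is a one-line monotonicity observation and step~(3) is standard Ville, so the argument concludes cleanly.
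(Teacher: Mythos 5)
Your proposal is correct and follows essentially the same route as the paper's own argument: recognize $M_t^\EB(\poldifft)$ as the gamma-exponential mixture test supermartingale from \cref{theorem:conjmix-eb} (via the lower bound $\tfrac{1}{2}(\theta_t - \poldiff_t) \geq -1$), use that $S_t(\cdot)$ is nonincreasing and $\onefone$ is nondecreasing in its third argument to get $M_t^\EB(0) \leq M_t^\EB(\poldifft)$ under $H_0$, and finish with optional stopping and Ville's inequality. Your extra detail on the monotonicity of $\onefone$ and the scaling bookkeeping goes beyond the paper's terse justification but does not change the approach.
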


\begin{figure}[!htbp]
  \centering
  \includegraphics[width=\textwidth]{./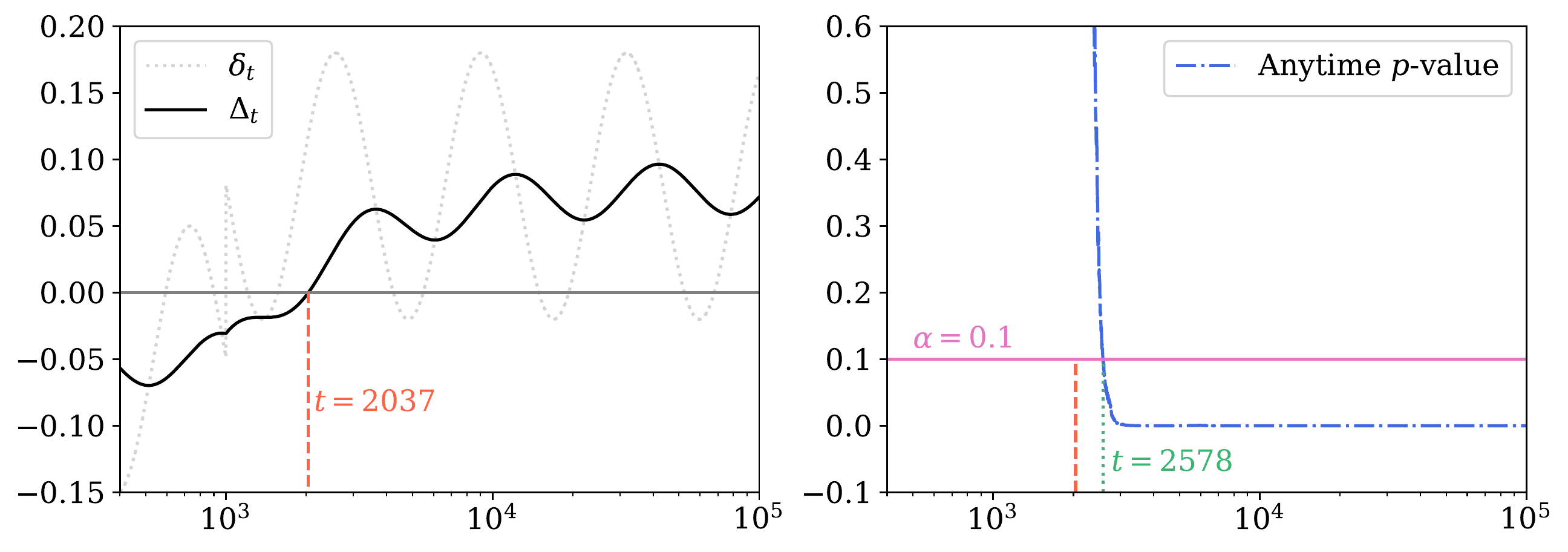}
  \caption{An illustration of how the anytime $p$-value derived in \cref{proposition:testing} can be used to test the weak null $H_0: \forall t,\ \poldifft \leq 0$. In the left-hand side plot, notice that $\poldiff_t$ ventures above 0 at several points prior to $t = 2037$, but the average policy value difference is positive for the first time at $t=2037$. In the right-hand side plot, we see that the anytime $p$-value dips below $\alpha$ shortly after $\poldifft > 0$, at which point the weak null can be safely rejected, with no penalties for the $p$-value having been continuously monitored.}
  \label{fig:weak_null_testing}
\end{figure}
The fact that $M_t^\EB(0)$ forms an $e$-process is easy to see: under $H_0$, we have that $\poldifft \leq 0$ and notice that $M_t^\EB$ with $S_t(0)$ replaced with $S_t(\poldifft)$ forms a test supermartingale using the same techniques as~\cref{section:time-varying}. Since $S_t(\cdot)$ is nonincreasing and $\onefone$ is nondecreasing in its third argument, we have that $M_t^\EB$ is upper-bounded by the aforementioned test supermartingale whenever $\poldifft \leq 0$. The claimed $e$-process property is then an immediate consequence of the optional stopping theorem applied to the above test supermartingale.

\section{Time-uniform inference for the off-policy CDF}\label{section:cdf}
Thus far we have focused on off-policy inference for \emph{mean policy values}, i.e.~functionals of the form $\polval := \EE_{\pi}\left ( R \right )$. In some cases, however, it may be of interest to study quantiles (e.g.~median or 75\textsuperscript{th} percentile) or perhaps the entire cumulative distribution function (CDF) of the reward distribution under policy $\pi$. In this section, we focus on the latter, deriving confidence bands for the CDF $\PP_{\pi}(R \leq r)$ of the reward $R$ under policy $\pi$. Our confidence bands will be uniform in two senses: in time, and in the quantiles. Concretely, if $Q(p)$ and $Q^-(p)$ are the right (standard) and left quantiles, respectively --- meaning $Q(p) := \sup \left \{ x \in \RR : \PP_{\pi}(R \leq x) \right \} $ and $Q^-(p) := \sup \left \{ x \in \RR : \PP_\pi(R < x) \right \}$ --- then we will derive a sequence of confidence bands $[L_t(p), U_t(p)]_{t \in \NN}$ such that
\begin{equation}\label{eq:time-quantile-uniform-example}
  \PP \left ( \forall t \in \NN,p \in (0, 1),\ L_t(p) < Q^-(p)~~\text{and}~~Q(p) < U_t(p) \right ) \geq 1-\alpha.
\end{equation}
Such a guarantee enables anytime-valid inference at arbitrary stopping times for \emph{all quantiles} simultaneously, (as well as any functional thereof). In addition, all our confidence bands will satisfy all five desiderata laid out in \cref{section:desiderata}, and they will consistently shrink to the true quantile $Q(p)$ for all $p$.

In order to state our main result, we first need to define a few terms. Define $W_t, \widebar W_t, \widebar q_t(p)$, and $\ell_t(p; \alpha)$ given by
\begin{alignat}{2}
  W_t &:= \quad &&\sum_{i=1}^t w_i^2 , ~~ \widebar W_t := W_t \lor 1,~~ \\ 
  \widebar q_t(p) &:= \quad &&\logit^{-1} \left ( \logit(p) + 4 \sqrt{\frac{e}{\widebar W_t}} \right ),\\
  \ell_t(p; \alpha) &:= \quad &&2\log \left ( \log \widebar W_t + 1\right ) + 2\log \left ( \left |  \left \lceil \frac{ \sqrt{\widebar W_t}\logit(p)}{4} \right \rceil \right |  \lor 1\right )+\log \left ( \frac{7.06}{\alpha} \right ),\\
  \text{and}~~\boundary_t(p; \alpha) &:= \quad &&\frac{\sqrt{2.13 \ell_t(p; \alpha) \widebar W_t + 1.76 \widebar q_t(p)^2 \ell_t(p; \alpha)^2}}{t} \\
      & \quad && + \frac{1.33 \widebar q_t(p) \ell_t(p; \alpha) + t(\widebar q_t(p) - p)}{t}.
\end{alignat}
While some of the above may seem complicated, they arise naturally from the proof technique discussed below, and it is straightforward to implement them into code. 
Given the above setup, we are ready to state the main result of this section.
\begin{theorem}[Time-uniform confidence band for the off-policy CDF]\label{theorem:cdf}
    Consider a sequence of contextual bandit data $\infseqt{X_t, A_t, R_t}$ with real-valued (i.e.~not necessarily $[0, 1]$-bounded) rewards.
  Let $\widehat F_t^\pi(x) := \frac{1}{t}\sum_{i=1}^t w_i \1(R_i \leq x)$ be the importance-weighted empirical CDF\@, and let $\widehat Q_t(p)$ and $\widehat Q^-_t(p)$ be the upper and lower empirical quantiles, meaning
  \begin{equation}
  \widehat Q_t(p) := \sup \left \{ x \in \RR : \widehat F_t^\pi(x) \leq p \right \}, 
  \end{equation}
  and similarly for $\widehat Q_t^-(p)$ with $\leq$ in the above supremum replaced by a strict inequality $<$.
  Then, 
  \begin{align}
    \PP \left ( \forall t \in \NN, p \in (0, 1),\ Q(p) < \widehat Q_t^- \left ([p + \boundary_t(p; \alpha)]\land 1  \right )  \right ) \geq 1-\alpha.\label{eq:lower-bound-Q(p)}
  \end{align}
  Similarly, after applying the mirroring trick of~\cref{remark:mirroring-trick}, we have that
  \begin{equation}
    \PP \left ( \forall t \in \NN, p \in (0, 1),\ \widehat Q_t \left (\left [p + \frac{1}{t}\sum_{i=1}^t w_i - 1 - \boundary_t(1-p; \alpha) \right ] \lor 0\right ) < Q^-(p)  \right ) \geq 1-\alpha.\label{eq:upper-bound-Q(p)}
  \end{equation}
\end{theorem}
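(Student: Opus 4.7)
The plan is to first establish a time- and quantile-uniform concentration inequality for the importance-weighted empirical CDF $\widehat F_t^\pi$, and then invert it via the standard duality $\widehat Q_t^-(q) > x \iff \widehat F_t^\pi(x) < q$. For the pointwise step, fix a threshold $x \in \RR$ and let $\zeta_t(x) := w_t \1(R_t \leq x)$. By the same calculation that makes $w_t R_t$ an unbiased estimator of $\polval$, we have $\EE[\zeta_t(x) \mid \history_{t-1}] = F_\pi(x) := \PP_\pi(R_t \leq x)$, and the centered variable $\zeta_t(x) - F_\pi(x) \geq -F_\pi(x) \geq -1$ almost surely since $\zeta_t(x) \geq 0$. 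The predictable variance process is controlled by $\sum_i w_i^2 F_\pi(x)(1-F_\pi(x))$, hence by $W_t$. This is exactly the lower-bounded, variance-adaptive structure powering the gamma-exponential and stitched line-crossing supermartingales used in \cref{theorem:conjmix-eb} and \cref{proposition:lil-eb}; applying the same machinery yields, for each fixed $x$, a one-sided LIL-type bound on $\widehat F_t^\pi(x) - F_\pi(x)$ that is uniform in $t$.

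To promote uniformity to $p \in (0,1)$, the variance $F_\pi(x)(1-F_\pi(x)) \approx p(1-p)$ makes a uniform $p$-grid inefficient in the tails, so I would peel on the $\logit$ scale. Partition $(0,1)$ into cells whose logit-endpoints are spaced by $\approx 4\sqrt{e/\widebar W_t}$. The grid-neighbour of any $p$ is then exactly $\widebar q_t(p) = \logit^{-1}(\logit(p) + 4\sqrt{e/\widebar W_t})$, and the integer $\lceil \sqrt{\widebar W_t}\,\logit(p)/4 \rceil$ counts the grid cells separating $p$ from $1/2$. Applying the pointwise inequality at each grid point with a geometrically allocated share of $\alpha$ and union bounding produces the $2\log(|\lceil \sqrt{\widebar W_t}\logit(p)/4\rceil|\lor 1)$ summand in $\ell_t(p;\alpha)$. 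A second, standard peeling in the intrinsic time $\widebar W_t$ (mirroring the proof of \cref{proposition:lil-eb}) contributes the $2\log(\log \widebar W_t + 1)$ factor, and consolidating the double budget yields the constant $\log(7.06/\alpha)$. After carrying the pointwise sub-exponential boundary over to the grid-neighbour $\widebar q_t(p)$ and absorbing the grid offset $\widebar q_t(p) - p$ into the bound, the outcome is the simultaneous event
\[
\widehat F_t^\pi(Q(p)) < p + \boundary_t(p;\alpha) \quad \text{for all } t \in \NN \text{ and } p \in (0,1),
\]
holding with probability at least $1-\alpha$.

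The final inversion is immediate: by the duality above, the displayed event is equivalent to $\widehat Q_t^-\!\left([p + \boundary_t(p;\alpha)]\land 1\right) > Q(p)$, which is \eqref{eq:lower-bound-Q(p)}. The upper bound \eqref{eq:upper-bound-Q(p)} is then obtained by applying the same stitched argument to the mirrored process $w_t \1(R_t > x)$, as in \cref{remark:mirroring-trick}; the additive correction $\frac{1}{t}\sum_i w_i - 1$ in \eqref{eq:upper-bound-Q(p)} arises precisely because $\frac{1}{t}\sum_i w_i \1(R_i > x) = \frac{1}{t}\sum_i w_i - \widehat F_t^\pi(x)$ and $\frac{1}{t}\sum_i w_i$ is only mean-one, not identically one.

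The main obstacle is the interplay between unbounded importance weights and the need for uniformity in $p$. Classical DKW-type arguments rely on the $[0,1]$-boundedness of $\1(R \leq x)$, but here $w_t \1(R_t \leq x)$ is only lower-bounded, and the pointwise variance depends jointly on $W_t$ and on $p(1-p)$. The logit-scale peeling is essential: a uniform $p$-grid would force the effective step-size in the tails to shrink too quickly and drag a $\wmax$-dependent constant into the bound, violating desideratum~5. Care is also needed in coupling the $\widebar q_t(p)$ step with the $\widebar W_t$-dependence in both $\ell_t(p;\alpha)$ and in the variance proxy, so that the accumulated grid offset $\widebar q_t(p) - p$ remains $O(\sqrt{p(1-p)/\widebar W_t})$ and telescopes cleanly into the closed form of $\boundary_t(p;\alpha)$ rather than a strictly weaker boundary.
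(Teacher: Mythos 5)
Your proposal is correct and follows essentially the same route as the paper's proof: a sub-gamma $e$-process for $S_t(p)=\sum_{i\le t} w_i\1(R_i\le Q(p))-tp$ whose variance process is bounded by $W_t$, a double stitching over intrinsic-time epochs of $\widebar W_t$ together with a logit-scale quantile grid whose spacing is tied to the epoch, the monotonicity step that transfers the bound from the grid neighbour and absorbs the offset $\widebar q_t(p)-p$, inversion to quantiles, and mirroring for the upper band. The only deviations are cosmetic and immaterial: the paper works with the empirical variance process $\sum_{i\le t}\bigl(w_i\1(R_i\le Q(p))\bigr)^2\le W_t$ (not a predictable $p(1-p)$-type proxy, which would not directly yield a supermartingale for unbounded $w_i$), and the per-cell error budget $\alpha_{k,j}$ decays polynomially in $k$ and $j$ rather than geometrically.
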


\begin{figure}[!htbp]
  \centering
  \includegraphics[width=0.9\textwidth]{./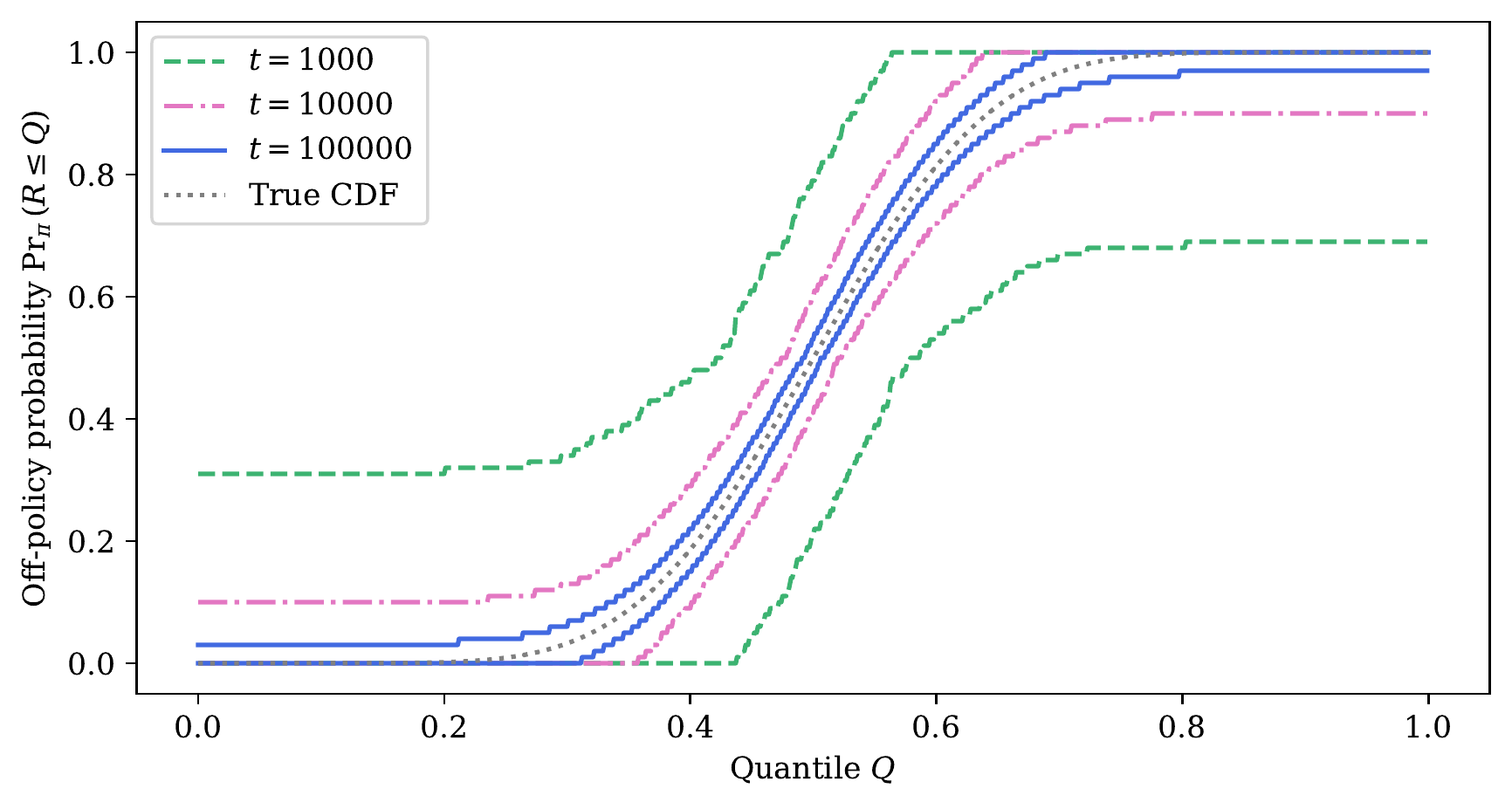}
  \caption{Time- and quantile-uniform 90\% confidence band for the off-policy CDF $\PP_{\pi}(R \leq Q)$ in a Bernoulli(1/2) experiment with the target policy set to $\pi(a \mid x) = \1(a = 1)$ --- i.e. ``always play action 1''. Here, the off-policy distribution of $R$ is a beta distribution with $\alpha=\beta=10$. These confidence bands are simultaneously valid for all $Q \in \RR$ and all $t \in \NN$ (though we only display them at $t \in \{10^3, 10^4, 10^5\}$ above). In particular, notice that as $t$ gets larger, the confidence bands shrink towards the true CDF (and will continue to do so in the limit).}
  \label{fig:cdfs}
\end{figure}

The proof in \cref{proof:cdf} modifies the ``double stitching'' technique of \citet[Theorem 5]{howard2022sequential} to handle importance-weighted observations, and relies on a sub-exponential concentration inequality rather than a sub-Bernoulli one.
Notice that \eqref{eq:lower-bound-Q(p)} and \eqref{eq:upper-bound-Q(p)} could be written without $\widehat Q_t^-$ and $Q^-$ --- i.e.~replacing $\widehat Q_t^-$ and $Q^-$ with $\widehat Q_t$ and $Q$, respectively --- but this would never result in a tighter bound. Illustrations of the time-uniform confidence bands derived in \cref{theorem:cdf} are can be found in \cref{fig:cdfs}.

Many of the \cs{}s throughout this paper have recovered prior \cs{}s in the literature when specialized to the on-policy regime (that is, when all importance weights are set to 1 and reward predictors are set to 0).  Examples include~\cref{theorem:dr-fixed-policy-value} recovering \citet[Theorem 3]{waudby2020estimating} or~\cref{theorem:conjmix-eb} recovering \citet[Proposition 9]{howard2018uniform}. However,~\cref{theorem:cdf} does not recover the on-policy bound it most resembles (\citet[Theorem 5]{howard2022sequential}). The reason for this is subtle, and has to do with the fact that in the on-policy regime, $\1(R_t \leq Q(p)) - p$ is a Bernoulli$(p)$ random variable, hence their partial sums $\frac{1}{t}\sum_{i=1}^t [\1(R_i \leq Q(p)) - p]$ form sub-Bernoulli processes \citep{howard_exponential_2018,howard2018uniform,howard2022sequential} with variance process $tp(1-p)$ \emph{regardless} of the value of $Q(p)$. On the other hand, in the off-policy setting, we use importance-weighted indicators $w_t \1(R_t \leq Q(p)) - p$ whose partial sums are not sub-Bernoulli, but are instead sub-exponential with a variance process that depends on $Q(p)$. This fundamental difference changes the test supermartingales that we have access to, and consequently alters the downstream \cs{}s. 

\paragraph{Comparison with prior work.} There are three prior works that are related to the results of this section, namely those of \citet{howard2022sequential}, \citet{chandak2021universal}, and \citet{huang2021off}, but it is important to note that none of them solve the problem that we are studying --- time-uniform confidence bands for off-policy CDFs --- and hence we focus on a theoretical comparison with these prior works, rather than an empirical one. We discuss each of them in detail below and summarize how they compare with the present paper in~\cref{table:cdf}.

\begin{itemize}
\item \textbf{HR22:} \citet{howard2022sequential} derive time- and quantile-uniform confidence bands for the CDF of iid random variables in the \emph{on-policy} setting, and in particular, Theorem 5 in their paper satisfies a guarantee of the form~\eqref{eq:time-quantile-uniform-example}. However,~\citet{howard2022sequential} do not consider the off-policy inference problem that we do here, and hence the ``Predictable $\infseqt{h_t}$'' and ``$\wmax$-free'' rows are not applicable (N/A). In addition, our setup can be seen as a generalization of theirs if all importance weights are set to 1.

\item \textbf{UnO21:} In the paper entitled ``Universal off-policy evaluation'' (UnO), \citet{chandak2021universal} derive fixed-time quantile-uniform confidence bands for the off-policy CDF\@. Cleverly exploiting monotonicity of CDFs, they reduce the problem to computing finitely many confidence intervals for means of importance-weighted bounded random variables, and taking a union bound over them. Notably, their bounds do not require knowledge of $\wmax$.

  The main difference between our bounds and those of \citet{chandak2021universal} is that ours are both time- and quantile-uniform, while theirs are only quantile-uniform. Note that \citet{chandak2021universal} do consider the more general setup of reinforcement learning in Markov Decision Processes (MDP) --- an area to which we intend to extend all of our \cs{}s in the future --- and MDPs include the contextual bandit setting as a special case. When focusing on contextual bandits specifically, however, and even when ignoring time-uniformity, \cref{theorem:cdf} improves on the fixed-time results of~\citet{chandak2021universal} in the two following ways.

First, the confidence bands of \citet{chandak2021universal} are not guaranteed to shrink to the true CDF as the sample size grows to infinity while ours are (and at an explicit rate of $O(\sqrt{\log t / t})$), which we refer to as ``consistency'' in~\cref{table:cdf}.

Second, their bounds assume that the logging policy $h$ is fixed, while ours can take the form of a sequence of data-dependent logging policies $\infseqt {h_t}$, such as those that result from online learning algorithms. However, since \citet[Theorem 2]{chandak2021universal} simply requires taking a union bound over several \ci{}s for importance-weighted bounded random variables, their Theorem 2 can presumably be extended to handle predictable logging policy sequences by employing the \ci{}s provided in Corollaries~\ref{corollary:betting-ci} or~\ref{corollary:prpl-ci}.

\item \textbf{HLLA21:} \citet{huang2021off} derive impressive quantile-uniform confidence bands for the off-policy CDF\@. Their bounds are elegant and simple to state, resembling the famous Dvoretzky-Kiefer-Wolfowitz (DKW) inequalities and their sharpened forms \citep{dvoretzky1956asymptotic,massart1990tight}. Notably, their bounds are consistent for the true CDF at $O(1/\sqrt{n})$ rates. Similar to \citet{chandak2021universal}, however, their results are not time-uniform, and hence do not permit valid inference at stopping times, unlike ours. Moreover, all of their bounds require knowledge of $\wmax$ (and for this value to be finite), while our bounds do not. Finally, similar to \citet{chandak2021universal}, their bounds assume that the logging policy $h$ is fixed.

\end{itemize}

\begin{table}[!htbp]
  \caption{Comparison of various uniform confidence bands for the CDF}
  \label{table:cdf}
  \centering
  \begin{tabular}{|l|c|c|c|c|}
    \hline
    & HR22 & UnO21 & HLLA21 & Thm.~\ref{theorem:cdf} \\
    \hline
Off-policy &  &\checkmark &\checkmark & \checkmark \\
Time-uniform & \checkmark & & & \checkmark \\
Consistency & \checkmark &  & \checkmark & \checkmark \\
Predictable $\infseqt{h_t}$ & N/A & & & \checkmark \\
$\wmax$-free & N/A & \checkmark & & \checkmark \\
    \hline
  \end{tabular}
\end{table}







\section{Summary \& extensions}\label{section:summary}

This paper derived time-uniform confidence sequences for various parameters in off-policy evaluation which remain valid even in contextual bandit setups where data are collected adaptively and sequentially over time. We began in~\cref{section:desiderata} by laying out our desiderata for off-policy inference: we sought methods that (1) are exact and nonasymptotically valid, (2) only make nonparametric assumptions such as boundedness, (3) are time-uniform, and hence valid at arbitrary stopping times, (4) do not require knowledge of extreme values of importance weights, and (5) allow data to be collected by data-dependent logging policies.

In~\cref{section:warmup}, we began by studying the most classical off-policy parameter --- a fixed policy value $\polval$ --- and we derived \cs{}s that strictly generalize prior state-of-the-art \cs{}s by weakening the required assumptions and allowing for variance reduction via double robustness. In the same section, we also develop the first closed-form confidence sequences for policy values, as well as some tight fixed-time confidence intervals that are instantiations of our time-uniform bounds. \cref{section:time-varying} then developed \cs{}s for a more general parameter: the time-varying average policy value $\polvalt := \frac{1}{t}\sum_{i=1}^t \polval_i$, and we discussed what implications these bounds have for adaptive sequential experiments, such as online A/B tests.

Finally, in \cref{section:cdf}, we derived simultaneously valid \cs{}s for every quantile of the off-policy reward distribution. Said differently, these bounds form time-uniform confidence bands for the CDF of the off-policy reward distribution.

There are a few other works that consider \cs{}s and test supermartingales without reference to OPE or contextual bandits, but that nevertheless now have interesting consequences in the OPE problem once paired with the present paper. In particular, we want to highlight the implications that~\citet{wang2020false} and~\citet{xu2022post} have on false discovery rate control in OPE\@, and how~\citet{waudby2022locally} immediately yields algorithms for differentially private OPE\@. We briefly discuss these implications below, but omit their full derivations since these extensions are rather simple and not central to the current paper.

\paragraph{False discovery/coverage rate control under arbitrary dependence.}

Suppose that rather than estimate a single policy value $\polval$, we are interested in a \emph{collection} $(\polval_1, \dots, \polval_J)$ containing the values of the policies $(\pi_1, \dots, \pi_J)$. When testing several hypothesis or constructing several \ci{}s, etc., it is often of interest to control some multiple testing metric, such as the false discovery rate (FDR), or the false coverage rate (FCR), respectively~\citep{benjamini1995controlling,benjamini2005false}. Rather surprisingly,~\citet{wang2020false} and~\citet{xu2022post} show that for tests and \ci{}s built from \emph{$e$-values} --- nonnegative test statistics with expectation at most one --- the FDR and FCR can be controlled under \emph{arbitrary} dependence with virtually no modification to the famous Benjamini-Hochberg~\citep{benjamini1995controlling} and Benjamini-Yekutieli~\citep{benjamini2005false} procedures, while this fact is not true for generic tests and \ci{}s. Relevant to the current paper, \emph{all} of our \cs{}s are fundamentally built from test supermartingales which form $e$-values at arbitrary stopping times. As a concrete consequence, we can take a collection of stopped \cs{}s $C_\tau^{(1)}, \dots, C_\tau^{(J)}$ for $(\pi_j)_{j \in [J]}$, adjust them via the $e$-BY procedure of~\citet{xu2022post} to produce $\widetilde C_\tau^{(1)}, \dots, \widetilde C_\tau^{(J)}$, so that the FCR is controlled at some desired level $\delta \in (0, 1)$. The ability control the FCR under arbitrary dependence is crucial for our setting since the \cs{}s $(C_\tau^{(j)})_{j\in[J]}$ are highly dependent and constructed from the same data, but with different importance weights. Similar implications hold for sequential tests and control of the FDR via the $e$-BH procedure of~\citet{wang2020false}.

\paragraph{Locally differentially private off-policy evaluation in contextual bandits.}
\citet{waudby2022locally} developed nonparametric \cs{}s and \ci{}s for means of bounded random variables under privacy constraints. The authors developed a so-called ``Nonparametric randomized response'' (NPRR) mechanism that serves as a nonparametric generalization of Warner's randomized response \citep{warner1965randomized}, mapping a $[0, 1]$-bounded random variable $Y_t$ to a new random variable $Z_t$ so that each $Z_t$ is an $\varepsilon$-locally differentially private view of $Y_t$ with mean $r\EE(Y_t) + (1-r)/2$, where $r$ is a known quantity that depends on $\varepsilon$ (and hence it is possible to work out what $\EE(Y_t)$ is). While~\citet{waudby2022locally} did not explicitly consider the contextual bandit setup, they did develop \cs{}s for time-varying treatment effects in sequential experiments, similar to the discussion in \cref{section:ate-randomized-expts}. However, like other prior work, their \cs{}s require \emph{a priori} knowledge of the minimal propensity score (in the language of this paper: they require knowledge of $\wmax$). Nevertheless, it is possible to derive locally private \cs{}s for (time-varying) policy values without knowledge of $\wmax$ using the techniques of the current paper. Moreover, several policies can be evaluated from a \emph{single} application of NPRR, thereby avoiding inflation of the privacy parameter $\varepsilon$ from evaluating multiple policies. That is, given $[0, 1]$-bounded rewards $\infseqt{R_t}$, we can use NPRR to generate private views $\infseqt{Z_t}$ of these rewards, and notice that $\EE(w_tZ_t) = \EE_{A_t \sim \pi}(Z_t) = r\EE_{A_t\sim\pi}(R_t) + (1-r)/2$, and hence a \cs{} for $\EE(w_tZ_t)$ can be translated into a \cs{} for $\EE_{A_t \sim \pi}(R_t)$ even though we only see a privatized version of $R_t$. In particular, practitioners can derive locally private \cs{}s for time-varying policy values using \cref{theorem:conjmix-eb} for several policies $\pi_1, \dots, \pi_J$, with only a single application of NPRR.

\bigskip

We believe that this paper presents a comprehensive treatment of OPE inference, yielding procedures that are theoretically valid under more general settings and yet deliver state-of-the-art practical performance. A challenging open problem is to extend these techniques to the off-policy MDP (Markov Decision Process) setting, where the actions at each step affect subsequent covariate and reward distributions, as captured by state variables. Another important open problem is to design practical OPE inference methods not just for one policy, but uniformly over an entire family of policies. 

\subsection*{Acknowledgements}
IW-S thanks Martin Larsson, Alec McClean, Steve Howard, Ruohan Zhan, and Edward H. Kennedy for helpful discussions. The authors acknowledge support from NSF grants IIS-2229881 and DMS-2310718.
AR acknowledges funding from NSF Grant DMS2053804 and from ARL IoBT REIGN. Research reported in this paper was sponsored in part by the DEVCOM Army Research Laboratory under Cooperative Agreement W911NF-17-2-0196 (ARL IoBTCRA). The views and conclusions contained in this document are those of the authors and should not be interpreted as representing the official policies, either expressed or implied, of the Army Research Laboratory or the U.S. Government. The U.S. Government is authorized to reproduce and distribute reprints for Government purposes notwithstanding any copyright notation herein.

\bibliographystyle{plainnat}
\bibliography{references.bib}
\newpage
\appendix

\section{Proofs of the main results}

\subsection{A technical lemma}

\begin{lemma}\label{lemma:bernsm}
Let $Z$ and $\widehat Z$ be $\history$-adapted processes such that $Z_t - \widehat Z_{t-1} \geq -1$ almost surely for all $t$. Denoting $\mu_t := \EE(Z_t \mid \history_{t-1})$, we have that for any $(0, 1)$-valued predictable process $\infseqt{\lambda_t}$, 
\begin{equation}
M_t := \exp\left(\sum_{i=1}^t \lambda_i (Z_t - \mu_t) - \sum_{i=1}^t \psi_E(\lambda_i) \left(Z_t - \widehat Z_{t-1}\right)^2\right),
    \label{eq:ebernsm}
\end{equation}
forms a test supermartingale, where $\psi_E(\lambda) := -\log(1-\lambda) - \lambda$.
\end{lemma}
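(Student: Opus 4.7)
The plan is to prove this by showing directly that $M_t$ is nonnegative, starts at one, and satisfies the supermartingale condition $\EE[M_t \mid \Hcal_{t-1}] \leq M_{t-1}$. Nonnegativity is immediate since $M_t$ is an exponential, and $M_0 = 1$ follows from the usual convention that empty sums are zero. The whole content lies in the one-step supermartingale inequality, which after factoring out the $\Hcal_{t-1}$-measurable $M_{t-1}$ reduces to showing
\[
  \EE\!\left[\exp\!\Big(\lambda_t(Z_t - \mu_t) - \psi_E(\lambda_t)(Z_t - \widehat Z_{t-1})^2\Big)\,\Big|\,\Hcal_{t-1}\right] \leq 1.
\]

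The central tool will be Fan's inequality (cited already in the paper as \citet{fan2015exponential}): for any $\lambda \in [0,1)$ and any real number $x \geq -1$,
\[
  \exp\!\big(\lambda x - \psi_E(\lambda)\,x^2\big) \;\leq\; 1 + \lambda x.
\]
I would apply this with $x = D_t := Z_t - \widehat Z_{t-1}$, which by hypothesis satisfies $D_t \geq -1$, and with $\lambda = \lambda_t \in (0,1)$, which is permitted because $\lambda_t$ is $\Hcal_{t-1}$-measurable and can be treated as a constant inside the conditional expectation.

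The one remaining issue is that the exponent in $M_t$ contains $\lambda_t(Z_t - \mu_t)$ rather than $\lambda_t D_t$. I would handle this by writing $Z_t - \mu_t = D_t - (\mu_t - \widehat Z_{t-1})$ and pulling the deterministic (conditional on $\Hcal_{t-1}$) factor $\exp(-\lambda_t(\mu_t - \widehat Z_{t-1}))$ out of the conditional expectation. Applying Fan's inequality to what remains gives the bound
\[
  \EE\!\left[\exp\!\Big(\lambda_t(Z_t - \mu_t) - \psi_E(\lambda_t)D_t^2\Big)\,\Big|\,\Hcal_{t-1}\right]
  \;\leq\; e^{-\lambda_t c_t}\bigl(1 + \lambda_t\,\EE[D_t \mid \Hcal_{t-1}]\bigr) = e^{-\lambda_t c_t}(1 + \lambda_t c_t),
\]
where $c_t := \mu_t - \widehat Z_{t-1}$ is $\Hcal_{t-1}$-measurable. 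Finishing the proof then amounts to the elementary inequality $1 + y \leq e^y$ applied with $y = \lambda_t c_t$, which immediately gives $e^{-\lambda_t c_t}(1 + \lambda_t c_t) \leq 1$.

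The only mild obstacle is the bookkeeping to ensure that Fan's inequality is actually applicable: it requires $\lambda_t \in [0,1)$ (guaranteed by the hypothesis) and the lower bound $D_t \geq -1$ (given). Note that predictability of $\widehat Z_{t-1}$ is what makes the shift from $D_t$ to $Z_t - \mu_t$ produce only an $\Hcal_{t-1}$-measurable multiplicative factor, rather than disrupting the conditional expectation; hence the assumption in the lemma that $\widehat Z$ is $\Hcal$-adapted (so that $\widehat Z_{t-1}$ is $\Hcal_{t-1}$-measurable) is essential and used precisely at this step. No further technical subtleties arise.
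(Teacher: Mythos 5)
Your proposal is correct and follows essentially the same route as the paper's proof: both apply Fan's inequality $\exp(\lambda b - \psi_E(\lambda)b^2) \leq 1 + \lambda b$ with $b = Z_t - \widehat Z_{t-1} \geq -1$, factor out the $\Hcal_{t-1}$-measurable exponential of $\lambda_t(\widehat Z_{t-1} - \mu_t)$, use that $Z_t - \mu_t$ is conditionally mean zero, and close with $1 + y \leq e^y$. The only difference is a sign convention ($c_t = \mu_t - \widehat Z_{t-1}$ versus the paper's $\delta_t = \widehat Z_{t-1} - \mu_t$), so no further comment is needed.
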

Above, $\widehat Z_{t-1}$ is to be interpreted as an estimator of $Z_t$ using the first $t-1$ samples. Closely related lemmas have appeared in~\cite{fan2015exponential,howard_exponential_2018,howard2018uniform,waudby2020estimating}, but those papers assumed $Z_t-\mu_t \geq -1$, which does not suffice for our purposes. What is somewhat surprising above is that we do not require a particular lower bound on $Z_t$ or an upper bound on $\mu_t$, as long as $Z_t - \widehat Z_{t-1} \geq -1$.

\begin{proof}
First, note that $M_0 \equiv 1$ by construction, and $M_t$ is always positive. It remains to show that $M_t$ forms a supermartingale. Writing out the conditional expectation of $M_t$ given $\history_{t-1}$, we have that
\begin{align}
    \EE(M_t \mid \history_{t-1}) = M_{t-1} \underbrace{\EE \left (\exp \left \{ \lambda_t (Z_t - \mu_t) - \psi_E(\lambda_t) (Z_t - \widehat Z_{t-1})^2\right \} \mid \history_{t-1} \right )}_{(\dagger)},
\end{align}
and hence it suffices to prove that $(\dagger) \leq 1$.
Denote for the sake of succinctness,
\[Y_{t} := Z_t - \mu_t ~~~\text{ and }~~~ \delta_t := \widehat Z_{t-1} - \mu_t, \]
and note that $\EE(Y_t \mid \history_{t-1}) = 0$.
Using the proof of \citet[Proposition 4.1]{fan2015exponential}, we have that $\exp\{b\lambda  - b^2 \psi_{E} (\lambda)\} \leq 1 + b \lambda$ for any $\lambda \in [0, 1)$ and $b \geq -1$. Setting $b := Y_{t} - \delta_{t} = Z_t- \widehat Z_{t-1}$,
\begin{align*}
	&\EE \left [\exp \left \{ \lambda_t Y_{t} - (Y_{t} - \delta_{t})^2 \psi_{E}(\lambda_t ) \right \}  \Bigm | \history_{t-1} \right ]\\
	=\ &\EE \left [\exp \left \{ \lambda_t (Y_{t}-\delta_{t}) - (Y_{t} - \delta_{t})^2 \psi_{E}(\lambda_t) \right \}  \bigm | \history_{t-1} \right ] \exp(\lambda_t \delta_{t})\\
  \leq \ &\EE\left  [1 + (Y - \delta_{t} )\lambda_t \mid \history_{t-1} \right ]\exp(\lambda_t \delta_{t}) \\
  = \ &\EE\left [1 - \delta_{t} \lambda_t \mid \history_{t-1} \right ]\exp(\lambda_t \delta_{t}) \leq 1,
\end{align*}
where the last line follows from the fact that $Y_{t}$ is conditionally mean zero
and the inequality $1-x \leq \exp(-x)$ for all $x \in \mathbb R$. This completes the proof.
  
\end{proof}

\subsection{Proof of \cref{theorem:dr-fixed-policy-value}}\label{proof:dr-fixed-policy-value}

We will only derive the lower \cs{} for $\polval$, since the upper \cs{} follows analogously. Consider the process $\infseqt{M_t(\nu)}$ given by
\begin{equation}\label{eq:label}
  M_t(\nu) :=  \prod_{i=1}^t \left [ 1 + \lambda_i^L(\nu) \cdot (\phi_i^\DRL - \nu) \right ]. 
\end{equation}

The proof proceeds in three steps, following the strategy of~\cite{howard2018uniform,waudby2020estimating} and~\cite{karampatziakis2021off}. 
In Step 1, we show that the pseudo-outcomes have conditional mean $\polval$, i.e. $\EE(\phi_t^\DRL \mid \history_{t-1}) = \EE_\pi (R_t \mid \history_{t-1}) =\polval$. In Step 2, we use Step 1 to show that $M_t(\polval)$ forms a test martingale and apply Ville's inequality to it. In Step 3, we ``invert'' this test martingale to obtain the lower \cs{} found in \cref{theorem:dr-fixed-policy-value}.

\paragraph{Step 1: Computing the conditional mean of the doubly robust pseudo-outcomes.}
Writing out the conditional expectation of $\phi_t^\DRL$, we have
\begin{align*}
  &\EE[\phi_t^\DRL \mid \history_{t-1}] \\
  = \ &\EE (w_t R_t \mid \history_{t-1}) - \EE \left \{ w_t \cdot \left (\widehat r_t(X_t; A_t) \land \frac{k_t}{w_t} \right ) - \EE_{a \sim \pi(\cdot \mid X_t)} \left ( \widehat r_t(X_t; a) \land \frac{k_t}{w_t} \right ) \mid \history_{t-1} \right \} \\ = \ & \EE (w_t R_t \mid \history_{t-1}) \\
  =\ &\int_{(x,a,r)} \frac{\pi(a \mid x)}{h_t(a\mid x)} r \cdot p_{R_t} (r \mid a, x, \history_{t-1}) h_t(a \mid x) p_{X_t}(x \mid \history_{t-1}) \dd x \dd a \dd r\\
  =\ &\int_{(x,a,r)} r \cdot p_{R_t} (r \mid a, x, \history_{t-1}) \pi(a \mid x) p_{X_t}(x \mid \history_{t-1}) \dd x \dd a \dd r\\
  = \ &\EE_\pi (R_t \mid \history_{t-1}) = \polval.
\end{align*}

\paragraph{Step 2: Showing that $M_t(\polval)$ forms a test martingale.}
 First, note that $M_0 \equiv 1$ by construction. To show that $M_t$ is nonnegative, notice that since $R_t, \widehat r_t \in [0, 1]$ almost surely, we have that $\phi_t^\DRL \geq -k_t$. Therefore, for any $\nu \in [0, 1]$, 
\begin{align*}
    1 + \lambda_t^L(\nu) \cdot (\phi_t^\DRL - \nu) 
   \geq \ & 1 + \lambda_t^L(\nu) \cdot (- k_t - \nu) \\
   > \ & 1 + \frac{- k_t - \nu}{k_t + \nu} \ \ \ (\text{since $\lambda_t^L(\nu) \in [0, (\nu + k_t)^{-1})$}) \\
   = \ & 0.
\end{align*}
Lastly, it remains to show that $\EE \left [ M_t(\polval) \mid \history_{t-1} \right ] = M_{t-1}(\polval)$. Writing out the conditional expectation of $M_t(\polval)$, we have
\begin{align*}
    \EE \left [ M_t(\polval) \mid \history_{t-1} \right ] & = \EE \left [ M_{t-1} (\polval) \left \{ 1 + \lambda_t^L(\polval) \cdot (\phi_t^\DRL - \polval) \right \} \mid \history_{t-1} \right ] \\
    & =   M_{t-1}(\polval) \cdot \left [ 1 + \lambda_t^L(\polval) \cdot \EE \left \{ (\phi_t^\DRL - \polval)  \mid \history_{t-1} \right \} \right ] \\
    & =  M_{t-1}(\polval) \cdot (1 + \lambda_t^L(\polval) \cdot 0) = M_{t-1}(\polval), 
\end{align*}
where the second line follows from the fact that $M_{t-1}, \lambda_t^L$ are predictable, and the third line follows from Step 1. Therefore, by Ville's inequality for nonnegative supermartingales \citep{ville1939etude}, we have
\begin{equation}\label{eq:ville}
    \PP \left (\exists t \in \NN,\ M_t(\polval) \geq \frac{1}{\alpha} \right ) \leq \alpha.
\end{equation}

\paragraph{Step 3: Inverting Ville's inequality to obtain a lower \cs{}.} Recall the lower boundary given by \eqref{eq:dr-lower},
\begin{equation*}
    L_t^\DR := \inf \left \{ \candpolval \in [0, 1] : \prod_{i=1}^t \left [ 1 + \lambda_i^L(\candpolval) \cdot (\phi_i^\DRL - \candpolval) \right ] < \frac{1}{\alpha} \right \}
\end{equation*}
and notice that if $\polval < L_t^\DR$, then  $M_t(\polval) \geq 1/\alpha $ by definition of $L_t^\DR$. Consequently,
\begin{equation*}
 \PP(\exists t \in \NN,\ \polval < L_t^\mathrm{DR}) \leq \PP \left (\exists t \in \NN,\ M_t(\polval) \geq \frac{1}{\alpha} \right ) \leq \alpha.
\end{equation*}
 Therefore, we have $\PP(\forall t \in \NN,\ \polval \geq L_t^\mathrm{DR}) \geq 1-\alpha$, so $L_t^\DR$ forms a lower $(1-\alpha)$-\cs{} for $\polval$, which completes the proof.

\qed

\subsection{Proof of \cref{theorem:conjmix-eb}}\label{proof:conjmix-eb}



\begin{proof}[Proof of \cref{theorem:conjmix-eb}]
The proof proceeds in three steps, following the high level outline of the conjugate mixture method in~\cite{howard2018uniform}. First, we invoke \cref{lemma:bernsm} to derive a test supermartingale for each $\lambda \in (0, 1)$. Second, we mix over $\lambda \in (0, 1)$ using the truncated gamma density to obtain \eqref{eq:gamma-exponential-nsm}. Third and finally, we invert this test supermartingale to obtain a lower \cs{} for $\polvalt$.
\paragraph{Step 1: Deriving a test supermartingale indexed by $\lambda \in (0, 1)$.}
    Let $Z_t := \xi_t$ and $\widehat Z_{t-1} := \widehat \xi_{t-1}$ as in the setup of \cref{theorem:conjmix-eb}. First, notice that $\EE(\xi_t \mid \history_{t-1}) = \polval_t$:
    \begin{align}
        \EE(\xi_t \mid \history_{t-1}) &= \EE(w_t R_t \mid \history_{t-1}) \\
        &= \int_{x,a,r} \frac{\pi(a \mid x)}{h_t(a \mid x)} r p_{R_t}(r \mid a, x, \history_{t-1}) h_t(a \mid x) p_{X_t}(x \mid \history_{t-1}) \dd x \dd a \dd r .
    \end{align}
    Notice that
    \(
        \xi_t - \widehat \xi_{t-1} \geq -1,
    \)
    and hence by \cref{lemma:bernsm}, we have that for any $\lambda \in (0, 1)$
    \begin{equation*}
        M_t(\polvalt; \lambda) := \exp \left \{ \lambda S_t(\polvalt) - V_t \psi_E(\lambda) \right \}
    \end{equation*}
    forms a test supermartingale.
    
\paragraph{Step 2: Mixing over $\lambda$ using the truncated gamma density.}
For any distribution $F$ on $(0, 1)$,
\begin{align}
    M_t^\EB(\polvalt) := \int_{\lambda \in (0, 1)} M_t(\polvalt; \lambda) \dd F(\lambda)
\end{align}
forms a test supermartingale by Fubini's theorem. In particular, we will use the truncated gamma density $f(\lambda)$ given by
\begin{equation}
f(\lambda) = \frac{\rho^\rho e^{-\rho \left(1 - \lambda\right)} \left(1 - \lambda\right)^{\rho - 1}}{\Gamma(\rho) - \Gamma(\rho, \rho)}, 
\end{equation}
as the mixing density. Writing out $M_t(\polval)$ using $dF(\lambda) := f(\lambda)d\lambda$, we have 
\begin{align*}
M_t^\EB(\polvalt) &:= \int_0^1 \exp \left \{ \lambda S_t(\polvalt) - V_t \psi_E(\lambda) \right \} f(\lambda) \dd\lambda \\
&=\int_0^1 \exp \left \{ \lambda S_t(\polvalt) - V_t \psi_E(\lambda) \right \} \frac{\rho^\rho e^{-\rho \left(1 - \lambda\right)} \left(1 - \lambda\right)^{\rho - 1}}{\Gamma(\rho) - \Gamma(\rho, \rho)} \dd\lambda \\
&= \frac{\rho^\rho e^{-\rho}}{\Gamma(\rho) - \Gamma(\rho, \rho)} \int_0^1 \exp \{{\lambda \left(\rho + S_t + V_t\right)}\} \left(1 - \lambda\right)^{V_t + \rho - 1} \dd\lambda \\
&= \left(\frac{\rho^\rho e^{-\rho}}{\Gamma(\rho) - \Gamma(\rho, \rho)}\right) \left(\frac{1}{V_t + \rho}\right) \left.\left(  \frac{\Gamma(b)}{\Gamma(a) \Gamma(b-a)} \int_0^1 e^{z u} u^{a-1} (1 - u)^{b - a - 1} \dd u \right)\right|_{\substack{a = 1 \\ b = V_t + \rho + 1 \\ z = S_t + V_t + \rho }} \\
&= \left(\frac{\rho^\rho e^{-\rho}}{\Gamma(\rho) - \Gamma(\rho, \rho)}\right) \left(\frac{1}{V_t + \rho}\right) \onefone(1, V_t + \rho + 1, S_t + V_t + \rho),
\end{align*}
which completes this step.

\paragraph{Step 3: Inverting the mixture test supermartingale to obtain \eqref{eq:gamma-exponential-nsm}.}
Similar to Step 3 of the proof of \cref{theorem:dr-fixed-policy-value}, we have that $\polvalt < L_t^\EB$ if and only if $M_t(\polvalt) \geq 1/\alpha$, and hence by Ville's inequality for nonnegative supermartingales, we have that
\begin{equation*}
    \PP(\exists t : \polvalt < L_t^\EB) = \PP(\exists t : M_t^\EB(\polvalt) \geq 1/\alpha) \leq \alpha,
\end{equation*}
and hence $L_t^\EB$ forms a lower $(1-\alpha)$-\cs{} for $\polvalt$. This completes the proof.
\end{proof}

\begin{remark}[Writing \eqref{eq:gamma-exponential-nsm} in terms of the lower incomplete gamma function]\label{section:onefone-vs-lowerincgamma}
For readers familiar with \citet[Proposition 9]{howard2018uniform}, we can rewrite \eqref{eq:gamma-exponential-nsm} in terms of the lower incomplete gamma function via the identity
$\onefone(1, b, z) = (b-1) e^z z^{1-b} (\Gamma(b-1)-\Gamma(b-1,z))$, resulting in \begin{align*}
& \left(\frac{\rho^\rho e^{-\rho}}{\Gamma(\rho) - \Gamma(\rho, \rho)}\right) \left(\frac{1}{v + \rho}\right) \onefone(1, v + \rho + 1, s + v + \rho) \\
 = \ &\left(\frac{\rho^\rho}{\Gamma(\rho) \gamma(\rho, \rho)}\right)  \frac{\Gamma(v + \rho) \gamma(v + \rho, s + v + \rho)}{(s + v + \rho)^{v + \rho}} \exp\left(s + v\right),
\end{align*}
where $\gamma(\cdot, \cdot)$ is the lower regularized incomplete gamma function and $v = V_t$ and $s = S_t(\polvalt)$.  This matches \citet[Eq. (66)]{howard2018uniform} when setting $c = 1$. The final representation above is real-valued after some complex terms are cancelled (in the case where $(s + v + \rho)$ is negative), but the representation in terms of $\onefone(1, \cdot, \cdot)$ sidesteps this subtlety altogether, which is why we prefer to use it in \cref{theorem:conjmix-eb}.

\end{remark}

\subsection{Proof of \cref{proposition:prpl-cs}}\label{proof:prpl-cs}
\begin{proof}
  Consider the process $M \equiv \infseqt{M_t}$ given by
  \begin{equation}
    M_t := \exp \left \{ \sum_{i=1}^t \lambda_i \left ( \xi_i - \frac{\polval}{k_i + 1} \right ) - \sum_{i=1}^t \left ( \xi_i - \widehat \xi_{i-1} \right )^2\psi_E(\lambda_i)  \right \}.
  \end{equation}
  Then by~\cref{lemma:bernsm}, we have that $M$ is a test supermartingale, and hence by Ville's inequality, $\PP(\exists t : M_t \geq 1/\alpha)\leq \alpha$. Inverting this time-uniform concentration inequality, we have that with probability at least $(1-\alpha)$ and for all $t \in \NN$,
  \begin{align*}
    M_t < 1/\alpha &\iff \exp \left \{ \sum_{i=1}^t \lambda_i \left ( \xi_i - \frac{\polval}{k_i + 1} \right ) - \sum_{i=1}^t \left ( \xi_i - \widehat \xi_{i-1} \right )^2\psi_E(\lambda_i)  \right \} < \frac{1}{\alpha}\\
        &\iff \sum_{i=1}^t \lambda_i \left ( \xi_i - \frac{\polval}{k_i + 1} \right ) - \sum_{i=1}^t \left ( \xi_i - \widehat \xi_{i-1} \right )^2 \psi_E(\lambda_i) < \log(1/\alpha)\\
        &\iff \sum_{i=1}^t \lambda_i\xi_i - \polval \sum_{i=1}^t \frac{\lambda_i}{k_i+1} - \sum_{i=1}^t \left ( \xi_i - \widehat \xi_{i-1} \right )^2 \psi_E(\lambda_i) < \log(1/\alpha)\\
    &\iff \polval > \frac{\sum_{i=1}^t \lambda_i \xi_i}{\sum_{i=1}^t \lambda_i / (k_i + 1)} - \frac{\log(1/\alpha) + \sum_{i=1}^t \left ( \xi_i - \widehat \xi_{i-1} \right )^2\psi_E(\lambda_i)}{\sum_{i=1}^t \lambda_i / (k_i + 1)},
  \end{align*}
  which completes the proof.
\end{proof}

\subsection{Proof of \cref{proposition:lil-eb}}\label{proof:lil-eb}

We will prove a more general result below for arbitrary $\eta, s > 1$, but the exact constants in \cref{proposition:lil-eb} can be obtained by setting $\eta = e$, $s = 2$. By~\cref{lemma:bernsm} combined with \citet[Table 5, row 7]{howard_exponential_2018}, we have that $S_t(\polvalt)$ is a sub-gamma process with scale parameter $c = 1$, meaning for any $\lambda \in [0, 1)$
\begin{equation}
  M_t^G(\lambda) := \exp \left \{ \lambda S_t(\polvalt) - V_t \psi_{G}(\lambda) \right \},
\end{equation}
where $\psi_G(\lambda) \equiv \psi_{G,1}(\lambda) = \frac{\lambda^2 }{ 2(1-\lambda) }$. Define the following parameters:
\begin{align*}
  \lambda_k &:= \psi^{-1}(\log(1/\alpha) / \eta^{k+1/2}), ~~\text{where}~~\psi^{-1}_G(a) := \frac{2}{1 + \sqrt{1 + 2/a}},\\
  \alpha_k &:= \frac{\alpha}{(k+1)^s \zeta(s)},~~\text{and}\\
  b_{t,k} &:= \frac{V_t \psi_G (\lambda_k) + \log(1/\alpha_k)}{\lambda_k}.
\end{align*}
Taking a union bound over $k \in \NN$, we have that
\begin{equation*}
  \PP \left ( \forall t \in \NN, k \in \NN,\ S_t(\polvalt) \leq b_{t,k} \right ) \geq 1-\alpha.
\end{equation*}
It remains to find a deterministic upper bound on $b_{t,k}$ that does not depend on $k$. Indeed, similar to \citet[Eq. (39)]{howard2018uniform}, we have that
\begin{equation}
  b_{t, k} = A \left ( \frac{\log(1/\alpha_k)}{\eta^{k+1/2}} \right ) \underbrace{\left [ \sqrt{\frac{\eta^{k+1/2}}{V_t}} + \sqrt{\frac{V_t}{\eta^{k+1/2}}} \right ]}_{(\star)} \sqrt{\frac{\log(1/\alpha_k) V_t}{2}},
\end{equation}
where $A(a) := \sqrt{2a} / \psi^{-1}_G(a) = \sqrt{1 + a/2} + \sqrt{a/2}$. Now, notice that $(\star)$ is convex in $V_t$ on $V_t \in [\eta^k, \eta^{k+1}]$, and hence $(\star)$ is maximized at the endpoints $\eta^{k}$ and $\eta^{k+1}$. Consequently, on the $k^\mathrm{th}$ epoch --- i.e.~when $\eta^k \leq V_t \leq \eta^{k+1}$ --- we have that
\begin{align}
  b_{t,k} &\leq A \left ( \frac{\log(1/\alpha_{k})}{\eta^{k+1/2}}  \right ) \left [ \eta^{-1/4} + \eta^{1/4} \right ] \sqrt{\frac{\log(1/\alpha_k) V_t}{2}} \nonumber \\
          &\leq A \left ( \frac{\sqrt{\eta}\log(1/\alpha_{k})}{V_t}  \right ) \left [ \eta^{-1/4} + \eta^{1/4} \right ] \sqrt{\frac{\log(1/\alpha_k) V_t}{2}} \nonumber \\
          &= \left [ \sqrt{1+\frac{\sqrt{\eta} \log(1/\alpha_k)}{2V_t}} + \sqrt{\frac{\sqrt{\eta}\log(1/\alpha_k)}{2V_t}} \right ] \cdot \left [ \eta^{-1/4} + \eta^{1/4} \right ] \cdot \sqrt{\frac{\log(1/\alpha_k)V_t}{2}} \label{eq:proof-lil-eb-line1}
\end{align}
where the first inequality follows from our analysis of $(\star)$, the second follows from monotonicity of $A(\cdot)$ and the fact that $V_t \leq \eta^{k+1}$ on the $k^\mathrm{th}$ epoch, the third follows from the definition of $A(\cdot)$. Rewriting the final line~\eqref{eq:proof-lil-eb-line1} more succinctly, we have the following upper bound on $b_{t,k}$ for every $k \in \NN$,
\begin{align}
  &b_{t,k} \leq \sqrt{\gamma_1^2 \log(1/\alpha_k) V_t + \gamma_2^2 \log^2(1/\alpha_k)} + \gamma_2 \log(1/\alpha_k), \label{eq:proof-lil-eb-line2}\\
  &\text{where}~~\gamma_1 := \frac{\eta^{1/4} + \eta^{-1/4}}{\sqrt{2}},~~\text{and}~~\gamma_2 := \frac{\sqrt{\eta} + 1}{2}.
\end{align}
Now, notice that the above bound only depends on $k$ through $\log(1/\alpha_k)$. As such, we will upper bound $\log(1/\alpha_k)$ solely in terms of $V_t$ and other constants. Indeed, on the $k^\mathrm{th}$ epoch, we have
\begin{align}
  \log(1/\alpha_k) &\equiv \log \left ( \frac{(k+1)^s \zeta(s)}{\alpha} \right ) \nonumber 
                   = s\log \left ( k+1 \right ) + \log \left ( \frac{\zeta(s)}{\alpha} \right ) \nonumber \\
  & \leq s \log \left ( \log_\eta V_t + 1 \right ) + \log \left ( \frac{\zeta(s)}{\alpha} \right ) \equiv \ell_t\label{eq:proof-lil-eb-upper-bound-alphak}
\end{align}
where the final line used the upper bound $k \leq \log_\eta V_t $ which follows because $\eta^k \leq V_t$ on the $k^\mathrm{th}$ epoch. Combining~\eqref{eq:proof-lil-eb-line2} and~\eqref{eq:proof-lil-eb-upper-bound-alphak}, we have that
\begin{equation}
  b_{t,k} \leq \sqrt{\gamma_1^2 \ell_t V_t + \gamma_2^2 \ell_t^2} + \gamma_2 \ell_t, ~~\text{where}~~\ell_t := s \log \left ( \log_\eta V_t + 1 \right ) + \log \left ( \frac{\zeta(s)}{\alpha} \right ),
\end{equation}
which no longer depends on $k$. Consequently, we have that
\begin{align*}
  1-\alpha &\leq \PP \left ( \forall t \in \NN,\ \sum_{i=1}^t \xi_i - \sum_{i=1}^t \polval_i \leq \sqrt{\gamma_1^2 \ell_t V_t + \gamma_2^2 \ell_t^2} + \gamma_2 \ell_t \right ) \\
  & = \PP \left ( \forall t \in \NN,\ \polvalt \geq \underbrace{\frac{1}{t} \sum_{i=1}^t \xi_i - \frac{\sqrt{\gamma_1^2 \ell_t V_t + \gamma_2^2 \ell_t^2}}{t} - \frac{\gamma_2 \ell_t}{t}}_{(\dagger)} \right ),
\end{align*}
and hence $(\dagger)$ forms a lower $(1-\alpha)$-\cs{} for $\polvalt$.

\subsection{Proof of \cref{theorem:cdf}}\label{proof:cdf}



We will prove a more general result below for arbitrary $\eta, s, \delta > 1$, but the exact constants in \cref{proposition:lil-eb} can be obtained by setting $\eta = e$, and $s = \delta = 2$.
The proof will proceed in five steps. First, we derive an exponential $e$-process --- i.e.~an adapted process upper-bounded by a test supermartingale --- from $S_t(p) := \sum_{i=1}^t w_i \1(R_i \leq Q(p)) - tp$. Second, we apply Ville's inequality to the aforementioned $e$-process to obtain a level-$\alpha$ linear boundary $b_t(p)$ on $S_t(p)$, meaning $\PP(\exists t \in \NN : S_t(p) \geq b_t(p)) \leq \alpha$. Third, we derive one level-$\alpha_{k,j}$ linear boundary for each $k \in \NN, j \in \ZZ$ using the techniques of Step 2 so that $\sum_{k\in\NN}\sum_{j \in \ZZ} \alpha_{k,j} \leq \alpha$ and take a union bound over all of them. Here, $k \in \NN$ will index exponentially spaced epochs of time $t \in \NN$, while $j \in \ZZ$ will index evenly-spaced log-odds of $p \in (0, 1)$. Fourth, we modify the boundaries derived in Step 3 to obtain a boundary that is uniform in both $t \in \NN$ \emph{and} in $p \in (0, 1)$. Fifth and finally, we obtain an analytic upper bound on the boundary derived in Step 4.

At several points throughout the proof, we will make use of various functions that depend on $k$ and $j$. While we will define them as they are needed, we also list them here for reference.
\begin{subequations}
  \begin{align}
    W_t &:= \sum_{i=1}^t w_i^2, \label{eq:W_t}\\
    \alpha_{k,j} &:= \frac{\alpha}{(k+1)^s (|j| \lor 1)^s \zeta(s) (2\zeta(s) + 1)},\label{eq:alpha_kj}\\
    q(k,j) &:= \frac{1}{1+\exp \left \{ -2j\delta / \eta^{k/2} \right \}}, \label{eq:q_kj}\\
    j(k,p) &:= \left \lceil \frac{\eta^{k/2} \logit(p) }{2\delta} \right \rceil, \\
    \lambda(k,j) &:= \psi^{-1}_{G, q(k,j)}( \log(1/\alpha_{k,j}) / \eta^{k+1/2}),~~\text{where}~~\psi_{G,c}^{-1}(a) := \frac{2}{c + \sqrt{c^2 + 2/a}}, ~~\text{and}\label{eq:lambda_kj}\\
    b_{t,k}(p) &:= \frac{W_t \psi_{G, p} (\lambda_{k,j}) + \log(1/\alpha_{k,j})}{\lambda_{k,j}}.
  \end{align}
  \end{subequations}

    \paragraph{Step 1: Deriving an $e$-process.}
    Invoking \cref{lemma:bernsm} combined with \citet[Table 5, row 7]{howard_exponential_2018} we have that for any $p \in (0, 1)$, $S_t(p)$ is sub-gamma \citep{howard_exponential_2018,howard2018uniform} with variance process $V_t(p) := \sum_{i=1}^t ( w_i \1\{ R_i \leq Q(p)\} )^2$ and scale $c = p$, meaning we have that for any $\lambda \in [0, 1/c)$,
    \begin{equation}
      M_t^G(\lambda; p) := \exp\left \{ \lambda S_t(p) - V_t(p) \psi_{G, p}(\lambda) \right \}
    \end{equation}
    forms a test supermartingale.
    Now, since $V_t(p) \leq \sum_{i=1}^t w_i^2 \equiv W_t$ almost surely, we have that
    \begin{equation}
      E_t^G(\lambda; p) := \exp \left \{ \lambda S_t(p) - W_t \psi_{G,p}(\lambda) \right \} \leq M_t^G( \lambda; p)
    \end{equation}
    forms an $e$-process --- i.e.~it is upper-bounded by a test supermartingale.
    This completes the first step of the proof.

    \paragraph{Step 2: Applying Ville's inequality to $E_t^G(\lambda; p)$, yielding a time-uniform linear boundary.}
    In Step 1, we showed that $E_t^G(\lambda; p)$ forms an $e$-process. By Ville's maximal inequality for nonnegative supermartingales \citep{ville1939etude}, we have that
    \begin{equation}
      \PP(\exists t \in \NN : E_t^G( \lambda; p) \geq 1/\alpha) \leq \PP(\exists t \in \NN : M_t^G( \lambda; p) \geq 1/\alpha) \leq \alpha.
    \end{equation}
    Now, we will rewrite the inequality $E_t^G( \lambda; p) \geq 1/\alpha$ slightly more conveniently so that we can derive a time-uniform concentration inequality for $S_t(p)$. Indeed,
    \begin{align*}
      E_t^G( \lambda; p) \geq 1/\alpha &\iff \lambda S_t(p) - W_t \psi_{G,p}(\lambda) \geq \log(1/\alpha)\\
      &\iff S_t(p) \geq \underbrace{\frac{W_t \psi_{G,p}(\lambda) + \log(1/\alpha)}{\lambda}}_{b_{t}(p)}.
    \end{align*}
    In summary, we have the following time-uniform concentration inequality on $S_t(p)$ for any $p \in (0, 1)$, $\alpha \in (0, 1)$ and $\lambda \in [0, 1/p)$,
    \begin{equation}\label{eq:S_t-concentration-simple}
      \PP \left ( \exists t \in \NN : S_t(p) \geq b_t(p) \right ) \leq \alpha, ~~~\text{where}~~ b_t(p) := \frac{W_t \psi_{G,p}(\lambda)+ \log(1/\alpha)}{\lambda},
    \end{equation}
    which could also be written as a time-uniform high-probability upper bound on $S_t(p)$:
    \begin{equation}
      \PP \left ( \forall t \in \NN, \ S_t(p) < b_t(p) \right ) \geq 1-\alpha.
    \end{equation}
    
    \paragraph{Step 3: Union-bounding over infinitely many choices of $\lambda$, $\alpha$, and $p$.}
    In Step 2, we showed that $b_t(p)$ forms a time-uniform high-probability upper bound for $S_t(p)$. We will now take a union bound over a countably infinite two-dimensional grid of $t$ and $p$. Concretely, for each $k \in \NN$ and $j \in \ZZ$, recall $\alpha_{k, j}$, $q(k,j)$, and $\lambda(k,j)$ as in \eqref{eq:alpha_kj}, \eqref{eq:q_kj}, and \eqref{eq:lambda_kj}.
    The exact choices of $q(k,j)$ and $\lambda(k,j)$ will become relevant later. For now, note that by \eqref{eq:S_t-concentration-simple} from Step 2 combined with a union bound, we have that
    \begin{align}
      &\PP \left ( \exists t \in \NN, k \in \NN, j \in \ZZ : S_t(q(k,j)) \geq b_{t,k}(q(k,j)) \right ) \leq \sum_{k \in \NN} \sum_{j \in \ZZ} \alpha_{k, j}, \label{eq:union-bound-alpha_kj}\\
        &\text{where}~~~ b_{t,k}(q(k,j)) := \frac{W_t\psi_{G, q(k,j)}(\lambda_{k,j}) + \log(1/\alpha_{k,j})}{\lambda_{k,j}}.
    \end{align}
    We will now show that $\sum_{k\in \NN}\sum_{j \in \ZZ}\alpha_{k,j} = \alpha$ so that \eqref{eq:union-bound-alpha_kj} holds with probability at most $\alpha$. Indeed,
    \begin{align*}
      \sum_{k \in \NN} \sum_{j \in \ZZ} \alpha_{k,j} &= \frac{\alpha}{\zeta(s) (2\zeta(s) + 1)}\sum_{k \in \NN} \frac{1}{(k+1)^s} \sum_{j \in \ZZ} \frac{1}{(|j| \lor 1)^s} \\
      &= \frac{\alpha}{\zeta(s) (2\zeta(s) + 1)} \underbrace{\sum_{k =0} \frac{1}{(k+1)^s}}_{= \zeta(s)} \underbrace{\left ( 1 + 2 \sum_{m=1}^\infty \frac{1}{m^s} \right ) }_{= 2\zeta(s) + 1}\\
                                                     &= \alpha.
    \end{align*}
    Therefore, in summary, we have that
    \begin{equation}\label{eq:union-bound-alpha}
      \PP \left ( \exists t \in \NN, k \in \NN, j \in \ZZ : S_t(q(k,j)) \geq b_{t,k}(q(k,j)) \right ) \leq \alpha.
    \end{equation}

    \paragraph{Step 4: Removing dependence on $j \in \ZZ$ and obtaining $p$-uniformity.}
    We will obtain a bound that is uniform in $p \in (0, 1)$ by replacing $j$ with $j(k,p)$ --- a function of both $k \in \NN$ and $p \in (0, 1)$.
    For any $k \in \NN$ and any $p \in (0, 1)$, define $j(k, p)$ as
    \begin{equation}\label{eq:j_kp}
      j(k,p) := \left \lceil \frac{\eta^{k/2} \logit(p/(1-p)}{2\delta} \right \rceil.
    \end{equation}
    Of course, $j(k,p) \in \ZZ$ is not unique. It is easy to check that $p \leq q(k,j(k,p))$, a fact that we will use shortly. Abusing notation slightly, let $j_1, j_2, \dots$ denote the integers generated by $j(k,p)$ for every $k \in \NN$ and $p \in (0, 1)$, and let $\Jcal := \{j_1, j_2, \dots\} \subseteq \ZZ$ denote their image. Given this setup and applying \eqref{eq:union-bound-alpha} from Step 3, we have that
    \begin{align*}
      &\PP \left ( \exists t \in \NN, k \in \NN, p \in (0, 1) : S_t(q(k,j(k,p))) \geq b_{t, k}(q(k,j(k,p))) \right )\\
      =\ & \PP \left ( \exists t \in \NN, k\in \NN, j \in \Jcal : S_t(q(k,j)) \geq b_{t,k}(q(k,j))  \right ) \\
      \leq\ & \PP \left ( \exists t \in \NN, k\in \NN, j \in \ZZ : S_t(q(k,j)) \geq b_{t,k}(q(k,j))  \right )\\
      \leq\ & \alpha,
    \end{align*}
    where the second line follows from the definition of $\Jcal$, the third follows from the fact that $\Jcal \subseteq \ZZ$, and the last follows from \eqref{eq:union-bound-alpha}.
    In summary, we have the time- and $p$-uniform concentration inequality given by
    \begin{align}
      &\PP \left ( \exists t \in \NN, k \in \NN, p \in (0, 1) : S_t(q(k,j(k,p))) \geq b_{t, k}(q(k,j(k,p))) \right ) \leq \alpha,~~\text{or equivalently,}\\
      &\PP \left ( \forall t \in \NN, k \in \NN, p \in (0, 1),\  S_t(q(k,j(k,p))) < b_{t, k}(q(k,j(k,p))) \right ) \geq 1-\alpha \label{eq:t-and-p-uniform-Stpkjkp}
    \end{align}

    \paragraph{Step 5: Obtaining a time- and $p$-uniform upper bound on $S_t(p)$.}
    While \eqref{eq:t-and-p-uniform-Stpkjkp} is now written to be $p$-uniform, the quantity $b_{t,k}(q(k,j(k,p)))$ is only a high-probability upper bound on $S_t(q(k,j(k,p)))$, but what we need is a high-probability upper bound on $S_t(p)$. To this end, we use a similar technique to \citet{howard2022sequential} to bound the distance between $S_t(q(k,j(k,p)))$ and $S_t(p)$ for any $p \in (0, 1)$. Indeed, consider the following representation of $S_t(p)$ in terms of $S_t(q(k,j(k,p)))$:
    \begin{align*}
      S_t(p) &:= \sum_{i=1}^t w_i \1(R_i \leq Q(p)) - tp \\
             &\leq \sum_{i=1}^t w_i \1(R_i \leq Q(q(k,j(k,p)))) - tp\\
             &= S_t(q(k,j(k,p))) + t(q(k,j(k,p))-p),
    \end{align*}
    where the first line follows by definition of $S_t(p)$, the second by monotonicity of $Q \mapsto \1(R_t \leq Q)$ and the fact that $p \leq p_{k, j(k,p)}$, and the third follows from the definition of $S_t(q(k,j(k,p)))$. Combining~\eqref{eq:t-and-p-uniform-Stpkjkp} with the above representation of $S_t(p)$, we have that
    \begin{equation}
      \label{eq:upper-bound-on-S_t_p-forall-k}
      \PP \left ( \forall t \in \NN, k \in \NN, p \in (0, 1),\ S_t(p) < \underbrace{b_{t,k} (q(k,j(k,p)))}_{\mathrm{(i)}} + \underbrace{t(q(k,j(k,p)) - p)}_{\mathrm{(ii)}} \right ) \geq 1-\alpha,
    \end{equation}
    where $\mathrm{(i)} \equiv b_{t,k}(q(k,j(k,p)))$ is given by
    \begin{equation}
      b_{t,k} (q(k,j(k,p))) := \frac{W_t\psi_{G, q(k,j(k,p))}(\lambda_{k,j(k,p)}) + \log(1/\alpha_{k,j(k,p)})}{\lambda_{k,j(k,p)}}.
    \end{equation}

    \paragraph{Step 5(i): Upper-bounding $\mathrm{(i)}$ without dependence on $k$.}
        Applying \cref{lemma:btkj-upper-bound} but with $j(k, p)$ in place of $j$, we have that for every $\eta^k \leq W_t \leq \eta^{k+1}$,
    \begin{align*}
      b_{t,k}(q(k,j(k,p))) \leq \ &\sqrt{\gamma_1^2 \log(1/\alpha_{k,j(k,p)}) W_t + \gamma_2^2 q(k,j(k,p))^2 \log^2(1/\alpha_{k,j(k,p)})}\\
                                 \ &+ \gamma_2 q(k,j(k,p)) \log(1/\alpha_{k,j(k,p)}).
    \end{align*}
    Now notice that the above upper bound depends on $k$ solely through $q(k,j(k,p))$ and $\log(1/\alpha_{k,j(k,p)})$, each of which we will upper-bound independently of $k$. 
    By \cref{lemma:q_kj-upper-bound}, we have that
    \begin{equation}
      q(k,j(k,p)) \leq \bar q_t(p) \equiv \logit^{-1} \left ( \logit(p) + 2\delta \sqrt{\frac{\eta}{W_t}} \right ) ~~~\text{for all } \eta^k \leq W_t \leq \eta^{k+1},\\
    \end{equation}
    so it remains to upper-bound $\log(1/\alpha_{k,j(k,p)})$.
    Recall the definition of $\alpha_{k,j}$ for any $k \in \NN, j \in \ZZ$ given in \eqref{eq:alpha_kj}. Then we can write $\log(1/\alpha_{k,j(k,p)})$ as
    \begin{equation}\label{eq:log-1-by-alpha-kj}
      \log(1/\alpha_{k,j(k,p)}) = \underbrace{s\log(k+1)}_{(\star k)} + \underbrace{2\log(|j(k,p)| \lor 1)}_{(\star j)} + \log\zeta(s) + \log(2\zeta(s) + 1) + \log(1/\alpha),
    \end{equation}
    and we observe that $(\star k)$ and $(\star j)$ are the only terms depending on $k$. 
    Firstly, notice that $(\star j)$ can be upper bounded for every $\eta^k \leq W_t \leq \eta^{k+1}$ as
    \begin{align*}
      (\star j) \equiv 2\log(|j(k,p)| \lor 1) &= 2\log \left ( \left|\left \lceil \frac{\eta^{k/2}\logit(p)}{2\delta} \right \rceil \right| \lor 1\right ) \\
      &\leq 2\log \left ( \left |  \left \lceil \frac{\sqrt{W_t} \logit(p)}{2\delta} \right \rceil \right |  \lor 1\right ).
    \end{align*}
    Second, notice that we can easily upper-bound $(\star k)$ on epoch $\eta^k \leq W_t \leq \eta^{k+1}$ as
    \begin{align*}
      s\log(k+1) &\leq s\log \left ( \log_\eta W_t + 1\right ) .
    \end{align*}
    Therefore, we have the following upper-bound on $\log(1/\alpha_{k,j(k,p)})$:
    \begin{align*}
      \log(1/\alpha_{k,j}) \leq \ &s\log \left ( \log_\eta W_t + 1\right ) +
                                2\log \left ( \left |  \left \lceil \frac{\sqrt{W_t} \logit(p)}{2\delta} \right \rceil \right |  \lor 1\right )+
                                  \log\zeta(s) + \log(2\zeta(s) + 1) + \log(1/\alpha),\\
                                  & \equiv \ell_t(p),
    \end{align*}
    which no longer depends on $k$. In summary, we have that
    \begin{equation*}
      b_{t,k} (q(k,j(k,p))) \leq \sqrt{\gamma_1^2 \ell_t(p) W_t + \gamma_2^2 \widebar q_t(p)^2 \ell_t(p)^2} + \gamma_2 \widebar q_t(p) \ell_t(p).
    \end{equation*}

    \paragraph{Step 5(ii): Upper-bounding $\mathrm{(ii)}$ without dependence on $k$.}
    By \cref{lemma:q_kj-upper-bound}, we have that $q({k,j(k,p)}) \leq \widebar q_t(p) \equiv \logit^{-1} \left ( \logit(p) + 2\delta \sqrt{\frac{\eta}{W_t}} \right )$
    Therefore, we can upper bound $\mathrm{(ii)}$ as 
    \begin{align}
      \mathrm{(ii)} \equiv  t(q({k, j(k,p)})-p) &\leq t(\widebar q_t(p) - p) \equiv t \left [ \logit^{-1} \left ( \logit(p) + 2\delta \sqrt{\frac{\eta}{W_t}} \right ) - p \right ],
    \end{align}
    where the final inequality no longer depends on $k$.
    In sum, with probability at least $1-\alpha$, 
    \begin{align}
      \forall t \in \NN, p \in (0, 1),\ S_t(p) <\ &\sqrt{\gamma_1^2 \ell_t(p) W_t + \gamma_2^2 \widebar q_t(p)^2 \ell_t(p)^2} 
       + \gamma_2 \widebar q_t(p) \ell_t(p) + t(\widebar q_t(p) - p) .
    \end{align}

    \begin{lemma}\label{lemma:btkj-upper-bound}
      For any $k \in \NN$ and any $j \in \ZZ$, we have that for all $\eta^k \leq W_t \leq \eta^{k+1}$, 
      \begin{equation}
        b_{t,k}(q(k,j)) \leq \sqrt{\gamma_1^2 \log(1/\alpha_{k,j}) W_t + \gamma_2^2 q(k,j)^2 \log^2(1/\alpha_{k,j})} + q(k,j)\gamma_2 \log(1/\alpha_{k,j}),\label{eq:btkj-upper-bound-gammas}
      \end{equation}
      where $\gamma_1, \gamma_2$ are constants defined as
      \begin{equation}\label{eq:gammas}
        \gamma_1 := \frac{\eta^{1/4} + \eta^{-1/4}}{\sqrt{2}}~~\text{and}~~ \gamma_2 := \frac{\sqrt{\eta} + 1}{2}.
      \end{equation}
    \end{lemma}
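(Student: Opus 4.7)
The plan is to mirror the bookkeeping used in the proof of \cref{proposition:lil-eb}, but carrying the extra scale parameter $c := q(k,j)$ through the sub-gamma CGF $\psi_{G,c}(\lambda) = \lambda^2/(2(1 - c\lambda))$. Setting $a := \log(1/\alpha_{k,j})/\eta^{k+1/2}$ and $L := \log(1/\alpha_{k,j})$, the defining identity $\lambda_{k,j} = \psi^{-1}_{G,c}(a)$ yields $\psi_{G,c}(\lambda_{k,j}) = a$, so a short calculation rewrites the boundary as
\[
b_{t,k}(c) \;=\; A_c(a) \cdot \left[\sqrt{\tfrac{\eta^{k+1/2}}{W_t}} + \sqrt{\tfrac{W_t}{\eta^{k+1/2}}}\right] \cdot \sqrt{\tfrac{L W_t}{2}},
\]
where $A_c(a) := \sqrt{2a}/\psi^{-1}_{G,c}(a) = \sqrt{1 + ac^2/2} + \sqrt{ac^2/2}$. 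This mirrors the decomposition used in the proof of \cref{proposition:lil-eb}, except that $A_c$ carries an extra scale-parameter dependence through $c$.

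Next I would remove dependence on $k$ on the epoch $\eta^k \leq W_t \leq \eta^{k+1}$ via two epoch-uniform bounds. First, the bracket $u \mapsto \sqrt{\eta^{k+1/2}/u} + \sqrt{u/\eta^{k+1/2}}$ is convex in $u$ and at both endpoints of the epoch evaluates to $\eta^{1/4} + \eta^{-1/4}$, hence this serves as an upper bound for all $W_t$ in the epoch. Second, since $W_t \geq \eta^k$, we have $a \leq \sqrt{\eta}\, L / W_t$, and because $A_c$ is monotone increasing in $a$, we may substitute this upper bound inside $A_c$.

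The remaining step is purely algebraic: distribute the factor $\sqrt{L W_t/2}$ into the two summands of $A_c(\sqrt{\eta}L/W_t) = \sqrt{1 + c^2 \sqrt{\eta} L/(2W_t)} + \sqrt{c^2 \sqrt{\eta} L/(2W_t)}$, then pull the external factor $(\eta^{1/4} + \eta^{-1/4})$ inside the radical of the first summand and leave it outside for the second. The first summand yields $\sqrt{\gamma_1^2 L W_t + \tfrac{(\eta^{1/4}+\eta^{-1/4})^2 \sqrt{\eta}}{4}\, c^2 L^2}$ and the second yields $\tfrac{(\eta^{1/4}+\eta^{-1/4})\eta^{1/4}}{2}\, c L$. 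One then verifies the three identities
\[
\tfrac{(\eta^{1/4}+\eta^{-1/4})^2}{2} = \gamma_1^2, \qquad \tfrac{(\eta^{1/4}+\eta^{-1/4})^2 \sqrt{\eta}}{4} = \tfrac{(\sqrt{\eta}+1)^2}{4} = \gamma_2^2, \qquad \tfrac{(\eta^{1/4}+\eta^{-1/4})\eta^{1/4}}{2} = \gamma_2,
\]
at which point the bound collapses to exactly \eqref{eq:btkj-upper-bound-gammas}. The main obstacle is purely bookkeeping --- in particular, ensuring that the scale $c$ appears only as $c^2 L^2$ inside the radicand and as $cL$ outside relies on the two terms of the expanded $A_c$ combining cleanly with the $(\eta^{1/4} + \eta^{-1/4})$ factor. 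No non-trivial inequality beyond monotonicity of $A_c$ and epoch-convexity of the bracket term is invoked.
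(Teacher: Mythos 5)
Your proposal is correct and takes essentially the same route as the paper's proof: the same factorization $b_{t,k} = A_{c}(a)\cdot\bigl[\sqrt{\eta^{k+1/2}/W_t}+\sqrt{W_t/\eta^{k+1/2}}\bigr]\cdot\sqrt{\log(1/\alpha_{k,j})W_t/2}$, the same epoch-endpoint bound $\eta^{1/4}+\eta^{-1/4}$ on the bracket, the same use of monotonicity of $A_c$ after bounding $a$ by $\sqrt{\eta}\log(1/\alpha_{k,j})/W_t$, and the same closing algebra identifying $\gamma_1$ and $\gamma_2$. One trivial slip: the bound $a \leq \sqrt{\eta}\log(1/\alpha_{k,j})/W_t$ follows from the upper half of the epoch condition, $W_t \leq \eta^{k+1}$ (equivalently $W_t/\sqrt{\eta} \leq \eta^{k+1/2}$), not from $W_t \geq \eta^k$ as you wrote.
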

    \begin{proof}
      Recall the chosen value of $\lambda_{k,j}$ given in \eqref{eq:lambda_kj},
      \begin{equation}
    \lambda(k,j) := \psi^{-1}_{G, q(k,j)}( \log(1/\alpha_{k,j}) / \eta^{k+1/2}),~~\text{where}~~\psi_{G,c}^{-1}(a) := \frac{2}{c + \sqrt{c^2 + 2/a}}
      \end{equation}
      Similar to \citet[Eq. (39)]{howard2018uniform}, some algebra will reveal that for any $t, k \in \NN, j \in \ZZ$, we have that
      \begin{equation*}
        b_{t,k}(q(k,j)) = A_{q(k,j)}\left ( \frac{\log(1/\alpha_{k,j})}{\eta^{k+1/2}} \right ) \underbrace{\left [ \sqrt{\frac{\eta^{k+1/2}}{ W_t}} + \sqrt{\frac{W_t}{\eta^{k+1/2}}} \right ]}_{(\star)} \sqrt{\frac{\log(1/\alpha_{k,j})W_t}{2}},
      \end{equation*}
      where $A_{c}(a) := \sqrt{2a} / \psi_{G,c}^{-1}(a) = \sqrt{1 + c^2a/2} + c\sqrt{a/2}$. 
      Now, notice that the second derivative of $(\star)$ with respect to $W_t$ is positive on $W_t \in [\eta^k, \eta^{k+1}]$, and hence $(\star)$ is convex in $W_t$. As such, for every $W_t \in [\eta^k, \eta^{k+1}]$ --- i.e. the $k^\mathrm{th}$ epoch --- we have that $(\star)$ is maximized at the endpoints $W_t = \eta^k$ and $W_t = \eta^{k+1}$, and we thus have the following upper bound on $b_{t,k}(q(k,j))$ on the $k^\mathrm{th}$ epoch:
      \begin{equation}\label{eq:btkj-upper-bound-kth-epoch}
        b_{t,k}(q(k,j)) \leq A_{q(k,j)}\left ( \frac{\log(1/\alpha_{k,j})}{\eta^{k+1/2}} \right ) \left [ \eta^{1/4} + \eta^{-1/4} \right ] \sqrt{\frac{\log(1/\alpha_{k,j})W_t}{2}}.
      \end{equation}
      Furthermore, since $W_t / \sqrt{\eta} \leq \eta^{k+1/2}$ on the $k^\mathrm{th}$ epoch, we also have that
      \begin{equation}\label{eq:upper-bound-K-A-function}
        A_{q(k,j)} \left ( \frac{\log(1/\alpha_{k,j})}{\eta^{k+1/2}} \right ) \leq A_{q(k,j)} \left ( \frac{\sqrt{\eta}\log(1/\alpha_{k,j})}{W_t} \right )~~~~\text{for all } \eta^k \leq W_t \leq \eta^{k+1}.
      \end{equation}
      Putting \eqref{eq:btkj-upper-bound-kth-epoch} and \eqref{eq:upper-bound-K-A-function} together, we have that for all $\eta^k \leq W_t \leq \eta^{k+1}$, 
      \begin{align}
        b_{t,k}(q(k,j)) &\leq \frac{\eta^{1/4} + \eta^{-1/4}}{\sqrt{2}} \left ( \sqrt{\log(1/\alpha_{k,j}) W_t + \frac{\sqrt{\eta}q(k,j)^2 \log^2(1/\alpha_{k,j})}{2}} + q(k,j)\frac{\eta^{1/4}\log(1/\alpha_{k,j})}{\sqrt{2}} \right )\nonumber \\
                        &= \sqrt{\gamma_1^2 \log(1/\alpha_{k,j}) W_t + \gamma_2^2 q(k,j)^2 \log^2(1/\alpha_{k,j})} + q(k,j)\gamma_2 \log(1/\alpha_{k,j}),
      \end{align}
      where $\gamma_1, \gamma_2$ are constants defined in~\eqref{eq:gammas}.
      This completes the proof of \cref{lemma:btkj-upper-bound}.
    \end{proof}

    \begin{lemma}\label{lemma:q_kj-upper-bound}
      Define $\widebar q_t(p)$ as
      \begin{equation}
        \widebar q_t(p) := \logit^{-1} \left ( \logit (p) + 2\delta \sqrt{\frac{\eta}{W_t}} \right ).  
      \end{equation}
      For all $\eta^k \leq W_t \leq \eta^{k+1}$, we have that
      \(
	q(k, j(k,p)) \leq \widebar q_t(p).
      \)
    \end{lemma}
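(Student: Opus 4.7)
The plan is to unwind both definitions and compare their arguments, exploiting the monotonicity of $\logit^{-1}$. Writing
\[
q(k,j) \;=\; \frac{1}{1+\exp\{-2j\delta/\eta^{k/2}\}} \;=\; \logit^{-1}\!\bigl(2j\delta/\eta^{k/2}\bigr),
\]
and
\[
\widebar q_t(p) \;=\; \logit^{-1}\!\bigl(\logit(p) + 2\delta\sqrt{\eta/W_t}\bigr),
\]
it suffices to prove the real-valued inequality
\[
\frac{2\delta\,j(k,p)}{\eta^{k/2}} \;\leq\; \logit(p) + 2\delta\sqrt{\eta/W_t}
\]
whenever $\eta^{k} \leq W_t \leq \eta^{k+1}$.

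First I would bound the ceiling in the definition of $j(k,p)$. Since $\lceil x\rceil \leq x+1$ for any real $x$,
\[
j(k,p) \;=\; \left\lceil \frac{\eta^{k/2}\logit(p)}{2\delta}\right\rceil \;\leq\; \frac{\eta^{k/2}\logit(p)}{2\delta} + 1,
\]
so multiplying by $2\delta/\eta^{k/2}$ yields
\[
\frac{2\delta\,j(k,p)}{\eta^{k/2}} \;\leq\; \logit(p) + \frac{2\delta}{\eta^{k/2}}.
\]

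Second, I would convert the $k$-dependent term $2\delta/\eta^{k/2}$ into the $W_t$-dependent term $2\delta\sqrt{\eta/W_t}$ using the upper endpoint of the current epoch. Indeed, $W_t \leq \eta^{k+1}$ gives $\sqrt{W_t} \leq \sqrt{\eta}\,\eta^{k/2}$, hence
\[
\frac{1}{\eta^{k/2}} \;\leq\; \frac{\sqrt{\eta}}{\sqrt{W_t}} \;=\; \sqrt{\eta/W_t}.
\]
Combining with the previous display gives $2\delta\,j(k,p)/\eta^{k/2} \leq \logit(p) + 2\delta\sqrt{\eta/W_t}$, and applying the monotone increasing map $\logit^{-1}$ to both sides yields $q(k,j(k,p)) \leq \widebar q_t(p)$, as required.

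There is no real obstacle here; the proof is a short composition of (i) the ceiling bound $\lceil x\rceil \leq x+1$, (ii) the epoch comparison $\eta^{-k/2} \leq \sqrt{\eta/W_t}$ on $W_t \leq \eta^{k+1}$, and (iii) monotonicity of $\logit^{-1}$. The only place to be slightly careful is checking that the correct endpoint of the epoch ($W_t \leq \eta^{k+1}$, rather than $W_t \geq \eta^k$) is the one that controls $1/\eta^{k/2}$.
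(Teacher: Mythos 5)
Your proof is correct and follows essentially the same route as the paper's: bound the ceiling via $\lceil x\rceil \leq x+1$, use the upper epoch endpoint $W_t \leq \eta^{k+1}$ to get $\eta^{-k/2} \leq \sqrt{\eta/W_t}$, and conclude by monotonicity of $\logit^{-1}$ (the paper just carries out the same computation directly on the expression $(1+\exp\{-2j\delta/\eta^{k/2}\})^{-1}$ rather than rewriting it as $\logit^{-1}(2j\delta/\eta^{k/2})$ up front). No gaps.
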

    \begin{proof}
        The result follows by definition of $q(k, j(k,p))$. Indeed, we have that for all $\eta^k \leq W_t \leq \eta^{k+1}$,
        \begin{align*}
            q(k,j(k,p)) &:= \frac{1}{1+\exp \left \{ -2j(k,p)\delta / \eta^{k/2} \right \}}\\
            &= \left (1+\exp \left \{ -2\left \lceil \frac{\eta^{k/2} \logit(p) }{2\delta} \right \rceil\delta / \eta^{k/2} \right \} \right )^{-1} \\
            &\leq \left (1+\exp \left \{ -2 \left ( \frac{\eta^{k/2} \logit(p) }{2\delta} + 1\right )\delta / \eta^{k/2} \right \} \right )^{-1}\\
            &= \left (1+\exp \left \{ -(\logit(p) - 2\delta / \eta^{k/2}) \right \} \right )^{-1} \\
            &= \logit^{-1}(\logit (p) + 2\delta / \eta^{k/2})\\
            &\leq \logit^{-1}\left (\logit (p) + 2\delta \sqrt{\frac{\eta}{W_t}}\right ),
        \end{align*}
        which completes the proof.
    \end{proof}

\section{A causal view of contextual bandits via potential outcomes}\label{section:causal}

In \cref{section:introduction}, we discussed how the OPE problem can be interpreted as asking a \emph{counterfactual} question, such as ``how would the rewards have been, had we used a different policy $\pi$ than the logging policy $h$ that collected the data?''. While it is somewhat reasonable to think about the functional $\polval_t = \EE_\pi(R_t \mid X_1^{t-1})$ in a counterfactual sense, the Neyman-Rubin potential outcomes framework was designed for the rigorous study of precisely these types of causal questions \citep{splawa1990application,rubin1974estimating}. In this section, we will define a target \emph{causal} functional $\polval_t^\star$ in terms of potential outcomes, and outline the identification assumptions under which $\polval_t^\star$ is equal to $\polval_t$ (and hence, the conditions under which our \cs{}s can be interpreted as covering the causal quantity $\polvalt^\star := \frac{1}{t}\sum_{i=1}^t \polval_i^\star$). We emphasize that these identification assumptions are not required for the \cs{}s to be useful or sensible --- indeed, $\polvalt$ is still an interpretable statistical quantity that we may wish to estimate --- but they cannot otherwise be said to cover a causal functional defined in terms of potential outcomes.

Making our setup precise, we posit that for each time $t$, there is one potential outcome $R_t(a)$ for every action $a \in \Acal$. The functional we are ultimately interested in estimating is the \emph{conditional mean potential outcome reward under the policy $\pi$}, i.e.
\begin{align}
  \polval_t^\star \equiv \EE_\pi(R_t(G) \mid X_1^{t-1}) :=&\ \EE \left \{ \EE_{G \sim \pi(\cdot \mid X_t)}(R_t(G) \mid X_t, X_1^{t-1}) \mid X_1^{t-1} \right \}\\
  =&\ \int_{\Acal \times \Xcal} \EE(R_t(g) \mid G=g, X_t = x, X_1^{t-1}) \pi(g \mid x) p_{X_t}(x \mid X_1^{t-1}) \dd g \dd x.
\end{align}
In words, $\nu_t$ is the average of the potential outcomes $\{R_t(g)\}_{g \in \Acal}$ conditional on $X_1^{t-1}$ with respect to the distribution $\pi(\cdot \mid X_t)$. We use $g$ and $G$ in place of $a$ and $A$ to avoid confusion between the actual (random) action $A_t$ played according to the logging policy $h_t(\cdot \mid X_t)$ and the hypothetical (random) action $G$.
Without further assumptions, however, the counterfactual quantity $\polval_t^\star$ is not necessarily identified, meaning it cannot necessarily be written as a functional of the distribution of the observed data $\infseqt{X_t, A_t, R_t}$. This is simply due to the fact that the potential outcome $R_t(g)$ is not directly observable from $\infseqt{X_t, A_t, R_t}$. To remedy this, consider the following causal identification assumptions for every subject $t$,
\begin{enumerate}[start=1,label={(\IA{}\arabic*):},align=left]
\item \textbf{Consistency}: $A_t = a \implies R_t(a) = R_t$ for every $a \in \Acal$ with positive $\pi$-density,
\item \textbf{Sequential exchangeability}: $A_t \indep R_t(a) \mid X_1^t$, and
\item \textbf{Positivity}: $\pi \ll h_t$, meaning $h_t(A_t \mid X_t) = 0 \implies \pi(A_t \mid X_t) = 0$ almost surely.
\end{enumerate}
Notice that in the contextual bandit setup, \IA{2} and \IA{3} are known to hold \emph{by design}, while \IA{1} is more subtle (e.g.~\IA{1} may not hold even in a randomized experiment due to interference between subjects, such as in a vaccine trial). Nevertheless, with \IA{1}, \IA{2}, and \IA{3} in mind, we are ready to state the main identification result of this section.

\begin{lemma}\label{lemma:identification}
  Under causal assumptions \IA1, \IA2, and \IA3, we have that
  \begin{equation}
    \polval_t^\star = \polval_t, ~~\text{and hence}~~ \polvalt^\star := \frac{1}{t}\sum_{i=1}^t \polval_i^\star =  \frac{1}{t}\sum_{i=1}^t \polval_i =: \polvalt.
  \end{equation}
  In other words, the counterfactual conditional mean $\polval_t^\star$ can be represented as a function of the distribution of observed data $\infseqt{X_t, A_t, R_t}$, and that representation is given by $\polval_t$.
\end{lemma}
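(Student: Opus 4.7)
The plan is to start from the definition of $\polval_t^\star$ and successively apply the three identification assumptions \IA1, \IA2, \IA3 to rewrite it as a functional of the observed-data distribution, and then match the resulting expression to the integral representation of $\polval_t$ given in \eqref{eq:pseudo-identification}. Since $\polvalt^\star$ and $\polvalt$ are simple running averages of $\polval_t^\star$ and $\polval_t$, the second claim is immediate from the first.

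First, I would handle the hypothetical action $G$: because $G \sim \pi(\cdot \mid X_t)$ is drawn independently of the potential outcomes and of the past, we may drop the conditioning on $\{G = g\}$ inside the inner expectation and write
\begin{equation*}
    \polval_t^\star = \int_{\Acal \times \Xcal} \EE \left ( R_t(g) \mid X_t = x, X_1^{t-1} \right ) \pi(g \mid x) p_{X_t}(x \mid X_1^{t-1}) \dd g \dd x.
\end{equation*}
Next, invoking sequential exchangeability \IA2 ($A_t \indep R_t(g) \mid X_1^t$), I would introduce conditioning on the actually-played action $A_t = g$ without changing the conditional expectation, yielding
\begin{equation*}
    \EE \left ( R_t(g) \mid X_t = x, X_1^{t-1} \right ) = \EE \left ( R_t(g) \mid A_t = g, X_t = x, X_1^{t-1} \right ),
\end{equation*}
where this conditioning event has positive probability for every $g$ in the support of $\pi(\cdot \mid x)$ by positivity \IA3. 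Then consistency \IA1 replaces $R_t(g)$ with the observed reward $R_t$ on the event $\{A_t = g\}$, giving
\begin{equation*}
    \EE \left ( R_t(g) \mid A_t = g, X_t = x, X_1^{t-1} \right ) = \EE \left ( R_t \mid A_t = g, X_t = x, X_1^{t-1} \right ).
\end{equation*}

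Finally, writing the right-hand side as an integral against the conditional density $p_{R_t}(r \mid g, x, X_1^{t-1})$ and substituting back into $\polval_t^\star$ yields exactly the integral \eqref{eq:pseudo-identification} defining $\polval_t$, since the conditional independence assumption \eqref{eq:conditional-independence} allows us to condition on $X_1^{t-1}$ in place of the full history $\history_{t-1}$ (which is not needed here since we never conditioned on $\history_{t-1}$ in the first place). The main subtlety, and the only place where care is needed, lies in Step 2, where one must justify that conditioning on $\{A_t = g\}$ is well-defined for $\pi$-a.e.\ $g$ --- this is exactly the role of positivity \IA3, which ensures that $h_t(g \mid x) > 0$ whenever $\pi(g \mid x) > 0$, and in turn that the conditional expectation exists. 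Once this is handled, the remaining steps are essentially rewriting of conditional expectations and integrals.
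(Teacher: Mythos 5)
Your proposal is correct and follows essentially the same route as the paper's proof: both apply sequential exchangeability \IA{2} to introduce conditioning on $\{A_t = g\}$, use the definitional independence of the hypothetical action $G$ from the potential outcomes to remove the conditioning on $\{G=g\}$, apply consistency \IA{1} to replace $R_t(g)$ by $R_t$, and invoke positivity \IA{3} to ensure the conditioning/integral is well-defined. The only difference is the (immaterial) order in which the $G=g$ conditioning is dropped relative to introducing $A_t = g$, so no further changes are needed.
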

\begin{proof}
  The proof is an exercise in causal identification and essentially follows that of \citet[Theorem 1]{kennedy2019nonparametric} and Robins' $g$-formula \citep{robins1986new}, but we nevertheless provide a derivation here for completeness.

  Writing out the definition of $\polval_t^\star$, we have
  \begin{align}
    \polval_t^\star :=&\ \int_{\Acal \times \Xcal} \EE(R_t(g) \mid G=g, X_t = x, X_1^{t-1}) \pi(g \mid x_t) p_{X_t}(x \mid X_1^{t-1}) \dd g \dd x  \\ 
    =&\ \int_{\Acal \times \Xcal} \EE(R_t(g) \mid G=g, A_t = g, X_t = x, X_1^{t-1}) \pi(g \mid x) p_{X_t}(x \mid X_1^{t-1}) \dd g \dd x\label{eq:exchangeability} \\ 
    =&\ \int_{\Acal \times \Xcal} \EE(R_t(g) \mid A_t = g, X_t = x, X_1^{t-1}) \pi(g \mid x) p_{X_t}(x \mid X_1^{t-1}) \dd g \dd x\label{eq:PO-exchangeability}\\ 
             =&\ \int_{\Acal \times \Xcal} \EE(R_t \mid A_t = g, X_t = x, X_1^{t-1}) \pi(g \mid x) p_{X_t}(x \mid X_1^{t-1}) \dd g \dd x \label{eq:consistency}\\ 
    =&\ \EE \left \{ \EE_{A_t \sim \pi(\cdot \mid X_t)} (R_t \mid X_t, X_1^{t-1}) \mid X_1^{t-1} \right \}\\
    \equiv&\ \EE_\pi(R_t \mid X_1^{t-1}) =: \polval_t,
  \end{align}
  where \eqref{eq:exchangeability} follows from \IA{2} (sequential exchangeability), \eqref{eq:PO-exchangeability} follows from the fact that $R_t (G) \indep G \mid X_1^t$ (by definition), and \eqref{eq:consistency} follows from \IA{1} (consistency). Throughout, we implicitly used \IA{3} (positivity) so that the outer integral is well-defined. That is, we conditioned on $A_t = g$ at several points, which implicitly leaves us with a factor of $\pi(g \mid x) / h_t(g \mid x)$ --- positivity ensures that this quantity is well-defined with probability one. This completes the proof of \cref{lemma:identification}.

\end{proof}

\begin{remark}[On the relationship between OPE and stochastic intervention effect estimation]\label{remark:equivalence-OPE-stochastic-interventions}
  It is no surprise that the proof of \cref{lemma:identification} follows \citet{robins1986new} and \citet{kennedy2019nonparametric} who study \emph{stochastic intervention effects} in causal inference. Indeed, OPE and estimation of stochastic interventions are two different framings of essentially the same problem, and use the same importance-weighted and doubly robust estimators. The main differences between these fields lie in their emphases: the former is focused on adaptive experiments where logging policies are data-adaptive and \emph{known}, whereas the latter typically places more emphasis on potential outcomes and causal identification, observational studies (i.e. where logging policies must be \emph{estimated}), and more complex causal functionals, such as those of \citet{haneuse2013estimation} and \citet{kennedy2019nonparametric}. Of course, these are incomplete characterizations made with broad strokes; for a more detailed summary of prior work in stochastic interventions, see \citet[Section 1]{kennedy2019nonparametric}. 
\end{remark}

\begin{remark}[Implications for design-based confidence sequences]
  As an alternative to estimating treatment effects in superpopulations, one can opt to consider a so-called ``design-based'' approach to causal inference where the potential outcomes of all individuals are conditioned on, and confidence intervals are constructed for the \emph{sample} average treatment effect (SATE) given by $\mathrm{SATE}_t := \frac{1}{t}\sum_{i=1}^t (R_i(1) - R_i(0))$ where $R_i(a)$ is subject $i$'s potential outcome under treatment $a \in \{0, 1\}$. Here, the resulting confidence intervals cover the SATE with high probability, where the probability is taken with respect to the randomness in the treatment assignment mechanism only. The design-based approach goes back to Fisher and has a deep and extensive literature \citep{fisher1936design,splawa1990application,imbens2015causal}, and more recent work has constructed nonasymptotic \cs{}s for the time-varying effect $\infseqt{\mathrm{SATE}_t}$ in \citet[Section 4.2]{howard2018uniform} and asymptotic ones in \citet{ham2022design}. For a more comprehensive literature review, we direct readers to \citet{abadie2020sampling} and \citet{ham2022design} as well as the references therein.

  We simply remark here that the results of~\cref{section:time-varying} simultaneously apply to the design-based \emph{and} superpopulation settings as immediate corollaries. Indeed, in the stochastic (non-design-based) setting for binary experiments and under the causal identification assumptions \IA{1}--\IA{3}, we have that \cref{lemma:identification} yields
  \begin{equation}
    \poldifft^\star := \frac{1}{t}\sum_{i=1}^t \EE \left [ R_i(1) - R_i(0) \right ]  = \frac{1}{t}\sum_{i=1}^t \left [ \EE(R_i \mid A_i = 1) - \EE(R_i \mid A_i = 0)  \right ] =: \poldifft.
  \end{equation}

  Now, to recover \cs{}s for $\infseqt{\mathrm{SATE}_t}$ we simply condition on $\infseqt{R_t(1), R_t(0), X_t}$ so that $\infseqt{A_t}$ are the only non-degenerate random variables here. The techniques for time-varying treatment effects described in \cref{section:policy-value-differences} and \cref{section:ate-randomized-expts}, yield a $(1-\alpha)$-\cs{} $[ L_t , U_t]_{t=1}^\infty$ for $\infseqt{\poldifft} \equiv \infseqt{\poldifft^\star}$ and hence for $\infseqt{\mathrm{SATE}_t}$. Going further, when instantiated for the design-based setting, our \cs{}s substantially improve on \citet[Section 4.2]{howard2018uniform}, both practically and theoretically. Indeed, as discussed in~\citet[Section 3.2]{ham2022design}, one of the drawbacks of existing nonasymptotic \cs{}s in the literature is that the minimal propensity score --- i.e. $p_\mathrm{min} := \essinf_{t,a,x} \PP(A_t = a \mid x \mid \history_{t-1})$ --- must be specified in advance, and the downstream \cs{}s always scale with $p_\mathrm{min}^{-1}$. However, as we have emphasized throughout this paper, beginning with desideratum 5 in \cref{section:desiderata}, \emph{none} of our \cs{}s suffer from this limitation.

  Simultaneously, if we consider the superpopulation setting where $\EE(R_t(1)) - \EE(R_t(0)) = \delta$ for all $t \geq 1$ and for some $\delta \in [-1, 1]$, then under identification assumptions \IA{1}--\IA{3}, the same \cs{} $[L_t, U_t]_{t=1}^\infty$ also covers $\delta$ by \cref{lemma:identification}. In this way, our time-varying \cs{}s simultaneously handle the stationary superpopulation setting where treatment effects do not change over time, as well as the design-based setting where all potential outcomes are conditioned on, since these are both special cases of the time-varying stochastic setting considered in \cref{section:time-varying}.
\end{remark}


\end{document}